\documentclass[journal]{IEEEtran}
\usepackage[utf8]{inputenc} 
\usepackage[T1]{fontenc}
\usepackage{url}
\usepackage{ifthen}
\usepackage{enumitem}

\usepackage{tikz}
\usetikzlibrary{arrows.meta, positioning}
\usepackage[utf8]{inputenc} 
\usepackage[T1]{fontenc}
\usepackage{url}
\usepackage{cite, bbm}
\usepackage[cmex10]{amsmath}
\usepackage{amssymb,amsthm,bbm,mathtools}
\usepackage{mleftright}
\mleftright
\usepackage[dvipsnames]{xcolor}
\usepackage[normalem]{ulem}
\usepackage{graphicx}
\usepackage{booktabs, hyperref}
\usepackage{algorithm}
\usepackage{algorithmicx}
\usepackage{algpseudocode}
\def\argmin{\mathop{\rm arg\, min}}
\usepackage{authblk}

\hypersetup{
  colorlinks=true,
  linkcolor=blue!60!black,
  citecolor=red!60!black,
  urlcolor=blue!60!black
}
\newtheorem{theorem}{Theorem}

\newtheorem{proposition}{Proposition}
\newtheorem{corollary}{Corollary}
\newtheorem{lemma}{Lemma}

\newtheorem{example}{Example}
\newtheorem{remark}{Remark}

\usepackage{hyperref}
\hypersetup{colorlinks,
   colorlinks,
    citecolor=blue,
    filecolor=green,
  linkcolor=blue,
    linktoc=page,
   urlcolor=blue
}
\usepackage[dvipsnames]{xcolor}
\usepackage{graphicx}
\usepackage{booktabs}

\def\bE{{\mathbb E}}

\def\supp{\mathop{\rm supp}}

\def\argmin{\mathop{\rm arg\, min}}

\def\calX{{\mathcal X}}
\def\calY{{\mathcal Y}}

\newcommand{\TV}{\mathrm{TV}}

\newcommand{\Ber}{\mathsf{Ber}}
\newcommand{\Unif}{\mathsf{Unif}}

\usepackage[font=small,labelsep=space]{caption}
\DeclareCaptionLabelSeparator{dot}{.~}
\definecolor{DukeBlue}{HTML}{001A57}
\definecolor{DarkRed}{rgb}{0.75, 0.0, 0.0}
\definecolor{DarkGreen}{rgb}{0.0, 0.5, 0.0}

\allowdisplaybreaks
\PassOptionsToPackage{numbers, compress}{natbib}
\usepackage{pgfplots}
\usepackage{tikz}
\usetikzlibrary{calc}
\usepgfplotslibrary{groupplots}
\pgfplotsset{compat=1.18}

\usepackage{titlesec}

\titleformat{\paragraph}[runin]
  {\normalfont\normalsize\bfseries} % Format
  {}                               % Label (Empty = No Number)
  {0pt}                            % Space between label and title
  {}                               % Code before

\titlespacing*{\paragraph}
  {0pt}                            % Indent
  {1.5ex plus 0.5ex minus .2ex}    % Space above
  {1em}                            % Space after (Horizontal)

\title{Breaking the Finite-Sample Barrier in Entropy Coupling}

\usepackage{lipsum}

\makeatletter
\def\blfootnote{\gdef\@thefnmark{}\@footnotetext}
\makeatother

\author{%
Shahab Asoodeh${}^\dagger$ and
Jun Chen${}^\ddagger$\\

\thanks{${}^\dagger$ S. Asoodeh is with the Department of Computing and Software, McMaster University, Hamilton,
ON L8S 4K1, Canada (email: {asoodehs@mcmaster.ca}).}
\thanks{${}^\ddagger$ J. Chen is with the Department of Electrical and Computer Engineering, McMaster University, Hamilton,
ON L8S 4K1, Canada (email: {chenjun@mcmaster.ca}).}

}

\date{}

\begin{document}

\onecolumn

\maketitle

\begin{abstract}
Dependence among marginally constrained observations can break a finite-sample barrier. To formalize this phenomenon, we introduce the \emph{minimum list entropy coupling} \(H(P\|Q_1,\dots,Q_m)\), the minimum conditional entropy \(H(X|Y_1,\dots,Y_m)\) over all joint distributions with prescribed discrete marginals \(X\sim P\) and \(Y_i\sim Q_i\). Unlike classical formulations based on independent observations, our model allows \(Y_1,\dots,Y_m\) to be arbitrarily dependent while keeping each marginal fixed. This enlarged coupling space reveals a sharp dichotomy: independent observations reduce residual uncertainty exponentially, whereas dependent observations can eliminate it exactly after finitely many samples. We characterize this zero-entropy regime through necessary and sufficient conditions and give concrete structural criteria under which it occurs. In particular, under mild support assumptions, zero entropy is achieved with \(O(\log(1/P_{\min}))\) observations, where \(P_{\min}\) is the minimum
nonzero mass of \(P\). We also develop a greedy algorithm with monotone
approximation guarantees for computing \(H(P\|Q_1,\dots,Q_m)\). Finally, we show that the same framework formalizes finite-sample limits in distribution-matching representation learning and randomness extraction, where zero entropy corresponds to exact recovery and exact extraction.
\end{abstract}

\section{Introduction}\label{sec:intro}

Marginal constraints often appear to impose an intrinsic limit on recovery: even with many observations, one expects uncertainty about the source to decay gradually rather than vanish exactly. This paper shows that this intuition is incomplete. When observations are allowed to be dependently coupled while preserving their prescribed marginals, uncertainty can collapse to zero after finitely many coordinates. 

We study this phenomenon through statistical coupling, whose central goal is to construct joint distributions with prescribed marginals that optimize an information-theoretic objective. Classical formulations, such as maximal coupling \cite{lindvall2002} and minimum entropy coupling (MEC) \cite{vidyasagar2012,cicalese2017isit,Cicalese_TIT}, focus on pairwise couplings and characterize fundamental limits of agreement and uncertainty reduction, respectively.
MEC, in particular, minimizes $H(X,Y)$ or equivalently $H(X| Y)$, and has been extensively studied in information theory \cite{EntropyCoupling_two_Sokota,EntropyCoupling_two_shkel,ebrahimi2024minimum,painsky2013memoryless,yu2018asymptotic,rossi2019greedy,kovacevic2015entropy}. It also underlies a broad range of applications, including causal inference \cite{kocaoglu2017entropic,kocaoglu2017isit,Javidian:21}, generative modeling \cite{bounoua2025learning}, concept erasure \cite{chowdhury2025fundamental}, steganography \cite{sokota2023perfectly}, dimensionality reduction \cite{cicalese2017minimum}, communication \cite{Sokota_Communication}, and random number generation \cite{li2020efficient}.

In these classical settings, exact recovery is highly restrictive: even when approximate reconstruction is possible, eliminating all uncertainty typically requires strong structural conditions. We show that this limitation disappears in the multi-observation setting. Allowing multiple observations with fixed marginals fundamentally enlarges the space of admissible couplings and enables \emph{finite-sample exact recovery}, where uncertainty about $X$ vanishes after finitely many observations. This reveals a phase transition with no analogue in classical coupling theory.

Formally, given distributions $P$ and $\{Q_i\}_{i=1}^m$ on a finite alphabet $\calX$, we define
$$
H(P\|Q_1,\ldots,Q_m)
\coloneqq \inf H(X|Y_1,\ldots,Y_m),
$$
where the infimum is over all joint distributions with $X\sim P$ and $Y_i\sim Q_i$, allowing arbitrary dependence across $(Y_1,\dots,Y_m)$. In the homogeneous case $Q_i=Q$, we write $H_m(P\|Q)$.

We say that \emph{finite-sample exact recovery} holds if there exists a finite integer $m$ such that $H(P\|Q_1,\ldots,Q_m)=0$, i.e., $X$ is a deterministic function of $Y^m\coloneqq (Y_1,\dots,Y_m)$ under an admissible coupling. The smallest such $m$ is the corresponding exact-recovery sample complexity and denoted by $m_0(P\|Q_1,\dots,Q_m)$.

A useful baseline is the \textit{product-law restriction} on the observation tuple: \(Y^m\sim Q_1\otimes\cdots\otimes Q_m\). This factorizes only the
marginal law of \(Y^m\), while leaving the conditional law
\(P_{Y^m| X}\) unrestricted. Under this restriction, we show that a
quantile construction yields exponential decay of the residual uncertainty
\(H(P\|Q_1,\ldots,Q_m)\) in \(m\).

Our main finding is that dependence across $(Y_1,\dots,Y_m)$ induces a phase transition: the residual uncertainty collapses to zero after finitely many observations. That is, rather than merely decaying, the conditional entropy can become exactly zero at a finite $m$. We demonstrate this phenomenon in canonical examples by explicitly computing $H_m(P\|Q)$ (including the uniform and Bernoulli cases) and showing that finite-sample exact recovery occurs. Inspired by these examples, we identify sharp structural conditions under which finite-sample exact recovery occurs. 

We remark that recent work extends MEC to multiple variables by minimizing $H(X,Y^m)$ subject to $X\sim P$ and $Y_i\sim Q_i$ \cite{Cicalese_TIT,li2020efficient,EntropyCoupling_multi_compton2,EntropyCoupling_multi_compton}. In contrast, we minimize $H(X|Y^m)$. While the two formulations coincide for $m=1$, they diverge fundamentally for $m\ge2$. Joint entropy minimization favors globally compact couplings, whereas our objective isolates the residual uncertainty about $X$.

\paragraph{Summary of Contributions.}
Our main contributions are as follows:
\begin{itemize}
    \item We compute $H_m(P\|Q)$ explicitly in the uniform and Bernoulli cases and demonstrate finite-sample exact recovery. In particular, for any $P$ on $[r]$, $H_m(P\|\Unif([r]))=0$ for all $m\ge2$, while $H_1(P\|\Unif([r]))=0$ holds if and only if $P$ is a deterministic pushforward of $\Unif([r])$ (Proposition~\ref{prop:uniform}). In the Bernoulli case (for $p,q\le \tfrac12$), $H_m(\Ber(p)\|\Ber(q))=0$ whenever $p\le mq$, whereas $H_1(\Ber(p)\|\Ber(q))=0$ only in degenerate cases (Proposition~\ref{Prop:Binary_m}). These results exhibit a sharp separation between the $m=1$ and $m\ge2$ regimes.

    \item We prove that if $\|Q_i\|_\infty \le M < 1$ uniformly, then $H(P\|Q_1,\dots,Q_m)=O(e^{-cm})$ for some $c>0$ depending only on $M$ (Theorem~\ref{thm:heterogeneous-exp}). The proof is based on a quantile coupling over product distributions and does not exploit dependence across $(Y_1,\dots,Y_m)$.

    \item We establish that finite-sample exact recovery holds for a broad class of distributions. In the homogeneous case, if $Q^{*\ell}$, the $\ell$-fold convolution of $Q$, is fully supported for some $\ell$, then $H_m(P\|Q)=0$ for all sufficiently large $m$, with sample complexity $m_0(P\|Q)=O(\log(1/P_{\min}))$, where $P_{\min}\coloneqq \min_{x}P(x)$ (Theorem~\ref{thm:equalZero} and Corollary~\ref{cor:equalZero-arbitrary-support}). We extend this result to the heterogeneous setting (Theorem~\ref{thm:equalZero-hetero}) and derive necessary conditions showing that exact recovery requires entropy and support-size compatibility (Theorem~\ref{thm:converse-lb}).  In the appendix, we then provide a characterization of finite-sample exact recovery (Theorem~\ref{thm:hetero-unified}), which subsumes both Theorems~\ref{thm:equalZero} and \ref{thm:equalZero-hetero} but under significantly weaker assumptions.

    \item We propose a greedy LP-based algorithm to approximate $H_m(P\|Q)$. The method produces a sequence of upper bounds converging to a limit no smaller than $H_m(P\|Q)$. While the worst-case complexity is $O(|\calX|^m)$, it performs accurately in small-scale regimes.
\end{itemize}

Our results have direct implications in settings where marginal constraints are intrinsic. We highlight two applications where allowing dependence yields guarantees that are impossible under classical constructions.
\paragraph{Application 1: Distribution-matching representation learning.}
Consider a representation \(Y^m=(Y_1,\ldots,Y_m)\) of a source \(X\sim P\),
produced by a stochastic encoder \(W=P_{Y^m\mid X}\), under the marginal
constraints \(Y_i\sim Q_i\). A decoder \(f\) then attempts to reconstruct
\(X\) from \(Y^m\). This
captures a common requirement in representation learning: the learned
coordinates must match prescribed target marginals while retaining as much
information about the source as possible. The central question is therefore:
\emph{how much information about \(X\) can be preserved under fixed marginal
constraints?}

The zero-error regime is exactly captured by our objective:
\[
H(P\|Q_1,\ldots,Q_m)=0
\quad\Longleftrightarrow\quad
\exists\, W,f \text{ such that } \Pr[X\neq f(Y^m)]=0,
\]
with \(X\sim P\) and \(Y_i\sim Q_i\). Thus finite-sample exact recovery is the
lossless representation regime under marginal constraints. The following
informal corollary shows that, under mild support assumptions, such lossless
representations appear after only logarithmically many coordinates.

\begin{corollary}[Informal]
Let \(P,Q\in\Delta(\calX)\) have full support. Then there exists
\(m_0=O(\log(1/P_{\min}))\) such that, for every \(m\ge m_0\), there exists an encoder--decoder pair  \((W^*,f^*)\)  with \(Y_i\sim Q\) for all \(i\) and
\[\Pr[X\neq f^*(Y^m)] = 0.\]
\end{corollary}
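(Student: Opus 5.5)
The proof reduces to Theorem~\ref{thm:equalZero} together with Corollary~\ref{cor:equalZero-arbitrary-support}, followed by the equivalence between zero conditional entropy and zero-error decoding stated just above.

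\emph{Step 1: the convolution hypothesis holds.} Identify $\calX$ with $\mathbb{Z}_r$, $r=|\calX|$, so that convolution is defined. Since $Q$ has full support, $Q^{*1}=Q$ is already fully supported. Hence Theorem~\ref{thm:equalZero} applies with $\ell=1$, and Corollary~\ref{cor:equalZero-arbitrary-support} gives some $m_0=O(\log(1/P_{\min}))$ with $H_m(P\|Q)=0$ for every $m\ge m_0$. The constant hidden in the $O(\cdot)$ may depend on $Q$, e.g.\ through $\min_x Q(x)$ and $|\calX|$, but not on $P$ beyond $P_{\min}$.

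\emph{Step 2: the infimum is attained.} The set of couplings with marginals $X\sim P$ and $Y_i\sim Q$ is a compact polytope in $\Delta(\calX^{m+1})$. The map $P_{XY^m}\mapsto H(X|Y^m)=H(X,Y^m)-H(Y^m)$ is continuous on this polytope. So the value $0$ is achieved by some coupling $P^*_{XY^m}$.

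\emph{Step 3: build the encoder--decoder pair.} Under $P^*$, $H(X|Y^m)=0$ means that for every $y^m$ with $P^*(y^m)>0$ the conditional law $P^*_{X|Y^m=y^m}$ is a point mass at some $f^*(y^m)$. Extend $f^*$ arbitrarily off the support. Take $W^*=P^*_{Y^m|X}$, which is well defined because $P$ has full support. By construction $Y_i\sim Q$ and $\Pr[X\ne f^*(Y^m)]=0$.

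The only real content is the sample-complexity bound inherited from Theorem~\ref{thm:equalZero}. If I had to reprove it, I would use dependence in a specific way. First spend $O(\log(1/P_{\min}))$ coordinates to build a dyadic-type refinement: products of atoms of $Q$ yield fine cells whose masses can be grouped to match each $P(x)$. The remaining coordinates then absorb the leftover mass mismatch exactly while keeping each marginal equal to $Q$. The obstacle is exactness of the masses rather than approximation, and this is exactly where full support of $Q$ (via its convolutions) is used.
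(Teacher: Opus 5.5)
Your proposal is correct and follows the paper's route: the formal version (Corollary~\ref{thm:nis}, part 2) is proved by directly applying Theorem~\ref{thm:equalZero}, here with $\ell=1$. Since $P$ is fully supported you do not need Corollary~\ref{cor:equalZero-arbitrary-support}, and your compactness step is valid but superfluous, because the proof of Theorem~\ref{thm:equalZero} already builds an explicit coupling with $X=f(Y^{2n})$. Your closing dyadic-refinement sketch is likewise unnecessary; the paper's mechanism is different, using maximal couplings of shifted, nearly uniform block sums and inversion of a perturbed identity matrix.
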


A formal statement of this corollary with explicit constants and weaker support assumptions, and extensions to heterogeneous marginals are given in Appendix~\ref{app:RepresentationLearning}. The takeaway is that
distribution-matching constraints do not inherently limit representation
capacity: dependence across coordinates can create a finite-dimensional
transition to lossless representations.

\paragraph{Application 2: Randomness extraction.}
Consider $P=\Unif([k])$. The condition $H_m(P\|Q)=0$ implies that a uniform $k$-ary symbol can be realized as a deterministic function of $Y^m$ with marginals $Y_i\sim Q$. Thus, our framework captures exact randomness extraction under marginal constraints.

In the classical setting, one observes i.i.d.\ samples $Z^m\sim Q^{\otimes m}$ and seeks a deterministic function $f$ such that $f(Z^m)\sim \Unif([k])$. At finite blocklength, this is often impossible due to arithmetic constraints (see Appendix~\ref{app:randomness} for details). For example, no deterministic function of finitely many i.i.d.\ samples from $Q=(1/3,2/3)$ can produce an exactly uniform bit (i.e., a sample from $P=\Unif([2])$), see Appendix~\ref{app:randomness} for details.  Classical schemes such as those of \cite{vonNeumann1951RandomDigits}, \cite{Elias1972Unbiased}, and \cite{Peres1992Iterating} circumvent this limitation using variable-length procedures over \textit{infinite} samples. 
Allowing dependence across samples fundamentally changes this picture. For the same $Q$, there exists a coupling $(Y_1,Y_2)$ with marginals $Q$ such that $\Pr(Y_1\neq Y_2)=1/2$, yielding an exactly uniform bit via $X=\mathbf{1}\{Y_1\neq Y_2\}$. Thus, exact extraction becomes possible in finitely many samples.

More generally, our results show that exact extraction is broadly achievable.

\begin{corollary}\label{cor:extraction}
Let $Q\in\Delta(\calX)$ and assume there exists $\ell\ge1$ such that $Q^{*\ell}$ is fully supported. Then for $P=\Unif([k])$, there exists $m=O(\log k)$ such that $H_m(P\|Q)=0$.
\end{corollary}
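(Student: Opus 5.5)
This corollary is essentially a specialization of Theorem~\ref{thm:equalZero} together with Corollary~\ref{cor:equalZero-arbitrary-support} to the uniform source $P=\Unif([k])$. The plan is to invoke that result directly and then track the dependence of the sample complexity on $P_{\min}$. Those results state that if $Q^{*\ell}$ is fully supported for some $\ell\ge 1$, then $H_m(P\|Q)=0$ for all $m\ge m_0(P\|Q)$, with $m_0(P\|Q)=O(\log(1/P_{\min}))$. The implied constant depends only on $Q$ (through $\ell$, the support of $Q$, and $Q_{\min}$), not on $P$.

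The first step is a bookkeeping issue: the alphabets must be made compatible. The statement has $P$ on $[k]$ and $Q\in\Delta(\calX)$, while the coupling problem is formulated with both distributions on a common finite alphabet. I would regard both distributions as living on a common alphabet containing $[k]$ and $\calX$. Alternatively, I would note that the results in the paper only use the support structure of $Q$ and the masses of $P$, so the labels of $X$ are irrelevant. Indeed, $H(X|Y^m)=0$ depends only on the partition of the $Y^m$-space induced by $X$, and any bijective relabeling of $X$ preserves it. With that in place, the hypothesis that $Q^{*\ell}$ is fully supported is exactly the assumption of Corollary~\ref{cor:equalZero-arbitrary-support}, or of Theorem~\ref{thm:equalZero} when $\ell=1$.

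The second step is the specialization itself. For $P=\Unif([k])$ we have $P_{\min}=1/k$, so $\log(1/P_{\min})=\log k$. Hence there is $m=m_0(P\|Q)=O(\log k)$ with $H_m(P\|Q)=0$, which is the claim.

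The main obstacle is to make sure the constant hidden in $O(\cdot)$ in the cited theorem is genuinely independent of $P$, and in particular of $k$ and of $|\calX|$ relative to $k$. The case $k>|\calX|$ needs particular care, since then $P$ lives on more points than $Q$. In that case I would check that the cited construction does not require $|\mathrm{supp}(P)|\le|\calX|$. The support-size necessary condition of Theorem~\ref{thm:converse-lb} only requires $k\le|\mathrm{supp}(Q)|^m$, and this holds once $m\gtrsim\log k/\log|\mathrm{supp}(Q)|$, consistent with the $O(\log k)$ bound.

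If the cited bound were stated instead as an explicit expression such as $m_0\le \ell\,\lceil c_Q\log(1/P_{\min})\rceil + C_Q$, I would substitute $P_{\min}=1/k$ directly to obtain the same conclusion.
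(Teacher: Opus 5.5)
Your route is the paper's: it only remarks that this corollary is ``straightforward from Theorem~\ref{thm:equalZero}.'' Your argument is complete when $k\le|\calX|=r$. In that case embed $[k]\subseteq[r]$ and apply Corollary~\ref{cor:equalZero-arbitrary-support} with $p_{\min}=1/k$ and $|S|=k$. Using $r-k+1\le r$, this gives $m_0\le 2\ell\lceil\log(16rk)/(-\log(1-r\widetilde q_{\min}))\rceil=O(\log k)$. The constant depends on $Q$ through $r$, $\ell$ and $\widetilde q_{\min}=\min_x Q^{*\ell}(x)$, not through $\min_x Q(x)$.

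\textbf{The gap is the case $k>|\calX|$.} For fixed $Q$ this is the only regime in which ``$O(\log k)$'' has content, and neither of your bookkeeping devices handles it.
\begin{itemize}
\item The construction behind Theorem~\ref{thm:equalZero} outputs $X=\sum_{i>n}Y_i-\sum_{i\le n}Y_i \bmod r$ and solves $M_n w=P$ with an $r\times r$ matrix. It can therefore only realize laws on at most $r$ points. The ``check'' you propose would fail, and relabeling $X$ cannot increase the number of values it takes.
\item Putting both laws on a common alphabet of size at least $k$ changes the group. The hypothesis then becomes a statement about convolution mod $k$, which is a different condition. When it does hold, $\ell$ and $\widetilde q_{\min}$ depend on $k$. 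For example, $Q=\Ber(1/2)$ on $\{0,1\}\subset\mathbb{Z}_k$ has $Q^{*\ell}$ fully supported only when $\ell\ge k-1$, so the bound of Theorem~\ref{thm:equalZero} becomes at least $2(k-1)$.
\item The support-size condition from Theorem~\ref{thm:converse-lb} is merely necessary, so it cannot close the case.
\end{itemize}

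\textbf{The missing idea is to chain the construction digit by digit.}
\begin{itemize}
\item By Corollary~\ref{cor:equalZero-arbitrary-support}, together with padding by independent $Q$-coordinates, there is an $M$ depending only on $Q$ with the following property. Every law on $[r]$ whose nonzero masses are at least $1/(2r)$ is an exact deterministic function of exactly $M$ coordinates with marginals $Q$. Indeed $16(r-|S|+1)/p_{\min}\le 32r^2$.
\item Build a tree on $[k]$. A node of size $a>r$ splits into $r$ near-equal intervals, with conditional masses at least $\lfloor a/r\rfloor/a\ge 1/(2r)$. A node of size $a\le r$ splits into singletons. The depth is $D=O(\log_r k)$, and the leaf reached is $\Unif([k])$.
\item Generate the path level by level, using a fresh group of $M$ coordinates per level. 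Draw that group from the zero-entropy coupling for the conditional child law given the current node.
\item Each coordinate's law is a mixture of copies of $Q$, hence equal to $Q$.
\item Decoding is sequential: the node at each level is a function of the earlier groups, so the correct decoder for the next group is known.
\end{itemize}
This gives $H_m(\Unif([k])\|Q)=0$ with $m=DM=O(\log k)$ for all $k$.
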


Thus, a uniform $k$-ary symbol can be generated exactly from finitely many observations with prescribed marginals, with logarithmic sample complexity. This establishes a strict separation: exact extraction may be impossible under i.i.d.\ sampling at finite blocklength, yet becomes achievable once dependence across samples is allowed.

\paragraph{Notation.}
Random variables are denoted by uppercase letters (e.g., $X$) and their realizations by lowercase (e.g., $x$). Let $\calX$ be a finite alphabet and $\Delta(\calX)$ the set of probability distributions on $\calX$. For $P\in\Delta(\calX)$, we write $H(P)=H(X)$ for the Shannon entropy, where
$H(X) = -\sum_{x\in \calX} P(x)\log P(x).$
For random variables $(X,Y)$ with joint distribution $P_{X,Y}$, the conditional entropy is
$H(X|Y) = -\sum_{x,y} P_{X,Y}(x,y)\log P_{X|Y}(x|y),$
where $P_{X|Y}$ denotes the conditional distribution. The binary entropy function is denoted by $h_{\mathsf b}(p)=H(\Ber(p))$.
For $Q\in\Delta(\calX)$, let $\|Q\|_\infty=\max_{x\in\calX}Q(x)$. We write $[r]=\{1,\dots,r\}$ for $r\ge1$, and $Y^m=(Y_1,\dots,Y_m)$. If $Y\sim Q$, then $Y^m\sim Q^{\otimes m}$ denotes $m$ i.i.d.\ samples from $Q$.
For $\ell\ge1$, the $\ell$-fold convolution $Q^{*\ell}$ is the distribution of $\sum_{i=1}^\ell Z_i$, where $Z_i\sim Q$ independently. For a deterministic function $g$, we write $g_{\#}Q$ for the pushforward of $Q$ under $g$. We also let $\Unif([k])$ denote the uniform distribution over $[k]$.

\section{Motivating Examples: Dependence Enables Exact Recovery}\label{sec:examples}

We begin with two examples that expose the central phenomenon of the paper.
With a single observation, exact recovery is possible only under a rigid
deterministic-pushforward condition. With multiple observations, however, the
coordinates $(Y_1,\ldots,Y_m)$ may be coupled dependently. This additional
freedom can eliminate the residual uncertainty in $X$ after finitely many
samples. Thus, while conditionally i.i.d.\ observations typically reduce
$H(X|Y_1,\ldots,Y_m)$ only exponentially, dependent observations can drive it
to zero exactly.
\begin{figure}
    \hspace{-10pt}
    \includegraphics[scale=0.48]{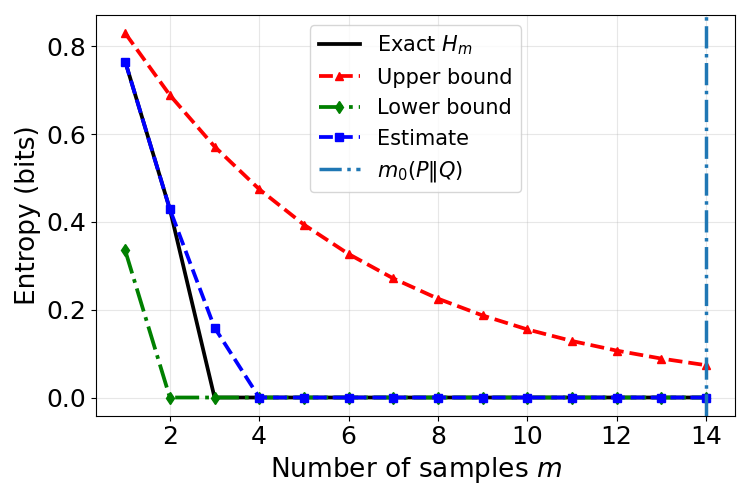}\quad
    \includegraphics[scale=.48]{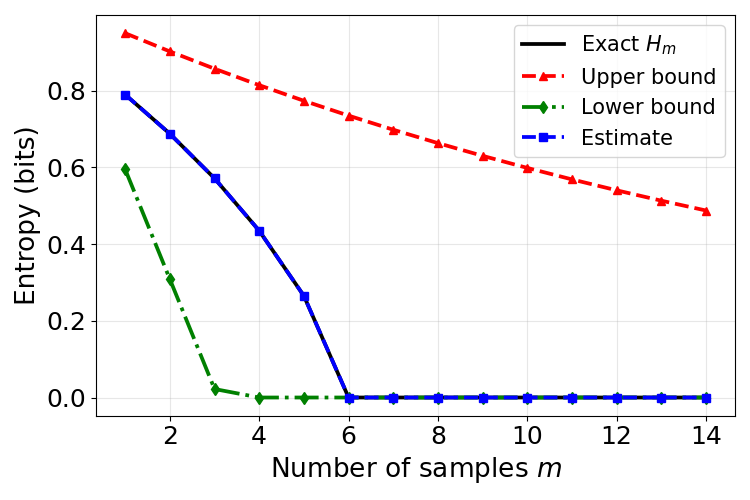}
    \caption{$H_m(P\|Q)$ in the Bernoulli case with (left) $P = \Ber(0.45)$ and $Q=\Ber(0.2)$ and  (right) $P = \Ber(0.3)$ and $Q=\Ber(0.05)$. The exact value is due to Proposition~\ref{Prop:Binary_m} and the upper bound and lower bound are from Theorem~\ref{thm:homogeneous-exp} and Theorem~\ref{thm:converse-lb}. The blue curve ``Estimate'' is the output of Algorithm~\ref{alg:Hstar-refined}. The vertical line refers to the upper bound on $m_0(P\|Q)$ from Theorem~\ref{thm:equalZero}. For the right figure, this bound is $m_0(P\|Q)\leq 56$. }
    \label{fig:binary}
\end{figure} 

Our first example shows that this phenomenon is already present for uniform
marginals.
\begin{proposition}\label{prop:uniform}
For any $P\in \Delta([r])$, we have $H_m(P\|\Unif([r]))=0$ for all $m\ge2$.
\end{proposition}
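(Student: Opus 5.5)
The plan is an explicit coupling for $m=2$, followed by a reduction of the case $m>2$ to it. Because conditional entropy is nonnegative, it is enough to exhibit, for every $m\ge 2$, one admissible coupling under which $X$ is a deterministic function of $Y^m$.

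\textbf{Step 1 (the case $m=2$).} Identify $[r]$ with $\mathbb{Z}_r=\{0,\dots,r-1\}$ via $x\mapsto x-1$. Draw $X\sim P$ and, independently, $Y_1\sim\Unif([r])$. Set
\[
Y_2 = Y_1 + X \pmod r .
\]
Then $Y_1\sim\Unif([r])$ by construction. Next, $Y_2$ is also uniform: conditional on $X=x$, the map $y\mapsto y+x \pmod r$ is a bijection of $\mathbb{Z}_r$, so $Y_2\mid\{X=x\}\sim\Unif([r])$ for every $x$. Averaging over $x$ gives $Y_2\sim\Unif([r])$. Moreover, $X = Y_2 - Y_1 \pmod r$ is a deterministic function of $(Y_1,Y_2)$. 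Hence $H(X|Y_1,Y_2)=0$, and therefore $H_2(P\|\Unif([r]))=0$.

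\textbf{Step 2 (the case $m>2$).} Keep the coupling $(X,Y_1,Y_2)$ from Step 1 and let $Y_3,\dots,Y_m$ be i.i.d.\ $\Unif([r])$, independent of everything else; copies of $Y_1$ would work equally well. All marginals are correct. Since conditioning does not increase entropy,
\[
0\le H(X|Y^m)\le H(X|Y_1,Y_2)=0 .
\]
The same argument shows in general that $m\mapsto H_m(P\|Q)$ is nonincreasing.

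\textbf{Main obstacle.} Essentially none. The only point needing care is that $Y_2$ is uniform even though $X$ is arbitrary. This holds because $Y_1$ is uniform and independent of $X$, which makes the cyclic shift by $X$ preserve uniformity; a fixed $Q$ other than the uniform distribution would not have this shift-invariance.
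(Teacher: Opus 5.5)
Your proof is correct and is essentially the paper's secret-sharing construction: the paper takes $Y_1,\dots,Y_{m-1}$ i.i.d.\ uniform independent of $X$ and sets $Y_m = X-\sum_{i=1}^{m-1}Y_i \pmod r$, so that $X=\sum_{i=1}^m Y_i \pmod r$. You instead do the $m=2$ case and pad with independent uniform coordinates, which handles $m>2$ just as well.
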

Thus two dependent uniform coordinates suffice to recover an arbitrary
$X\sim P$ exactly. By contrast, for $m=1$, zero entropy requires $P$ to be a
deterministic pushforward of $\Unif([r])$, a much more restrictive condition.

The construction is simple and is closely related to Shamir's secret sharing
scheme \cite{shamir1979share}. Draw $X\sim P$. Conditional on $X$, sample
$Y_1,\ldots,Y_{m-1}$ independently and uniformly from $[r]$, and set
\(Y_m = X-\sum_{i=1}^{m-1}Y_i \pmod r.\)
Then \(X=\sum_{i=1}^m Y_i \pmod r,\) so $X$ is a deterministic function of $Y^m$ and hence $H(X|Y^m)=0$. Each coordinate is marginally uniform: this is immediate for
$Y_1,\ldots,Y_{m-1}$, while $Y_m$ is uniform because, conditional on any
$X=x$, the sum $\sum_{i=1}^{m-1}Y_i$ is uniform on $[r]$.
This example isolates the role of dependence. Each coordinate is individually
uninformative, yet the joint dependence pattern encodes $X$ exactly. Several
of our later constructions build on this same principle.

We next turn to the binary case, where the full finite-$m$ behavior can be
computed explicitly.

\begin{proposition}\label{Prop:Binary_m}
For $m\ge2$, we have
$$
H_m(\Ber(p)\|\Ber(q)) =
\begin{cases}
0, & r \le m s,\\
(1-ms)\,h_{\mathsf b}\!\big(\frac{r-ms}{1-ms}\big), & r>ms,
\end{cases}
$$
where $r=\min\{p,1-p\}$ and $s=\min\{q,1-q\}$. For $m=1$, we have
$$
H_1(\Ber(p)\|\Ber(q)) =
\begin{cases}
s\,h_{\mathsf b}\!\big(\frac{r}{s}\big), & r\le s,\\
(1-s)\,h_{\mathsf b}\!\big(\frac{r-s}{1-s}\big), & r>s.
\end{cases}
$$
\end{proposition}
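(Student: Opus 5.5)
We relabel first. Swapping the two values of $X$, or of every $Y_i$, changes neither the conditional entropy nor the feasible set up to relabeling, so we may take $r=p\le\tfrac12$ and $s=q\le\tfrac12$. Throughout we write
\[
g(a,b)\coloneqq (a+b)\,h_{\mathsf b}\!\big(\tfrac{a}{a+b}\big)=a\log\tfrac{a+b}{a}+b\log\tfrac{a+b}{b}.
\]
This is the contribution $P(F)\,h_{\mathsf b}(P(X=1|F))$ of a conditioning event $F$ with $a=P(X=1,F)$ and $b=P(X=0,F)$. The function $g$ is the perspective of $h_{\mathsf b}$, so it is concave and positively homogeneous on $\mathbb{R}_{\ge0}^2$. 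It is also nondecreasing in each argument, since $\partial_a g=\log\frac{a+b}{a}\ge0$.

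\emph{Achievability for $m\ge2$.} We realize everything on $[0,1]$ with Lebesgue measure and set $Y_i=\mathbf 1_{A_i}$ with $|A_i|=s$, and $X=\mathbf 1_B$ with $|B|=r$. It suffices to make $B$ a union of atoms of the partition generated by $A_1,\dots,A_m$, or to add randomness only on the atom $E=\bigcap_i A_i^c$.

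If $r\le 2s$, take $A_1=[0,s]$ and $A_2=[s-r/2,\,s+r/2]$. Both have length $s$ and lie in $[0,1]$ because $s+r/2\le 2s\le 1$, and their symmetric difference has measure $r$. Let $A_3,\dots,A_m$ be arbitrary of measure $s$, and put $X=\mathbf 1\{Y_1\ne Y_2\}$.

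If $2s<r\le ms$, take $B=[0,r]$ and choose $A_1,\dots,A_m\subseteq B$ of measure $s$ with $\bigcup_i A_i=B$. This is possible because the union of $m$ intervals of length $s$ can cover any length in $[s,ms]$. Put $X=\mathbf 1\{\exists i:\,Y_i=1\}$.

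If $r>ms$, let the $A_i$ be disjoint intervals inside $[0,ms]$, and set $X=1$ on $\bigcup_i A_i$. On $E$, which has measure $1-ms$, let $X\sim\Ber\big(\tfrac{r-ms}{1-ms}\big)$ independently. Then $H(X|Y^m)=(1-ms)\,h_{\mathsf b}\big(\tfrac{r-ms}{1-ms}\big)$.

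\emph{Converse for $m\ge2$ (case $r>ms$).} Keep only the atom $Y^m=0^m$ of $H(X|Y^m)$, that is, the event $E=\{Y^m=0^m\}$. This gives $H(X|Y^m)\ge g(a,b)$ with $a=P(X=1,E)$ and $b=P(X=0,E)$. Let $c=P(X=1,E^c)$ and $d=P(X=0,E^c)$. By the union bound $c+d=P(E^c)\le ms$, and $a=r-c$, $b=1-r-d$. Because $g$ is nondecreasing in each argument, the minimum is attained on the segment $c+d=ms$. Because $g$ is concave along that segment, the minimum is attained at an endpoint.
\begin{itemize}
\item The endpoint $c=ms$, $d=0$ gives $(1-ms)\,h_{\mathsf b}\big(\frac{r-ms}{1-ms}\big)$.
\item The endpoint $c=0$, $d=ms$ gives $(1-ms)\,h_{\mathsf b}\big(\frac{r}{1-ms}\big)$.
\end{itemize}
The first is smaller because $r-ms\le\min\{r,\,1-ms-r\}$, which uses $r\le\tfrac12$. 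So the first endpoint is closer to the boundary of $h_{\mathsf b}$ and gives the claimed value.

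\emph{The case $m=1$.} A single observation induces only two atoms, $\{Y=0\}$ and $\{Y=1\}$, so the decomposition is exact:
\[
H(X|Y)=g(r-c,\;1-r-s+c)+g(c,\;s-c),
\]
with $c=P(X=1,Y=1)$ ranging over $[\max(0,r+s-1),\,\min(r,s)]$. This expression is concave in $c$, so its minimum is at an endpoint. Since $r+s\le1$, the endpoints are $c=0$ and $c=\min(r,s)$.
\begin{itemize}
\item For $r\le s$, the endpoint $c=r$ gives $g(0,1-s)+g(r,s-r)=s\,h_{\mathsf b}(r/s)$. The endpoint $c=0$ gives $(1-s)\,h_{\mathsf b}\big(\frac{r}{1-s}\big)$. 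The first is smaller because $g$ is positively homogeneous and $h_{\mathsf b}$ is concave, with $s\le 1-s$.
\item For $r>s$, the endpoint $c=s$ gives $(1-s)\,h_{\mathsf b}\big(\frac{r-s}{1-s}\big)$. It beats $c=0$ by the same comparison used in the $m\ge2$ converse.
\end{itemize}

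\emph{Main obstacle.} The most delicate step is comparing the two endpoints. In the $m\ge2$ converse and the $m=1$ case with $r>s$, this is the comparison of the two values of $h_{\mathsf b}$, which needs care with which side of $\tfrac12$ each argument lies on. For $r\le s$ it is the inequality $s\,h_{\mathsf b}(r/s)\le (1-s)\,h_{\mathsf b}\big(\frac{r}{1-s}\big)$. I would first try to prove this from the monotonicity of $t\mapsto t\,h_{\mathsf b}(r/t)$ for $t\ge r$. That monotonicity is exactly monotonicity of $g(r,\cdot)$ in its second argument, which we already have.
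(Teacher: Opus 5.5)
Your proof is correct apart from one slip in the achievability for $r\le 2s$. The interval $A_2=[s-r/2,\,s+r/2]$ has length $r$, not $s$. Its symmetric difference with $A_1=[0,s]$ has measure $(s-r/2)+r/2=s$, so as written you would get $Y_2\sim\Ber(r)$ and $X\sim\Ber(s)$. What you want is the shifted interval $A_2=[r/2,\,s+r/2]$. It has length $s$, and your containment check $s+r/2\le 2s\le 1$ applies verbatim. Since the shift $r/2$ is at most $s$, the symmetric difference then has measure $r/2+r/2=r$.

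With that fix, your argument has the same skeleton as the paper's:
\begin{enumerate}
\item reduction to $p,q\le\frac12$;
\item explicit zero-entropy constructions for $r\le ms$;
\item for $r>ms$, the paper's own construction (unit-weight atoms plus randomization on the all-zero atom);
\item a converse that keeps only the atom $\{Y^m=0^m\}$;
\item for $m=1$, minimization of a concave function over a segment, followed by the same endpoint comparisons.
\end{enumerate}
For $r\le s$ the paper uses exactly the monotonicity of $t\mapsto t\,h_{\mathsf b}(r/t)$ that you propose.

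The differences are modest. For $r\le ms$ you use an interval model, with $X=\mathbf 1\{Y_1\neq Y_2\}$ when $r\le 2s$ and an OR over a covering when $2s<r\le ms$. The paper instead writes pmfs on $\{0,1\}^m$, with $X$ the indicator of the all-ones word when $p\le q$, or of $\{Y^m\neq\mathbf 0\}$ when $q\le p\le mq$. Your XOR construction needs only two coordinates and avoids the nonnegativity bookkeeping of the paper's Case 1.

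In the converse, the paper uses two steps: monotonicity in $t_0\ge 1-mq$, then symmetry of $x\mapsto T\,h_{\mathsf b}(x/T)$ on $[p-mq,p]$. You fold these into monotonicity and concavity of the perspective $g$. This lets the $m\ge 2$ converse and the $m=1$ case run on the same tool.
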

When $p, q\leq \tfrac12$, the exact recovery is possible precisely when the \(p\)-mass of \(\{X=1\}\) fits inside the event $\{Y^m\neq 0\}$ whose probability is at most $mq$. If \(p>mq\), the excess \(p-mq\) is forced onto the all-zero atom. More details and intuition are provided in Appendix~\ref{app:examples}. The formulae in Proposition~\ref{Prop:Binary_m} reveal a finite-sample threshold. For $m=1$, zero entropy occurs
only in the deterministic cases $p\in\{0,1,q,1-q\}$. For $m\ge2$, however,
exact recovery is possible whenever \(\min\{p,1-p\} \le m\min\{q,1-q\}.\)
Thus adding dependent coordinates does not merely improve the entropy
gradually; it creates a zero-entropy regime that is absent from the
single-sample problem. Fig.~\ref{fig:binary} illustrates this threshold as a function of $m$. 

%  The structure of the optimal coupling admits a simple interpretation. Suppose $p,q\le \tfrac12$. The variables $Y_1,\dots,Y_m\sim \Ber(q)$ can collectively allocate at most $mq$ mass to the event $\{Y^m\neq 0\}$, since $\mathbf{1}\{Y^m\neq 0\}\le \sum_{i=1}^m Y_i$ and thus  the nonzero configurations of $Y^m$ provide at most $mq$ total mass on
% which $X=1$ can be encoded deterministically away from the all-zero atom. To
% recover $X\sim\Ber(p)$ exactly, the coupling must allocate total mass $p$ to
% the event $\{X=1\}$. If $p\le mq$, this mass can be placed entirely inside
% $\{Y^m\neq \mathbf 0\}$, yielding $H_m(\Ber(p)\|\Ber(q))=0$. If $p>mq$, the nonzero region has insufficient capacity. 
% The excess mass $p-mq$ must be placed on the atom $Y^m=\mathbf 0$, where $X$ remains random.
% The optimal coupling therefore concentrates all residual uncertainty on this single atom and makes $X$ deterministic everywhere else. 
% After assigning $X=1$ deterministically on the set $\{Y^m \neq \mathbf{0}\}$ of total mass $mq$, the remaining probability $p-mq$ is placed on the atom $Y^m=\mathbf{0}$, which has mass $1-mq$. Therefore, the conditional probability $\Pr(X=1|Y^m=\mathbf{0})$ is exactly $\frac{p-mq}{1-mq}$.

\section{Main Results}\label{sec:main}
Motivated by the phase transition in Section~\ref{sec:examples}, we now study
the finite-sample behavior of $H_m(P\|Q)$ in the homogeneous setting
$Q_1=\cdots=Q_m=Q$. The central question is whether additional observations
merely reduce uncertainty gradually, or whether they can eliminate it exactly.
We first show that exponential decay is already possible without using
dependence among the coordinates. We then prove the stronger phenomenon:
under a mild support condition, dependent couplings achieve exact recovery
after finitely many observations.

\begin{theorem}\label{thm:homogeneous-exp}
Let $P$ and $Q$ be distributions on $\calX$ with $|\mathcal X|=r\ge 2$. Then, we have 
\[H_m(P\|Q)\le \zeta_m \log r,\]
where $\zeta_m:=\min\{1,(r-1)\|Q\|_\infty^m\}$. 
% In particular, we have for $\eps<\log r$
% \[m_\eps(P\|Q) \le\left\lceil \frac{\log\!\bigl((r-1)\log r/\eps\bigr)}{-\log \|Q\|_\infty} \right\rceil.\]
\end{theorem}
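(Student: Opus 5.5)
The plan is to prove the bound by a quantile (monotone) coupling between $X\sim P$ and an i.i.d.\ vector $Y^m\sim Q^{\otimes m}$. This choice is admissible: each $Y_i$ has marginal $Q$, and no dependence across coordinates is exploited, as announced in the contributions. The trivial bound $H(X|Y^m)\le H(X)\le\log r$ handles the case $\zeta_m=1$. It therefore suffices to show
\[
H(X|Y^m)\le (r-1)\|Q\|_\infty^m\log r .
\]

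\textbf{Construction.} Enumerate $\calX^m$ in an arbitrary fixed order and let $U\sim\Unif([0,1])$. Partition $[0,1]$ into consecutive intervals $I_y$ of length $Q^{\otimes m}(y)$, one for each $y\in\calX^m$ in that order. Separately, partition $[0,1]$ into consecutive intervals $J_x$ of length $P(x)$, one for each $x\in\calX=[r]$. Set $Y^m=y$ when $U\in I_y$ and $X=x$ when $U\in J_x$. Then $X\sim P$ and $Y^m\sim Q^{\otimes m}$, so the marginal constraints hold.

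\textbf{Counting cut points.} Conditional on $Y^m=y$, $X$ is deterministic unless $I_y$ contains in its interior one of the $r-1$ internal endpoints of the partition $\{J_x\}$. Each such cut point lies inside at most one $I_y$. Hence at most $r-1$ atoms $y$ are ``bad,'' and each bad atom has mass
\[
Q^{\otimes m}(y)=\prod_i Q(y_i)\le\|Q\|_\infty^m .
\]
On a bad atom, the conditional entropy is at most $\log r$. Summing over bad atoms gives
\[
H(X|Y^m)=\sum_y Q^{\otimes m}(y)\,H(X|Y^m=y)\le (r-1)\|Q\|_\infty^m\log r .
\]
Combining this with the trivial bound yields the claim $H_m(P\|Q)\le\zeta_m\log r$.

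\textbf{Possible refinement.} One could sharpen the constant by noting that a bad atom containing $k$ cut points has conditional entropy at most $\log(k+1)$. This is unnecessary here, since $\log r$ suffices for the stated bound.

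\textbf{Main obstacle.} The only delicate point is the counting argument: each cut point must be charged to a single atom of mass at most $\|Q\|_\infty^m$. Cut points landing exactly on a boundary between two $I_y$'s cause no uncertainty, and atoms containing several cut points are still counted only once, so the bound $(r-1)$ on the number of bad atoms holds regardless. If $P$ has fewer than $r$ atoms in its support, the count only improves.
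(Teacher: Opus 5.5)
Your proposal is correct and follows the paper's proof essentially verbatim. The paper likewise couples $X\sim P$ with $Y^m\sim Q^{\otimes m}$ through a common uniform $U$ and interval partitions, charges each of the $r-1$ interior cut points of the $P$-partition to a single interval of length at most $\|Q\|_\infty^m$, and bounds the conditional entropy on this ambiguity event by $\log r$. The paper in fact proves the heterogeneous version, Theorem~\ref{thm:heterogeneous-exp}, and specializes it to $Q_i=Q$.
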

Theorem~\ref{thm:homogeneous-exp} gives a baseline for what can be achieved
without exploiting dependence. The proof takes $Y^m\sim Q^{\otimes m}$ and
constructs a quantile coupling between $X\sim P$ and $Y^m$ such that $X$ is
recovered from $Y^m$ except on a mismatch event. The probability of this event
is controlled by the largest atom of $Q^{\otimes m}$, which is
$\|Q\|_\infty^m$, yielding exponential decay whenever $Q$ is non-deterministic.
Thus independent coordinates already drive the conditional entropy to zero
exponentially fast.

The examples in Section~\ref{sec:examples}, however, suggest a sharper
possibility. Once the coordinates $Y_1,\ldots,Y_m$ are allowed to be
dependent, additional observations need not merely shrink uncertainty
asymptotically; they may eliminate it exactly after a finite threshold on $m$. The
next theorem confirms this phenomenon and gives an explicit upper bound on
the threshold.
At a high level, the result extends the secret-sharing construction of
Proposition~\ref{prop:uniform}. For $Q=\Unif([r])$, a modular-sum constraint reveals $X$ exactly while preserving the marginals. For non-uniform $Q$, the key question is whether sufficiently long blocks of $Q$-samples can play the role of nearly uniform coordinates. The answer is yes: once block sums become rich enough, the same encoding principle can be implemented exactly.
\begin{theorem}\label{thm:equalZero}
Let $P,Q\in \Delta([r])$ for an integer $r\ge2$, and assume that $P$ has full
support. Suppose there exists an integer $\ell\ge1$ such that
\(Q^{*\ell}(x)>0,\) for all $x\in[r]$, where $Q^{*\ell}$ denotes the $\ell$-fold convolution of $Q$ over $[r]$. Then there exists an integer $m_0$ such that
$$H_m(P\|Q)=0,$$
for all $m\ge m_0$. Moreover, the exact-recovery sample complexity $m_0(P\|Q)$ satisfies 
\begin{equation}\label{eq:m^*_0}
m_0(P\|Q)
\le
2\ell
\left\lceil
\frac{\log\!\bigl(8/p_{\min}\bigr)}
{-\log(1-r\widetilde q_{\min})}
\right\rceil,
\end{equation}
where
\(p_{\min}:=\min_{x\in[r]}P(x)>0,\) and $\widetilde q_{\min}:=\min_{x\in[r]}Q^{*\ell}(x).$
\end{theorem}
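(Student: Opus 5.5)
The plan is to reduce the problem to a two-dimensional coupling problem on $\mathbb{Z}_r\times\mathbb{Z}_r$. I identify $[r]$ with $\mathbb{Z}_r$ and read $Q^{*\ell}$ as the cyclic (mod $r$) convolution, which I take to be the paper's convention.

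\textbf{Step 1 (Doeblin decomposition).} Set $n:=\bigl\lceil \log(8/p_{\min})/(-\log(1-r\widetilde q_{\min}))\bigr\rceil$ and $m=2\ell n$. Since $Q^{*\ell}(x)\ge \widetilde q_{\min}$ for all $x$, we can write
\[
Q^{*\ell}=r\widetilde q_{\min}\,\Unif([r])+(1-r\widetilde q_{\min})R_1
\]
for some $R_1\in\Delta([r])$. Split the $m$ coordinates into two groups of $n$ blocks each, every block having length $\ell$. Inside each group, draw all coordinates i.i.d.\ from $Q$, and let $T_j$ be the mod-$r$ sum of group $j$. Then $T_j$ is a sum of $n$ independent variables with law $Q^{*\ell}$. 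If at least one of them comes from the uniform component, the sum is uniform. Hence the law $\nu$ of $T_j$ satisfies
\[
\nu=(1-\delta)\Unif([r])+\delta R,\qquad \delta:=(1-r\widetilde q_{\min})^n\le p_{\min}/8,
\]
for some $R\in\Delta([r])$. Write $\alpha(t):=\nu(t)-1/r=\delta(R(t)-1/r)$. Then $\sum_t\alpha(t)=0$ and $|\alpha(t)|\le\delta$.

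\textbf{Step 2 (gluing the groups).} The two groups may be joined by any coupling of their laws. Concretely, first draw $(T_1,T_2)\sim\pi$ for a coupling $\pi$ of $(\nu,\nu)$. Then, given $T_j=t$, draw group $j$ from the i.i.d.\ $Q$-law conditioned on its sum being $t$, independently across the two groups. Each group keeps its i.i.d.\ law, so every $Y_i$ is still marginally $Q$. Define $X:=T_1-T_2 \pmod r$. Then $X$ is a deterministic function of $Y^m$, so it suffices to find $\pi$ with marginals $\nu,\nu$ such that $\pi\{t_1-t_2=x\}=P(x)$ for every $x$.

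\textbf{Step 3 (the key construction).} Take
\[
\pi(t_1,t_2)=\frac{P(t_1-t_2)}{r}+\frac{\alpha(t_1)+\alpha(t_2)}{r}.
\]
The checks are as follows.
\begin{itemize}
\item Row sums: summing over $t_2$ gives $1/r+\alpha(t_1)=\nu(t_1)$, using $\sum\alpha=0$. Column sums follow by symmetry.
\item Diagonal masses: $\sum_{t_1}\pi(t_1,t_1-x)=P(x)+\frac1r\sum_{t_1}\bigl(\alpha(t_1)+\alpha(t_1-x)\bigr)=P(x)$.
\item Nonnegativity: $\pi(t_1,t_2)\ge (p_{\min}-2\delta)/r\ge0$, because $\delta\le p_{\min}/8$.
\end{itemize}
Hence $H_{2\ell n}(P\|Q)=0$. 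Here full support of $P$ is used, since it is what gives slack for the perturbation.

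\textbf{Step 4 (all larger $m$).} For $m>2\ell n$, append the extra coordinates independently with law $Q$ and ignore them. This shows that zero entropy persists for every $m\ge m_0$, and the bound \eqref{eq:m^*_0} follows.

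The main obstacle I expect is Step 3: the perturbation must correct both marginals exactly, preserve the diagonal masses, and stay nonnegative. The additive form $\alpha(t_1)+\alpha(t_2)$ is what I would try first. The residual concern is confirming the paper's convolution convention. If $Q^{*\ell}$ were instead an integer convolution truncated to $[r]$, I would reduce mod $r$ and adjust $\widetilde q_{\min}$ accordingly.
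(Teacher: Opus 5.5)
Your proof is correct, and it takes a genuinely different route from the paper's at the key step (your Step 3). The paper uses the same Doeblin bound and the same lifting, i.e.\ conditioning each group on its block sum. From there it proceeds as follows: for each shift $j\in[r]$ it takes a maximal coupling of the group-sum law $\mu_n$ and its shift $\tau_{-j}\mu_n$. This makes the difference $D_j$ of the two group sums have law within $2\beta^n$ of $\delta_j$, where $\beta=1-r\widetilde q_{\min}$. The matrix $M_n$ with columns $\mathsf{Law}(D_j)$ then satisfies $\|M_n-I\|_1\le 4\beta^n$. A Neumann-series bound shows that $w=M_n^{-1}P$ is a probability vector once $8\beta^n<p_{\min}$, and the final coupling is the $w$-mixture of these $r$ near-vertex couplings.

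You replace this inversion by a single closed-form coupling of $(\nu,\nu)$. It is the uniform-marginal secret-sharing coupling $P(t_1-t_2)/r$ behind Proposition~\ref{prop:uniform}, plus the correction $(\alpha(t_1)+\alpha(t_2))/r$. The correction repairs both marginals and does not change the law of the difference, because each set $\{t_1-t_2=x\}$ meets every row and every column exactly once.

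Your version buys several things:
\begin{itemize}
\item It is more explicit: no maximal couplings and no matrix inversion.
\item It makes the link to the uniform example transparent.
\item It gives better constants. Since $\alpha(t)\ge-\delta/r$, nonnegativity only needs $\delta\le rp_{\min}/2$, so the numerator $\log(8/p_{\min})$ could be improved to $\log\bigl(2/(rp_{\min})\bigr)$, and the stated bound follows a fortiori.
\item It extends verbatim to two groups with different sum laws (use $\alpha_1(t_1)+\alpha_2(t_2)$), so it also covers Theorem~\ref{thm:hetero-unified}.
\end{itemize}
The paper's argument, by contrast, is phrased as the generic principle that perturbed simplex vertices still span the simplex, which is the conceptual picture the authors emphasize.

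One minor point, which the paper's formula shares: if $r\widetilde q_{\min}=1$, your formula gives $n=0$, so you need the convention $n=1$ in that case.
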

The proof is a quantitative version of the uniform secret-sharing argument.
First, the full-support condition on $Q^{*\ell}$ implies that long block sums
of i.i.d.\ $Q$-samples modulo $r$ converge exponentially fast to the uniform
law. Second, this near-uniformity allows us to construct a finite collection
of coupled block variables whose differences approximate the
vertices of the probability simplex. Third, once the approximation is accurate
enough, these perturbed vertices span the simplex, so any target law
$P$ can be represented exactly as a convex combination of them.

The second step is precisely where dependence changes the nature of the problem. The
block sums themselves are formed from independent $Q$-samples, but the coupling
across blocks is engineered so that the resulting observation vector encodes
$X$ deterministically. In contrast to the gradual entropy decay obtained from
independent observations in Theorem~\ref{thm:homogeneous-exp}, the dependent
construction creates a finite-sample transition to zero entropy at a finite sample size $m=m_0$. Monotonicity of $m\mapsto H_m(P\|Q)$ then extends the zero-entropy construction from block length $m_0$ to all sample sizes $m\geq m_0$.

\begin{remark}
Theorem~\ref{thm:equalZero} achieves finite-sample exact recovery using two
ingredients: modular block sums and full dependence among $Y_1,\ldots,Y_m$.
Only the second is essential. The modular-sum structure can be replaced by any
aggregation map satisfying an appropriate block-richness condition. In fact, we show 
in Appendix~\ref{app:ExactRecovery} that neither convolution structure nor modular
operation or even common alphabet is needed. By contrast,
arbitrary dependence is essential in general:
Appendix~\ref{app:coupling-structures} shows that finite-sample exact recovery
can fail under conditional i.i.d., and even under conditional
independence of $Y_1,\ldots,Y_m$ given $X$.
\end{remark}
The full-support assumption on $P$ in Theorem~\ref{thm:equalZero} is not
essential for exact recovery; it only streamlines the proof and the explicit
threshold bound. The next corollary removes this assumption by lifting $P$ to
an auxiliary full-support distribution and then applying a deterministic
pushforward. This reduction preserves finite-sample exact recovery, with only
a logarithmic increase in the bound on $m_0(P\|Q)$.

\begin{corollary}\label{cor:equalZero-arbitrary-support}
Suppose there exists $\ell\ge1$ such that $Q^{*\ell}(x)>0$ for all $x\in[r]$. Let $P\in\Delta([r])$ have support $S$ and $p_{\min}=\min_{x\in S}P(x)>0$. Then
$$m_0(P\|Q) \le 2\ell \left\lceil \frac{\log\!\bigl(16(r-|S|+1)/p_{\min}\bigr)}{-\log(1-r\widetilde q_{\min})} \right\rceil,$$
where $\widetilde q_{\min}=\min_{x\in[r]} Q^{*\ell}(x)$.
\end{corollary}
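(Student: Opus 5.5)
The plan is to reduce the corollary to Theorem~\ref{thm:equalZero} by lifting $P$ to a full-support distribution $\widetilde P$ on $[r]$ and then recovering $P$ as a deterministic pushforward $P=g_{\#}\widetilde P$.

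\textbf{Step 1: the lift.} If $S=[r]$, the statement follows directly from Theorem~\ref{thm:equalZero}, since $\log(8/p_{\min})\le\log(16/p_{\min})$. Otherwise, set $k=r-|S|\ge1$. Fix some $x^*\in S$ with $P(x^*)\ge p_{\min}$; any atom of $S$ will do. Split its mass as follows. Keep $\widetilde P(x)=P(x)$ for $x\in S\setminus\{x^*\}$, set $\widetilde P(x^*)=P(x^*)/(k+1)$, and give mass $P(x^*)/(k+1)$ to each of the $k$ points of $[r]\setminus S$. Define $g:[r]\to S$ to be the identity on $S$ and to send every point of $[r]\setminus S$ to $x^*$. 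Then $g_{\#}\widetilde P=P$, and $\widetilde P$ has full support with
\[
\widetilde p_{\min}\ge \frac{p_{\min}}{k+1}=\frac{p_{\min}}{r-|S|+1}.
\]

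\textbf{Step 2: pushforward preserves zero entropy.} Let $m\ge m_0(\widetilde P\|Q)$. Take a coupling of $(\widetilde X,Y^m)$ with $\widetilde X\sim\widetilde P$, $Y_i\sim Q$, and $H(\widetilde X|Y^m)=0$. Set $X=g(\widetilde X)$, so that $X\sim P$ and the marginals of the $Y_i$ are unchanged. Since $X$ is a function of $\widetilde X$,
\[
H(X|Y^m)\le H(\widetilde X|Y^m)=0.
\]
Hence $m_0(P\|Q)\le m_0(\widetilde P\|Q)$, and the monotonicity in $m$ already used after Theorem~\ref{thm:equalZero} carries over.

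\textbf{Step 3: apply the theorem.} Theorem~\ref{thm:equalZero} applied to $\widetilde P$ gives
\[
m_0(P\|Q)\le 2\ell\left\lceil\frac{\log\bigl(8(r-|S|+1)/p_{\min}\bigr)}{-\log(1-r\widetilde q_{\min})}\right\rceil,
\]
which is already at most the stated bound with constant $16$. The larger constant leaves slack, for example if one prefers splitting with weights that avoid dividing by $k+1$ exactly, or handles rounding differently.

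I expect no real obstacle here. The only point needing care is that the lifted minimum mass must degrade by at most the factor $r-|S|+1$. This is why all the new points are split from a single atom rather than from $p_{\min}$ spread across the support.
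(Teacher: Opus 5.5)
Your proof is correct and follows the paper's route: you lift $P$ to a full-support $\widetilde P$ by moving mass from one atom $x^*\in S$ onto $[r]\setminus S$, apply Theorem~\ref{thm:equalZero} to $\widetilde P$, and recover $X=g(\widetilde X)$ via the map collapsing $[r]\setminus S$ onto $x^*$. The only difference is the weights. The paper places $\zeta=p_{\min}/(2(r-|S|+1))$ on each new point, giving $\widetilde p_{\min}\ge p_{\min}/(2(r-|S|+1))$ and hence the constant $16$, whereas your equal split of $P(x^*)$ gives $\widetilde p_{\min}\ge p_{\min}/(r-|S|+1)$ and the slightly sharper constant $8$.
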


The proof reduces the arbitrary-support case to the full-support case. We
construct a strictly positive distribution $\widetilde P$ and a deterministic
map $g$ such that $P=g_{\#}\widetilde P$. Applying
Theorem~\ref{thm:equalZero} to $\widetilde P$ yields a decoder $f$ such that
$\widetilde X=f(Y^m)$ for $\widetilde X\sim\widetilde P$. Hence
$X=g(\widetilde X)=g(f(Y^m))$ has law $P$ and is a deterministic function of
$Y^m$, implying $H_m(P\|Q)=0$.

We close this section by giving some insight into the behavior of $H_m(P\|Q)$
before the zero-entropy threshold. In this regime, the same construction underlying the proof yields a coupling in which $X$ is determined by $Y^m$ except on a small ``ambiguity'' event.
\begin{proposition}\label{prop:dependent-exp}
Let $P,Q\in \Delta([r])$ with $r\ge2$. Suppose there exists $\ell\ge1$ such that $Q^{*\ell}(x)>0$ for all $x\in[r]$. Then, we have 
$$H_{m}(P\|Q)\le h_{\mathsf b}(\delta_n)+\delta_n\log r, $$
where $n\coloneqq \left\lfloor\frac{m}{2\ell}\right\rfloor$, $\delta_n=\min\{1,2\beta^n\}$, $\beta=1-r\widetilde q_{\min}$, and $\widetilde q_{\min}=\min_x Q^{*\ell}(x)$. 
\end{proposition}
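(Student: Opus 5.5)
We plan to reuse the block-sum construction behind Theorem~\ref{thm:equalZero}, but to stop before the exact-representation step and accept a small ambiguity event. Set $n=\lfloor m/(2\ell)\rfloor$. If $n=0$ then $\delta_n=1$, and the bound $h_{\mathsf b}(1)+\log r=\log r\ge H(X)\ge H_m(P\|Q)$ is trivial. So assume $n\ge1$. By monotonicity of $m\mapsto H_m(P\|Q)$ (extra coordinates may be taken independent and ignored), it suffices to treat $m=2\ell n$.

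\textbf{Step 1: near-uniformity of block sums.} Group the coordinates into two super-blocks $A$ and $B$, each of $n$ blocks of length $\ell$, and set $S_A=\sum_{i\in A}Y_i \bmod r$, and likewise $S_B$. Under $Q^{\otimes \ell n}$, each block sum mod $r$ has law $Q^{*\ell}\ge \widetilde q_{\min}$ pointwise. Writing $Q^{*\ell}=r\widetilde q_{\min}\,\Unif([r])+(1-r\widetilde q_{\min})R$, a standard Doeblin argument shows that the law $\mu_n$ of $S_A$ satisfies $\TV(\mu_n,\Unif([r]))\le \beta^n$.

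\textbf{Step 2: coupling with an ambiguity event.} Let $(Y_i)_{i\in A}\sim Q^{\otimes \ell n}$ and $(Y_i)_{i\in B}\sim Q^{\otimes \ell n}$, with the two super-blocks possibly dependent. Each is internally i.i.d., so all marginals equal $Q$. We want $D=S_B-S_A \bmod r$ to be close to an $X$-valued variable with law $P$. To arrange this, take a maximal coupling of $S_A$ with a uniform $U$, and of $S_B$ with a uniform $V$, and choose $(U,V)$ so that $V-U=X\sim P$. This is possible because, conditional on $X=x$, we can take $U$ uniform and $V=U+x$, exactly as in Proposition~\ref{prop:uniform}. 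The remaining coordinates are then resampled from the conditional law of the block given its sum, so that each super-block has law $Q^{\otimes \ell n}$. On the event $E^c$, where $S_A=U$ and $S_B=V$, we have $X=D$. By the union bound, $\Pr(E)\le 2\beta^n$.

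The main obstacle is making this gluing rigorous, since several couplings must be chained together. The clean way is to build the joint law of $(X,U,V,S_A,S_B)$ and then attach the full $Y$-vectors through the conditional law of $Q^{\otimes\ell n}$ given its block sum. This preserves the marginals, because the super-block coupling only touches the sum. It may also be necessary to replace $\min\{1,\cdot\}$ by the trivial bound when $2\beta^n\ge1$.

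\textbf{Step 3: entropy bound.} Let $\mathbf 1_E$ be the ambiguity indicator. Then
\[
H(X|Y^m)\le H(X,\mathbf 1_E|Y^m)\le H(\mathbf 1_E)+H(X|Y^m,\mathbf 1_E).
\]
On $E^c$ we have $X=D(Y^m)$, which contributes $0$. On $E$ the contribution is at most $\Pr(E)\log r$. This gives $H(X|Y^m)\le h_{\mathsf b}(\Pr(E))+\Pr(E)\log r$.

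If $\Pr(E)\le\delta_n\le 1/2$, monotonicity of $t\mapsto h_{\mathsf b}(t)+t\log r$ on $[0,1/2]$ yields the claim. If $\delta_n>1/2$, then $h_{\mathsf b}(\delta_n)+\delta_n\log r\ge \log r\cdot\tfrac12+h_{\mathsf b}(\delta_n)$. In this regime we check directly that $\max_{t\le\delta_n}[h_{\mathsf b}(t)+t\log r]$ is attained at $t=\delta_n$: the derivative $\log\frac{1-t}{t}+\log r$ is positive for $t<r/(r+1)$. If $\delta_n$ exceeds $r/(r+1)$, we instead use the trivial bound $\log r$, since $h_{\mathsf b}(t)+t\log r$ at $t=\delta_n$ still dominates $\log r$ once $t\ge r/(r+1)$; indeed $\log(r+1)\ge\log r$ at the maximizer and the function decreases to $\log r$ at $t=1$.
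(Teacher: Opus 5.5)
Your proposal is correct and matches the paper's proof in all essentials. Both arguments use the Doeblin bound $\TV(\mu_n,\Unif([r]))\le\beta^n$ on block sums, a coupling of two super-block sums whose difference equals $X$ outside an event of probability at most $2\beta^n$, lifting to the $Q$-coordinates through the conditional law given the block sum, and a Fano-type entropy bound.

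The differences are minor. You glue two maximal couplings through uniform variables $U$ and $V=U+X \pmod r$, which is the secret-sharing construction of Proposition~\ref{prop:uniform}. The paper instead takes a single maximal coupling of $\mu_n$ with its shift $\tau_{-j}\mu_n$ and mixes over $j\sim P$. Your Step~3 also explicitly checks that $t\mapsto h_{\mathsf b}(t)+t\log r$ is increasing on $[0,r/(r+1)]$, falling back on $H(X)\le\log r$ beyond it, and handles the case $n=0$. The paper leaves both points implicit when it replaces the error probability by $\delta_n$.
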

Proposition~\ref{prop:dependent-exp} exposes the mechanism behind
Theorem~\ref{thm:equalZero}. The block coupling first makes $Y^m$ an almost
lossless description of $X$: decoding can fail only on an ambiguity event
whose probability decays exponentially in the number of blocks. The exact
recovery theorem identifies the point at which this residual ambiguity can be
removed altogether. Thus the dependent construction does not merely reduce
entropy faster; it changes the geometry of the coupling so that uncertainty is
first concentrated and then eliminated. This bound is not meant to uniformly
improve the independent-product bound in Theorem~\ref{thm:homogeneous-exp};
the two are not comparable in general. Its role is structural: dependence first
concentrates uncertainty, and then eliminates it at the finite-sample
threshold.

\begin{algorithm}[t]
\caption{Greedy algorithm for \(H_m(P\|Q)\)}
\label{alg:Hstar-refined}
\begin{algorithmic}[1]
\Require \(\pi^{(0)}\in\mathsf C(P,Q,m)\), number of samples $m$, and iterations \(T\)
\For{\(t=0,\dots,T-1\)}
\State Compute \(u_s^{(t)}\) and \(\Gamma_t\) from $\pi^{(t)}$
\State \(\displaystyle 
\pi^{(t+1)} \leftarrow 
\argmin_{\substack{\pi\in\mathsf C(P,Q,m):\\ \supp(\pi)\subseteq\Gamma_t}}
\sum_{x,s}\pi_{x,s}\bigl(-\log u_s^{(t)}(x)\bigr)\)
\EndFor
\State \Return \(\pi^{(T)}\), \(U_T\coloneqq H_{\pi^{(T)}}(X|S)\)
\end{algorithmic}
\end{algorithm}
\section{Greedy Algorithm}
\label{sec:Hstar-algorithm}
Computing $H_m(P\|Q)$ exactly requires optimizing over all couplings with
prescribed marginals, a nonconvex problem over a high-dimensional coupling
polytope. We therefore use a tractable greedy procedure based on successive
linear majorization of the conditional entropy.
For simplicity, let \(S:=(Y_1,\ldots,Y_m)\in\mathcal X^m,\) and $\mathsf C(P,Q,m)$ denote the set of all couplings $\pi\in\Delta(\mathcal X\times\mathcal X^m)$ such that $X\sim P$ and $Y_i\sim Q$ for every $i\in[m]$. For $\pi\in\mathsf C(P,Q,m)$, define
\[a_s:=\sum_x\pi_{x,s},
\qquad u_s^\pi(x):=\frac{\pi_{x,s}}{a_s}
\quad\text{when }a_s>0.\]
Then \(H_\pi(X|S)=\sum_{s:a_s>0}a_sH(u_s^\pi).\) The goal is to minimize this quantity over $\mathsf C(P,Q,m)$.

Starting from an initial coupling $\pi^{(0)}\in\mathsf C(P,Q,m)$, the algorithm
iteratively solves a linear surrogate problem. At iteration $t$, we compute the active support \(\Gamma_t:=\{(x,s):\pi^{(t)}_{x,s}>0\}\) and posteriors 
\[u_s^{(t)}(x)\coloneqq \frac{\pi^{(t)}_{x,s}}{a_s^{(t)}}.\]
The next iterate is then obtained by solving 
\[\pi^{(t+1)}\in\argmin_{\substack{\pi\in\mathsf C(P,Q,m):\\
\supp(\pi)\subseteq\Gamma_t}} -\sum_{x,s}\pi_{x,s}\log u_s^{(t)}(x).\]
Thus each step fixes the current posterior geometry and optimizes the coupling
against this linearized objective. Consequently, the algorithm alternates between estimating posteriors and optimizing the coupling given these posteriors.

Note that the surrogate objective is the cross-entropy with respect to the current posteriors and it has a simple interpretation. For every
$\pi\in\mathsf C(P,Q,m)$ supported on $\Gamma_t$,
\[
-\sum_{x,s}\pi_{x,s}\log u_s^{(t)}(x)
=
H_\pi(X|S)
+
\sum_{s:a_s>0}a_sD(u_s^\pi\|u_s^{(t)}),
\]
where $D(\cdot\|\cdot)$ is the KL divergence.   Hence, the surrogate upper bounds $H_\pi(X|S)$ with equality at \(\pi=\pi^{(t)}\) which implies that each iteration minimizes a tight upper bound on $H_\pi(X|S)$ over the current
support, ensuring monotone descent. Restricting each iteration to the active support $\Gamma_t$ ensures that the surrogate minimization reduces to a well-defined linear program and yields a valid approximation to $H_m(P\|Q)$.

\begin{theorem}\label{thm:Hstar-main}
Let \(\{\pi^{(t)}\}\) be generated by Algorithm~\ref{alg:Hstar-refined} and \(U_t\coloneqq H_{\pi^{(t)}}(X| S)\). Then the following statements hold:
\begin{enumerate}
\item Feasibility: \(\pi^{(t)}\in\mathsf C(P,Q,m)\) for every \(t\).
\item Monotonicity: \(U_{t+1}\le U_t\).
\item One-step bound: for any \(\pi\in\mathsf C(P,Q,m)\) with \(\supp(\pi)\subseteq\Gamma_t\),
\[U_{t+1}\le H_\pi(X| S) + \sum_{s:a_s>0} a_s D(u_s^\pi\|u_s^{(t)}).\]
\item Conditional linear convergence:
Let \(U_t^*\) be the minimum of \(H_\pi(X| S)\) over all
\(\pi\in\mathsf C(P,Q,m)\) satisfying \(\supp(\pi)\subseteq\Gamma_t\), and let
\(\pi_t^*\) be a minimizer with masses \(a_s^{*,t}\) and posteriors
\(u_s^{*,t}\). If, for some \(\rho\in[0,1)\), the alignment condition
\[
\sum_s a_s^{*,t}D(u_s^{*,t}\|u_s^{(t)})
\le
\rho\,(U_t-U_t^*)
\]
holds at iteration \(t\), then
\(U_{t+1}-U_t^* \le\rho\,(U_t-U_t^*).\)
Consequently, if this condition holds for all \(t\ge t_0\), and if a globally
optimal coupling \(\pi^*\) satisfies \(\supp(\pi^*)\subseteq\Gamma_t\) for all
\(t\ge t_0\), then
\[
U_t-H_m(P\|Q)
\le
\rho^{\,t-t_0}
\bigl(U_{t_0}-H_m(P\|Q)\bigr).
\]
\end{enumerate}
\end{theorem}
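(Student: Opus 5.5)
The plan is to check the four items in order. Each one follows from the cross-entropy identity stated just before the theorem, together with the fact that the LP is feasible.

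\emph{Feasibility.} I argue by induction on \(t\). The base case is the hypothesis \(\pi^{(0)}\in\mathsf C(P,Q,m)\). For the inductive step, note that the feasible set of the LP at step \(t\) is \(\{\pi\in\mathsf C(P,Q,m):\supp(\pi)\subseteq\Gamma_t\}\). This set contains \(\pi^{(t)}\), because \(\Gamma_t\) is by definition the support of \(\pi^{(t)}\). It is therefore a nonempty polytope: compact, and cut out by linear equalities and nonnegativity. On \(\Gamma_t\) every coefficient \(-\log u^{(t)}_s(x)\) is finite, since \(\pi^{(t)}_{x,s}>0\) there. So the linear objective is bounded and attains its minimum, and any minimizer lies in \(\mathsf C(P,Q,m)\).

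One bookkeeping point needs checking. For \(\pi\) supported on \(\Gamma_t\), an index \(s\) with \(a_s>0\) has \(a^{(t)}_s>0\), so \(u^{(t)}_s\) is defined. Moreover \(u^\pi_s\) is absolutely continuous with respect to \(u^{(t)}_s\). Hence the identity
\[
-\sum_{x,s}\pi_{x,s}\log u^{(t)}_s(x)=H_\pi(X|S)+\sum_{s:a_s>0}a_sD(u^\pi_s\|u^{(t)}_s)
\]
holds with all terms finite. It follows by writing \(\pi_{x,s}=a_su^\pi_s(x)\) and splitting \(-\log u^{(t)}_s=-\log u^\pi_s+\log(u^\pi_s/u^{(t)}_s)\).

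\emph{One-step bound and monotonicity.} Write \(L_t(\pi)\) for the surrogate objective at step \(t\). By the identity and nonnegativity of KL, \(U_{t+1}=H_{\pi^{(t+1)}}(X|S)\le L_t(\pi^{(t+1)})\). By optimality of \(\pi^{(t+1)}\), \(L_t(\pi^{(t+1)})\le L_t(\pi)\) for every feasible \(\pi\). Applying the identity to \(L_t(\pi)\) gives item 3. Choosing \(\pi=\pi^{(t)}\), all KL terms vanish, so \(U_{t+1}\le U_t\), which is item 2.

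\emph{Conditional linear convergence.} Apply item 3 with \(\pi=\pi^*_t\), which is feasible for the step-\(t\) LP. Combined with the alignment hypothesis this gives
\[
U_{t+1}\le U^*_t+\rho(U_t-U^*_t),
\]
which is the per-step claim.

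For the global statement, suppose \(\supp(\pi^*)\subseteq\Gamma_t\) for all \(t\ge t_0\). Then \(\pi^*\) is feasible for the restricted problem, so \(U_t^*\le H_m(P\|Q)\). Conversely, every restricted coupling lies in \(\mathsf C(P,Q,m)\), so \(U_t^*\ge H_m(P\|Q)\). Hence \(U^*_t=H_m(P\|Q)\) for all \(t\ge t_0\). The per-step contraction then reads \(U_{t+1}-H_m\le\rho(U_t-H_m)\), and iterating from \(t_0\) gives the geometric bound.

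I expect the main obstacle to be the support and definedness bookkeeping in the feasibility step, not any inequality. The argument must ensure that restricting to \(\Gamma_t\) keeps the LP well-posed, with finite coefficients and a nonempty feasible set. It must also ensure that \(\Gamma_{t+1}\subseteq\Gamma_t\), so that later supports stay inside earlier ones. Beyond that, the proof is a direct chain of the identity, optimality of the LP solution, and nonnegativity of KL.

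Two further points deserve explicit remarks. First, a minimizer \(\pi^*_t\) of the restricted entropy problem exists, because conditional entropy is continuous on the compact restricted polytope. Second, in the final statement the hypothesis \(\supp(\pi^*)\subseteq\Gamma_t\) is exactly what is needed to identify \(U^*_t\) with the global optimum \(H_m(P\|Q)\).
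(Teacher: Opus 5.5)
Your proposal is correct and follows essentially the same route as the paper: the cross-entropy identity $L_t(\pi)=H_\pi(X|S)+\sum_{s:a_s>0} a_s D(u_s^\pi\|u_s^{(t)})$, optimality of $\pi^{(t+1)}$ for the support-restricted LP, nonnegativity of KL, and the identification $U_t^*=H_m(P\|Q)$ under the support hypothesis. The only differences are an inessential reordering (the paper proves monotonicity first, while you obtain it as the case $\pi=\pi^{(t)}$ of the one-step bound) and your extra well-posedness bookkeeping (nonempty restricted feasible set, finite coefficients on $\Gamma_t$, $u_s^\pi\ll u_s^{(t)}$), which the paper leaves implicit in its one-line feasibility argument.
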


\begin{figure*}[t]
\centering
\begin{tikzpicture}

\begin{groupplot}[
    group style={
        group size=2 by 1,
        horizontal sep=1.5cm,
    },
    width=0.4\textwidth,
    height=0.3\textwidth,
    xlabel={Iteration $t$},
    ylabel={$\widehat H_{\pi^{(t)}}(X|Y^m)$},
    xmin=0,
    xmax=20,
    grid=both,
    major grid style={draw=gray!18},
    minor grid style={draw=gray!10},
    tick align=outside,
    tick style={black},
    axis line style={black},
    label style={font=\small},
    tick label style={font=\small},
    title style={font=\small},
]

\nextgroupplot[
    title={},
] 
\addplot[
    red!85!black,
    dashed,
    thick,
    mark=*,
    mark size=1.3pt,
    mark options={solid},
] table[x=t,y=product] {figures/init_ablation_left.dat};

\addplot[
    violet!95!black,
    dash dot,
    thick,
    mark=*,
    mark size=1.3pt,
    mark options={solid},
] table[x=t,y=random] {figures/init_ablation_left.dat};

\addplot[
    blue!80!black,
    dashed,
    thick,
    mark=square*,
    mark size=1.3pt,
    mark options={solid},
] table[x=t,y=list] {figures/init_ablation_left.dat};

\addplot[
    green!50!black,
    dash dot,
    very thick,
    mark=diamond*,
    mark size=1.3pt,
    mark options={solid},
] table[x=t,y=mixed] {figures/init_ablation_left.dat};

\nextgroupplot[
    title={},
    ylabel={},
]

\addplot[
    red!85!black,
    dashed,
    thick,
    mark=*,
    mark size=1.3pt,
    mark options={solid},
] table[x=t,y=product] {figures/init_ablation_right.dat};

\addplot[
    violet!95!black,
    dash dot,
    thick,
    mark=*,
    mark size=1.3pt,
    mark options={solid},
] table[x=t,y=random] {figures/init_ablation_right.dat};

\addplot[
    blue!80!black,
    dashed,
    thick,
    mark=square*,
    mark size=1.3pt,
    mark options={solid},
] table[x=t,y=list] {figures/init_ablation_right.dat};

\addplot[
    green!50!black,
    dash dot,
    very thick,
    mark=diamond*,
    mark size=1.3pt,
    mark options={solid},
] table[x=t,y=mixed] {figures/init_ablation_right.dat};

\end{groupplot}
% Shared legend placed manually below both panels
\path (group c1r1.south west) -- (group c2r1.south east)
      coordinate[midway] (legendpos);

\node[anchor=north, font=\footnotesize] at ([yshift=-1.1cm]legendpos) {
\begin{tabular}{@{}cccc@{}}
\tikz[baseline=-0.5ex]{
    \draw[red!85!black, dashed, very thick] (0,0) -- (0.55,0);
    \fill[red!85!black] (0.275,0) circle (1.7pt);
}
\textbf{Product init.}
&
\tikz[baseline=-0.5ex]{
    \draw[violet!95!black, dashed, very thick] (0,0) -- (0.55,0);
    \fill[violet!95!black] (0.275,0) circle (1.7pt);
}
\textbf{Random feasible init.}
&
\tikz[baseline=-0.5ex]{
    \draw[blue!80!black, dashed, very thick] (0,0) -- (0.55,0);
    \draw[blue!80!black, fill=blue!80!black] (0.21,-0.04) rectangle +(3.1pt,3.1pt);
}
\textbf{List init.}
&
\tikz[baseline=-0.5ex]{
    \draw[green!50!black, dashed, very thick] (0,0) -- (0.55,0);
    \draw[green!50!black, fill=green!50!black] (0.215,0) -- +(2.2pt,2.2pt) -- +(4pt,0pt) -- +(2.2pt,-2.2pt) -- cycle;
}
\textbf{List + product init.}
\end{tabular}
};

\end{tikzpicture}

\vspace{.1cm}
\caption{
Initialization controls convergence. For two nonbinary instances, we plot
 the output \(\widehat H_{\pi^{(t)}}(X| Y^m)\) of Algorithm~\ref{alg:Hstar-refined} across iterations under product, random
feasible, list coupling \eqref{eq:listcoupling}, and perturbed list \eqref{eq:listcouplingMixture} initializations. The random feasible baseline is an unstructured coupling obtained by solving a random linear objective over the feasible coupling polytope; see Appendix~\ref{app:list-coupling} for details. Product and random feasible
initializations stagnate, while the list initialization starts substantially
lower and the perturbed-list initialization rapidly descends toward zero. This shows that Algorithm~\ref{alg:Hstar-refined} relies on structured dependent support, not merely on the LP refinement step. We use $\eta=10^{-3}$ for the perturbed list initialization. Further details are given in Appendix~\ref{app:extranumerical}.}
\label{fig:initialization01}
\end{figure*}

Theorem~\ref{thm:Hstar-main} shows that Algorithm~\ref{alg:Hstar-refined}
performs monotone descent by minimizing a tight linear majorizer of the
conditional entropy at each iteration. The one-step bound identifies the
source of progress: the update is sharp when the current posteriors are well
aligned with those of a support-restricted optimum. The convergence guarantee
formalizes this local picture. If the posterior mismatch is controlled by the
current suboptimality, the iterates contract geometrically. Thus the guarantee
is necessarily local and support-dependent, reflecting the nonconvexity of the
entropy objective and the fact that the algorithm cannot leave its active
support.

\paragraph{Complexity.}
At iteration \(t\), the LP has \(N_t:=|\Gamma_t|\) variables. The marginal
constraints consist of the \(P\)-marginal and the \(m\) one-dimensional
\(Q\)-marginals, giving \(O(m|\mathcal X|)\) equality constraints up to
redundancy. Thus each update is solvable in time polynomial in \(N_t\) and
\(m|\mathcal X|\). Since \(\Gamma_{t+1}\subseteq\Gamma_t\), the active problem
size is non-increasing. In the worst case,
\(N_0=|\supp(P)|\,|\supp(Q)|^m,\)
so the method is exponential in \(m\). Its practical usefulness therefore
depends on sparse or structured active supports.

\paragraph{Initialization.}
Initialization is crucial because the algorithm cannot introduce mass outside
\(\Gamma_0\). 
If we choose the independent initialization
\(\pi^{(0)}=P\otimes Q^{\otimes m}\), then
\(u_s^{(0)}(x)=P(x)\) for every \(s\). Hence the first surrogate is
independent of the observation tuple and provides no informative direction
toward a dependent low-entropy coupling. With the natural tie-breaking rule
that keeps the current iterate when the surrogate is degenerate, the algorithm
remains at the product coupling, which is generally suboptimal for minimizing
\(H_m(P\|Q)\).

We instead propose to initialize from a structured coupling that induces dependence among coordinates. Specifically, we use the solution of the following \textit{list coupling} problem
\begin{equation}\label{eq:listcoupling}
    \max_{\pi\in\mathsf C(P,Q,m)} \Pr_\pi\bigl[X\in\{Y_1,\dots,Y_m\}\bigr],
\end{equation}
which yields strong alignment between $X$ and the observation list  $S$. This initialization significantly reduces the initial entropy and provides a richer support $\Gamma_0$, enabling the algorithm to escape the suboptimal independent regime.  To avoid an overly restrictive active support, we also consider the perturbed list initialization
\begin{equation}\label{eq:listcouplingMixture}
    \pi^{(0)}=(1-\eta)\pi_{\mathrm{list}}+\eta\,P\otimes Q^{\otimes m}, 
\end{equation}
for a constant $0<\eta\ll1$. 
This preserves the dependent structure needed to escape the product regime while avoiding an overly restrictive initial support; see Fig.~\ref{fig:initialization01}. 
The construction and properties of list coupling together with more numerical examples of Algorithm~\ref{alg:Hstar-refined} are provided in Appendix~\ref{app:list-coupling2}.

\section{From Homogeneous to Heterogeneous Marginals}\label{sec:hetero}
We now extend the main results to heterogeneous marginals, where
\(Y_i\sim Q_i\) need not share a common law. 
The same dichotomy persists: when the observations \(Y_1,\ldots,Y_m\) have product law \(Q_1\otimes\cdots\otimes Q_m\), the conditional entropy decays exponentially; when arbitrary dependence among \(Y_1,\ldots,Y_m\) is allowed,
it can become exactly zero after finitely many observations.

We first give the heterogeneous analogue of
Theorem~\ref{thm:homogeneous-exp}. Under the product-law restriction on
\(Y^m\), the quantile construction reduces uncertainty exponentially whenever
the coordinate marginals are uniformly non-degenerate.
\begin{theorem}\label{thm:heterogeneous-exp}
Let $\calX$ be a finite alphabet with $|\calX|=r\ge2$, and let $P,Q_1,\dots,Q_m\in\Delta(\calX)$. Suppose that $\|Q_i\|_\infty\le M<1$ for all $i\in[m]$. Then
$$H(P\|Q_1,\dots,Q_m)\le \zeta_m\log r,$$
where $\zeta_m=\min\{1,(r-1)M^m\}$.
\end{theorem}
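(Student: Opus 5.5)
We prove Theorem~\ref{thm:heterogeneous-exp} in the same way as Theorem~\ref{thm:homogeneous-exp}. We fix the observation law to be the product $\mu:=Q_1\otimes\cdots\otimes Q_m$ on $\calX^m$ and build a quantile coupling between $X\sim P$ and $S:=Y^m\sim\mu$. Any such coupling is admissible, so $H(P\|Q_1,\dots,Q_m)$ is at most the resulting $H(X|S)$. The bound $\zeta_m\le 1$ part is trivial, since $H(X|S)\le H(X)\le\log r$. It therefore suffices to show $H(X|S)\le (r-1)M^m\log r$.

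\emph{Construction.} Enumerate the atoms $s_1,\dots,s_N$ of $\mu$ in some fixed order and lay them out as consecutive intervals $I_j$ of length $\mu(s_j)$ in $[0,1)$. Enumerate $\calX=\{x_1,\dots,x_r\}$ and lay out intervals $J_k$ of length $P(x_k)$ in $[0,1)$. Draw $U\sim\Unif([0,1))$ and set $S=s_j$ if $U\in I_j$ and $X=x_k$ if $U\in J_k$. Then the marginals are exactly $P$ and $\mu$, so each $Y_i\sim Q_i$.

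\emph{Entropy bound.} Call an atom $s_j$ \emph{ambiguous} if $I_j$ meets more than one $J_k$, that is, if $I_j$ contains one of the $r-1$ interior breakpoints $\sum_{k\le t}P(x_k)$, $t=1,\dots,r-1$. Each breakpoint lies in the interior of at most one $I_j$. Hence there are at most $r-1$ ambiguous atoms. On every non-ambiguous atom, $X$ is deterministic given $S$, so $H(X|S=s_j)=0$. On an ambiguous atom, $H(X|S=s_j)\le\log r$. Therefore
\[
H(X|S)\le \log r\sum_{j\ \text{ambiguous}}\mu(s_j)\le (r-1)\,\max_{s}\mu(s)\,\log r .
\]

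\emph{Key step.} The only place heterogeneity enters is bounding the largest atom of the product measure. Because $\mu$ is a product, $\max_s\mu(s)=\prod_{i=1}^m\|Q_i\|_\infty\le M^m$, using $\|Q_i\|_\infty\le M$ for every $i$. This gives $H(X|S)\le (r-1)M^m\log r$. Combining with the trivial bound yields $H(P\|Q_1,\dots,Q_m)\le\zeta_m\log r$.

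I do not expect a real obstacle. The one point needing care is the counting claim that at most $r-1$ atoms straddle breakpoints. This holds because the intervals $I_j$ are disjoint and each breakpoint can be interior to at most one of them; a breakpoint falling exactly on an interval endpoint creates no ambiguity.
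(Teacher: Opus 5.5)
Your proof is correct and follows the paper's argument step for step. Both use a common-uniform quantile coupling of $X\sim P$ with $Y^m\sim Q_1\otimes\cdots\otimes Q_m$, note that at most $r-1$ product intervals can straddle the breakpoints of $P$'s partition, and bound the largest atom by $\max_s\mu(s)=\prod_i\|Q_i\|_\infty\le M^m$. The only cosmetic differences are these. The paper flags an interval as ambiguous whenever it contains a breakpoint, even as its left endpoint, and bounds the ambiguity probability by $\min\{1,(r-1)M^m\}$ in one step. You instead use the interior-breakpoint criterion and dispose of the $\zeta_m=1$ case via $H(X|S)\le\log r$. Both treatments are valid.
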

The proof follows the same quantile-coupling argument as
Theorem~\ref{thm:homogeneous-exp}, now applied to the product law
\(Q_1\otimes\cdots\otimes Q_m\). Thus, the exponential decay is driven by an exponential bound on $\|Q_1\otimes \ldots\otimes Q_m\|_\infty = \prod_i\|Q_i\|_\infty$. A simple such bound is obtained by assuming $\|Q_i\|_\infty\le M<1$ for all $i\in[m]$. One can replace this with a more relaxed condition of the
form $\prod_i\|Q_i\|_\infty\leq Ce^{-cm}$ for  constants $C$ and $c$. 

We next turn to finite-sample exact recovery. The key idea here is that
heterogeneity can be absorbed at the block level: if each block of
heterogeneous samples has the same fully supported aggregate law, then the
homogeneous exact-recovery construction applies to these block outputs.

\begin{theorem}\label{thm:equalZero-hetero}
Let $P\in\Delta([r])$ have full support, and let $Q_1,\dots,Q_m\in\Delta([r])$. 
Fix an integer $\ell\ge1$ and assume, for simplicity, that $\ell$ divides $m$. Suppose there exists a fully supported distribution $\widetilde Q\in\Delta([r])$ such that, for every $t\in [m/\ell]$,
\begin{equation}\label{eq:BlockConvolutionAssumption}
    Q_{(t-1)\ell+1} * Q_{(t-1)\ell+2} * \cdots * Q_{t\ell}=\widetilde Q.
\end{equation}
If
$$m\ge2\ell \left\lceil\frac{\log(8/p_{\min})}{-\log(1-r\widetilde q_{\min})}\right\rceil,$$
then $H(P\|Q_1,\dots,Q_m)=0,$ where $p_{\min}=\min_{x\in[r]}P(x)>0$ and $\widetilde q_{\min}=\min_{x\in[r]}\widetilde Q(x)$.
\end{theorem}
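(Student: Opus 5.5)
The plan is to reduce Theorem~\ref{thm:equalZero-hetero} to the block-level construction behind Theorem~\ref{thm:equalZero}. In that construction the raw coordinates enter only through the block variables. Let $n:=m/\ell$ and partition $[m]$ into the $n$ consecutive blocks $B_t=\{(t-1)\ell+1,\dots,t\ell\}$. In the homogeneous proof, each block of $\ell$ i.i.d.\ $Q$-samples is summed modulo $r$, giving a variable with law $Q^{*\ell}$. The dependence is then engineered across the resulting block sums $Z_1,\dots,Z_n$, and $X$ is decoded as a deterministic function of these sums. Under \eqref{eq:BlockConvolutionAssumption}, every block sum $Z_t:=\sum_{i\in B_t}Y_i \pmod r$ of independent $Y_i\sim Q_i$ has the same law $\widetilde Q$. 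So I would run the block-level argument with $\widetilde Q$ in place of $Q^{*\ell}$ and $\widetilde q_{\min}$ in place of $\min_x Q^{*\ell}(x)$.

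Concretely, I would abstract the homogeneous proof into a block-level lemma. Suppose $\widetilde Q\in\Delta([r])$ is fully supported and $n\ge 2\lceil \log(8/p_{\min})/(-\log(1-r\widetilde q_{\min}))\rceil$. Then there is a coupling of $X\sim P$ with $(Z_1,\dots,Z_n)$, each $Z_t\sim\widetilde Q$, such that $X=f(Z^n)$. This is exactly what the three steps sketched after Theorem~\ref{thm:equalZero} deliver: exponential mixing of sums of $\widetilde Q$-variables to uniform at rate $\beta=1-r\widetilde q_{\min}$, perturbed simplex vertices, and an exact convex representation of $P$. Those steps use only the law of each block variable, not how it was produced. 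The factor $2$ in the threshold reflects the two halves of blocks used in that construction. The condition on $m$ in the statement is then exactly $n$ satisfying the lemma's threshold.

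Next I would lift the block coupling to the coordinates. Given $(X,Z^n)$, for each $t$ independently I sample $(Y_i)_{i\in B_t}$ from the conditional law of $(W_i)_{i\in B_t}\sim\bigotimes_{i\in B_t}Q_i$ given $\sum_{i\in B_t} W_i\equiv Z_t \pmod r$. This is well defined because $Z_t\sim\widetilde Q$ is exactly the law of that sum, so every value of $Z_t$ with positive probability has positive probability under the product. Marginalizing out $Z_t$ shows $(Y_i)_{i\in B_t}\sim\bigotimes_{i\in B_t}Q_i$, so $Y_i\sim Q_i$ for every $i$. Since $Z_t$ is recovered from $Y^m$ as a block sum, we get $X=f\bigl((\sum_{i\in B_t}Y_i)_{t}\bigr)$, hence $H(X|Y^m)=0$ and $H(P\|Q_1,\dots,Q_m)=0$.

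The main obstacle is verifying that the homogeneous construction truly depends only on the block law. In particular, its mixing step must use only the full support of the block law through $\beta$, with no extra structure of $Q^{*\ell}$ beyond being a convolution power. If the homogeneous proof relied on $Q^{*\ell}$ itself, for instance by forming sums of $k\ell$ raw samples within a block, I would redo that estimate directly for $\widetilde Q$. The estimate needed is the standard contraction: for $Z,Z'$ independent with $Z'\sim\widetilde Q$, the distance of $Z+Z'$ from $\Unif([r])$ in total variation is at most $(1-r\widetilde q_{\min})$ times the distance of $Z$ from $\Unif([r])$. This follows by writing $\widetilde Q=r\widetilde q_{\min}\,\Unif([r])+(1-r\widetilde q_{\min})R$ for a probability distribution $R$, which gives the same constants.
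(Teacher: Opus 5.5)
Your proposal is correct and matches the paper's proof. The paper invokes Theorem~\ref{thm:equalZero} with $\ell=1$ for the pair $(P,\widetilde Q)$ to get $H_n(P\|\widetilde Q)=0$ for $n=m/\ell$ above the threshold, then lifts each block variable $Z_t\sim\widetilde Q$ to $(Y_i)_{i\in B_t}$ by sampling from $\bigotimes_{i\in B_t}Q_i$ conditioned on the block sum (Lemmas~\ref{lem:one-block-lifting} and~\ref{lem:block-reduction}), exactly as you do; the ``obstacle'' you flag does not arise, because the $\ell=1$ case of Theorem~\ref{thm:equalZero} uses only $\widetilde Q(x)\ge\widetilde q_{\min}$ for all $x$ (through the Doeblin bound you describe) and so applies directly with $\widetilde Q$ in place of $Q$.
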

The theorem is governed by the same mechanism as
Theorem~\ref{thm:equalZero}. Group the variables into blocks of length
\(\ell\), and define the block sums
\[
Z_t
=
Y_{(t-1)\ell+1}+\cdots+Y_{t\ell}
\pmod r.
\]
By assumption, every \(Z_t\) has law \(\widetilde Q\). Hence the heterogeneous
problem reduces to the homogeneous one at the block level, with $\widetilde Q$ playing the role of $Q^{*\ell}$ in Theorem~\ref{thm:equalZero}. 
%The dependentconstruction then couples block sums so that their differences approximate simplex vertices and, once enough blocks are available, combines these vertices to realize the target law \(P\) exactly.

This shows that finite-sample exact recovery does not require identical
coordinate marginals. What matters is not homogeneity of the individual
\(Q_i\)'s, but richness of the aggregate observations produced by blocks. Once the block convolution is fully supported, the same dependence-driven mechanism applies: uncertainty is first concentrated at the block level and then eliminated at a finite threshold. 

Theorem~\ref{thm:equalZero-hetero} is stated under three simplifying structural assumptions: the block aggregation is a modular sum with a common fully supported convolution law, $P$ must be full support, and the block length \(\ell\) divides \(m\). None of
these restrictions is essential. Appendix~\ref{app:ExactRecovery} replaces the modular sum by an arbitrary block map whose pushforward is sufficiently rich, showing that convolution is only one way to certify block-level exact recovery. Also, as in the homogeneous case, the full-support assumption
on \(P\) can be removed by lifting \(P\) to a strictly positive auxiliary law and projecting back to \(\supp(P)\); see Corollary~\ref{cor:equalZero-hetero-arbitrary-support} in Appendix~\ref{app:secHetero}. Finally, the divisibility assumption is purely notational: if \(k=\lfloor m/\ell\rfloor\) complete blocks satisfy the block condition and the sample-size threshold, the construction is applied to the first \(k\ell\) coordinates, and the remaining coordinates are appended independently with their prescribed marginals.

\section{Discussion and Open Problems}
\label{sec:discussion}
This paper identifies a sharp distinction between two feasible regimes for the multi-marginal minimal conditional entropy coupling \(H_m(P\|Q_1,\ldots,Q_m)\). Under the product-law
restriction \(Y^m\sim Q_1\otimes\cdots\otimes Q_m\), uncertainty about \(X\)
decays exponentially. However, once arbitrary dependence among \(Y_1,\ldots,Y_m\) is allowed, this asymptotic
barrier disappears and, under explicit conditions, the conditional entropy becomes exactly zero at finite \(m\). 
Across the examples, structural
theorems, and greedy algorithm, the same principle emerges: the joint support
geometry of \(Y^m\), not only the marginals \(Q_i\), determines
how much uncertainty can be removed.

Several questions remain open. First, while we give exact characterizations in
special cases and broad sufficient conditions for zero entropy, a complete
characterization of the threshold \(m_0(P\|Q_1,\ldots,Q_m)\) remains open.
Second, computing \(H_m(P\|Q)\) is challenging: the objective is concave over a
high-dimensional coupling polytope, and our greedy algorithm gives monotone,
support-restricted descent rather than a global guarantee. Designing scalable
algorithms with provable approximation bounds is an important direction.  Third, our theory holds only for discrete alphabets. Extending it to continuous spaces would substantially broaden the framework.

{
\bibliographystyle{IEEEtran}
\bibliography{refs}
}

% %%%%%%%%%%%%%%%%%%%%%%%%%%%%%%%%%%%%%%%%%%%%%%%%%%%%%%%%%%%%
%\newpage
%\paragraph{Roadmap of the appendix.}

\appendices
\section{More Details on Marginally-constrained Representation Learning}\label{app:RepresentationLearning}
We consider a representation learning problem in which the learned representation must satisfy prescribed marginal constraints. Let $X\sim P$ be a discrete source, and let an encoder map $X$ to a representation \(Y^m=(Y_1,\dots,Y_m)\)
through a stochastic kernel $W=P_{Y^m|X}$, see Fig.\ \ref{fig:marginal-constrained-repr}. The design constraint is that each coordinate of the representation has a prescribed marginal law:
\(Y_i\sim Q_i.\) Thus, the representation may depend on $X$, and the coordinates $(Y_1,\dots,Y_m)$ may be arbitrarily dependent, but each coordinate must be distributionally indistinguishable from its target marginal $Q_i$.

This setting captures a common paradigm in modern machine learning: representations are often required to match a target distribution (e.g., in latent-variable models, calibrated features, or privacy/fairness constraints), while retaining as much information about the input as possible. The central question is: \emph{how much information about $X$ can be preserved under these marginal constraints?}

Define the optimal reconstruction error via deterministic decoders as
$$\delta(P\|Q_1,\dots,Q_m)\coloneqq \inf_{W,f} \Pr[X\neq f(Y^m)],$$
and its homogeneous case by $\delta_m(P\|Q)$. 
Note that restricting to deterministic decoders is without loss of optimality under $0$--$1$ loss. 

Notice that the conditional entropy is closely related to the reconstruction error through Fano-type inequalities. For any feasible encoder--decoder pair with error probability $P_e=\Pr[X\neq f(Y^m)]$,
\[H(X| Y^m)\le h_{\mathsf b}(P_e)+P_e\log(|\calX|-1).\]
Conversely, for the Bayes-optimal decoder,
\(P_e\le\Pr[X\neq f^\star(Y^m)]\) is controlled by the residual posterior uncertainty. In particular, if \(H(P\|Q_1,\dots,Q_m)=0,\)
then there exists a feasible representation for which $P_e=0$. Hence finite-sample exact recovery is the strongest possible form of marginal-constrained representation learning: the representation obeys all prescribed marginal laws while preserving all information about $X$.
Our results therefore give non-asymptotic guarantees for when such lossless representations exist.

Our results show that the optimal reconstruction error exhibits a phase transition: while natural constructions based on independent coordinates yield gradual error decay, allowing dependence enables exact recovery in finitely many dimensions. In particular, the following is an  immediate corollary from Theorems~\ref{thm:homogeneous-exp} and \ref{thm:equalZero}, and gives sharp non-asymptotic guarantees in the homogeneous case.

\begin{corollary}\label{thm:nis}
Let $P,Q\in\Delta(\calX)$ and assume there exists $\ell\ge1$ such that $Q^{*\ell}$ is fully supported. Let $\widetilde q_{\min}=\min_x Q^{*\ell}(x)$ and $P_{\min}=\min_{x\in\supp(P)}P(x)$. Then, the following hold: 
\begin{enumerate}
\item $\delta_m(P\|Q)\le 2(1-r\widetilde q_{\min})^{\lfloor\frac{m}{2\ell}\rfloor}$ for all $m\ge1$.
\item There exists $$m^*\le 2\ell\left\lceil \frac{\log(8/P_{\min})}{-\log(1-r\widetilde q_{\min})}\right\rceil$$ 
such that $\delta_m(P\|Q)=0$ for all $m\ge m^*$.
\end{enumerate}
\end{corollary}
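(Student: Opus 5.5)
We prove Corollary~\ref{thm:nis} by turning the entropy guarantees already established into statements about decoding error. Two facts link the quantities. First, for any feasible coupling, the Bayes (MAP) decoder $f^\star(Y^m)=\arg\max_x P_{X|Y^m}(x|Y^m)$ has error probability $1-\bE[\max_x P_{X|Y^m}(x|Y^m)]$. This error vanishes exactly when $H(X|Y^m)=0$. Second, the encoder $W=P_{Y^m|X}$ ranges over exactly the couplings in $\mathsf C(P,Q,m)$. Hence $\delta_m(P\|Q)=0$ iff there is a coupling with $H(X|Y^m)=0$ attaining the infimum, which is slightly stronger than $H_m(P\|Q)=0$ as an infimum. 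We therefore use the explicit zero-entropy couplings built in the proofs, not just the value of the infimum.

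For part 2, we split into two cases. If $P$ has full support, Theorem~\ref{thm:equalZero} supplies, for $m_0$ bounded by \eqref{eq:m^*_0}, an explicit coupling in which $X=f(Y^{m_0})$ deterministically. This gives $\delta_{m_0}(P\|Q)=0$. For $m\ge m_0$ we append the extra coordinates independently with law $Q$ and ignore them in the decoder, so $\delta_m$ is nonincreasing in $m$. If $P$ does not have full support, we note that $P_{\min}$ is taken over $\supp(P)$. We then cannot simply invoke Corollary~\ref{cor:equalZero-arbitrary-support}, because its constant carries an extra $16(r-|S|+1)$ factor. Instead we use the lifting trick. Choose a strictly positive $\widetilde P$ that puts mass $P(x)/2$-type splits on a support point so that $\min_x\widetilde P(x)\ge P_{\min}/c$, together with a map $g$ satisfying $g_{\#}\widetilde P=P$. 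Then we decode via $g\circ f$. The bound stated in the corollary has exactly the same form as \eqref{eq:m^*_0} with $P_{\min}$ in place of $p_{\min}$. This step is the main obstacle: either the lifting must be done carefully enough that $\widetilde p_{\min}\ge P_{\min}$ up to absorbed constants, or the construction of Theorem~\ref{thm:equalZero} must be checked to go through when some target probabilities are zero. That should hold, since representing $P$ as a convex combination of perturbed vertices only needs $P$ inside the perturbed simplex, and points on faces are still covered once the perturbation is below $P_{\min}$ on the support. I would first try the latter route, re-reading the convex-combination step with $p_{\min}$ replaced by $P_{\min}$ on $\supp(P)$.

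For part 1, we use the construction behind Proposition~\ref{prop:dependent-exp}. That coupling makes $X$ a deterministic function of $Y^m$ outside an ambiguity event of probability at most $\delta_n=\min\{1,2\beta^n\}$, with $n=\lfloor m/(2\ell)\rfloor$ and $\beta=1-r\widetilde q_{\min}$. The decoder that outputs this function, and outputs anything on the ambiguity event, errs with probability at most $2\beta^n$. When $2\beta^n\ge1$ the bound is trivial. Combining the two cases gives the claim $\delta_m(P\|Q)\le 2(1-r\widetilde q_{\min})^{\lfloor m/(2\ell)\rfloor}$.
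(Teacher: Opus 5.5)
Your part~1 and the full-support case of part~2 match the paper's proof. Part~1 uses the decoder $f$ from the coupling in Proposition~\ref{prop:dependent-exp}. The error event $\{f(Y^m)\neq X\}$ is not $Y^m$-measurable, so the decoder cannot detect an ``ambiguity event''; it simply outputs $f(Y^m)$ everywhere and errs with probability at most $2\beta^n$. The full-support case of part~2 is a direct appeal to Theorem~\ref{thm:equalZero} followed by padding with independent coordinates.

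You are right that the statement takes $P_{\min}$ over $\supp(P)$ while Theorem~\ref{thm:equalZero} needs full support; the paper's proof just cites the theorem. However, neither remedy you propose works as written.

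\emph{Lifting.} This cannot give the stated bound. The constant $8$ is explicit, so no loss can be absorbed. Moreover, every fully supported $\widetilde P$ on $[r]$ has $\widetilde p_{\min}\le 1/r$, which is far below $P_{\min}$ when, say, $P$ is uniform on two symbols of a large alphabet.

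\emph{Face points.} The claim that face points are still covered is false. Off the event $\{A_j=B_j\}$, $D_j$ can take values outside $S:=\supp(P)$. So the perturbed vertices $\mathsf{Law}(D_j)$ generally charge symbols outside $S$, and no positive convex combination of them vanishes off $S$. Concretely, for $w=M_n^{-1}P$ your estimate only gives $w(j)\ge -8\beta^n$ for $j\notin S$. And if $w\ge 0$, then $0=P(x)\ge w(x)M_n(x,x)$ forces $w(x)=0$ off $S$. You would then need each $\mathsf{Law}(D_j)$, $j\in S$, to be supported in $S$, which fails in general.

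\emph{The fix.} Combine your two ideas: project the decoder instead of lifting the source.
\begin{itemize}
\item Let $g:[r]\to S$ be the identity on $S$ (arbitrary off $S$). Keep only the couplings $\pi^{(j)}$ with $j\in S$, and decode with $g\circ f$.
\item Since $g(j)=j$ for $j\in S$, the laws $g_\#\mathsf{Law}(D_j)$ on $S$ satisfy $\TV(g_\#\mathsf{Law}(D_j),\delta_j)\le\Pr(D_j\neq j)\le 2\beta^n$.
\item Hence the $|S|\times|S|$ matrix $M'$ with these columns has $\|M'-I\|_1\le 4\beta^n$, and the Neumann-series step runs verbatim on $S$.
\item If $\beta^n<P_{\min}/8$, then $w':=M'^{-1}P$ (viewing $P\in\Delta(S)$) satisfies $\|w'-P\|_1\le 8\beta^n<P_{\min}$. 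So $w'(j)>0$ for all $j\in S$, and $w'$ sums to one because the columns of $M'$ do.
\item Under $\sum_{j\in S}w'(j)\pi^{(j)}$, every coordinate has law $Q$ and $g(f(Y^m))\sim P$ exactly.
\end{itemize}
With blocks of length $\ell$ as in Theorem~\ref{thm:equalZero}, this gives exactly the stated $m^*$ with $P_{\min}$ taken over $\supp(P)$.
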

\begin{proof}
 For simplicity assume that $n = \tfrac{m}{2\ell}$ is an integer. Apply the coupling constructed in the proof of Proposition~\ref{prop:dependent-exp}. For $m=2n\ell$, this yields a deterministic function $f$ such that
\[
\mathbb{P}(X\neq f(Y^m))\le \delta_n,
\qquad
\delta_n\le 2(1-r\widetilde q_{\min})^n.
\]
Since $m=2n\ell$, this gives
\[
\delta_m(P\|Q) \le 2\exp\!\Big(-\tfrac{m}{2\ell}\cdot \big(-\log(1-r\widetilde q_{\min})\big)\Big).
\]
For general $m$, the bound follows by monotonicity.

The second part of the theorem can be proved by a direct application of Theorem~\ref{thm:equalZero}. 
\end{proof}
This perspective emphasizes the key conceptual point of the paper: marginal constraints do not necessarily force information loss. If one insists on independent representations, information about $X$ is accumulated gradually and recovery is only approximate at finite $m$. By contrast, allowing dependence across representation coordinates can create a finite-sample phase transition, in which the constrained representation becomes exactly lossless. 

Notice that while this corollary is given for homogeneous marginals, its extension to heterogeneous marginals is straightforward via Theorem \ref{thm:equalZero-hetero} or more generally Theorem~\ref{thm:hetero-unified}.  

\begin{figure}[t]
\centering
\begin{tikzpicture}[
    >=Latex,
    font=\small,
    node distance=1.2cm and 1.4cm,
    box/.style={draw, rounded corners, thick, align=center, minimum height=8mm},
    smallbox/.style={draw, rounded corners, thick, fill=gray!60, align=center, minimum height=7mm},
    constraint/.style={draw, rounded corners, thick, dashed, fill=blue!14, align=center},
    arr/.style={->, thick},
    every node/.style={inner sep=4pt}
]

% Left: source
\node[box, minimum width=1.8cm,fill=gray!20] (x) {\textbf{Alice} \\ $X\sim P$};

% Middle: encoder
\node[box, minimum width=2.8cm, right=1cm of x] (enc)
{Encoder: \(W=P_{Y^m|X}\)};

% Right top: representation
\node[box, minimum width=2.8cm, right=1cm of enc,fill=gray!20] (y)
{\textbf{Bob} \\ \(Y^m=(Y_1,\dots,Y_m)\)};

% Right bottom: decoder
\node[box, minimum width=2.2cm, right=0.7cm of y] (dec)
{Decoder: \(\hat X=f(Y^m)\)};

% Constraint box
\node[constraint, minimum width=2.3cm, above=0.5cm of enc] (cons)
{Marginal constraints\\ \(Y_i\sim Q_i\)};

% arrows
\draw[arr] (x) -- (enc);
\draw[arr] (enc) -- (y);
\draw[arr] (y) -- (dec);
\draw[arr] (cons) -- (enc);

\end{tikzpicture}
\caption{Marginal-constrained representation learning. A stochastic encoder \(W=P_{Y^m|X}\) maps the source \(X\sim P\) to a representation \(Y^m=(Y_1,\dots,Y_m)\), subject to the coordinate-wise output constraints \(Y_i\sim Q_i\). The decoder outputs \(\hat X=f(Y^m)\). The objective is to design \(W\) so that the representation reveals as much information about \(X\) as possible while respecting the prescribed marginals.}
\label{fig:marginal-constrained-repr}
\end{figure}

\section{List Entropy Coupling vs.\ Deterministic Randomness Extraction }\label{app:randomness}
As mentioned in Section~\ref{sec:intro}, in the classical deterministic randomness extraction setting, one observes i.i.d.\ samples $Z^m\sim Q^{\otimes m}$ and seeks a deterministic function $f$ such that $f(Z^m)\sim \Unif([k])$. At finite blocklength, this is often impossible due to arithmetic constraints. For example, if $Q=(1/3,2/3)$ on $\{0,1\}$, then every event under $Q^{\otimes m}$ has probability of the form $a/3^m$, and hence no deterministic function of finitely many i.i.d.\ samples can produce an exactly uniform bit. However, as mentioned earlier, there exists a coupling with marginals $Q$ such that $\Pr(Y_1\neq Y_2)=1/2$, yielding an exactly uniform bit via $X=\mathbf{1}\{Y_1\neq Y_2\}$.  One such coupling is the following: 
$$P_{Y_1,Y_2} = \begin{pmatrix}
   \tfrac1{12} & \tfrac1{4}\\
    \tfrac1{4} & \tfrac5{12}
\end{pmatrix}.$$
Thus, exact extraction becomes possible in finitely many samples.

As another example, consider distributions $Q=(1/2,1/2)$ on $\{0,1\}$ and 
\(P=\Unif([3])=(1/3,1/3,1/3)\) on $\{1,2,3\}$.
Then no deterministic function of \(m\) i.i.d.\ samples from \(Q\) can have distribution \(P\) for any finite \(m\), since under \(Q^{\otimes m}\) every event has probability of the form \(a/2^m\), whereas \(1/3\) is never of this form. Thus classical i.i.d.\ deterministic exact extraction/simulation is impossible.

In contrast, our coupling-based model allows this exactly with \(m=2\). Indeed, let \((Y_1,Y_2)\) be a coupling with marginals \(Q\) and joint distribution
\[
\begin{pmatrix}
\tfrac13 & \tfrac16\\
\tfrac16 & \tfrac13
\end{pmatrix},
\]
where rows and columns correspond to \(0,1\). Then \(Y_1\sim Q\) and \(Y_2\sim Q\), but the coordinates are dependent. Now define
\[
X=
\begin{cases}
1, & (Y_1,Y_2)=(0,0),\\
2, & (Y_1,Y_2)\in\{(0,1),(1,0)\},\\
3, & (Y_1,Y_2)=(1,1).
\end{cases}
\]
We then have
\[
\Pr(X=1)=1/3,\qquad
\Pr(X=2)=1/6+1/6=1/3,\qquad
\Pr(X=3)=1/3,
\]
so \(X\sim \Unif([3])\). Hence
\[H_2(\Unif([3])\|Q)=0.\]
This shows that even when the target alphabet differs from the source alphabet, the real gain in our framework comes from allowing dependence among the representation coordinates: such dependence can remove finite-block exact-simulation obstructions that are unavoidable under i.i.d.\ sampling. 

More generally, we can say that under classical i.i.d.\ setting, extraction from $m$ samples of $\Unif([k])$ to $\Unif([r])$ is possible if and only if $r| k^m$. Our framework removes this arithmetic constraint: by designing dependence among the coordinates while preserving the marginals, perfect extraction can be achieved even when $r \nmid k^m$. 

We remark that Corollary~\ref{cor:extraction} is straightforward from Theorem~\ref{thm:equalZero} and thus its proof is omitted.

\section{Proof of Proposition~\ref{Prop:Binary_m}}\label{app:examples}
Before we give the proof, we provide an intuition behind the proof. 
Suppose $p,q\le \tfrac12$. The variables $Y_1,\dots,Y_m\sim \Ber(q)$ can collectively allocate at most $mq$ mass to the event $\{Y^m\neq 0\}$, since $\mathbf{1}\{Y^m\neq 0\}\le \sum_{i=1}^m Y_i$ and thus  the nonzero configurations of $Y^m$ provide at most $mq$ total mass on
which $X=1$ can be encoded deterministically away from the all-zero atom. To
recover $X\sim\Ber(p)$ exactly, the coupling must allocate total mass $p$ to
the event $\{X=1\}$. If $p\le mq$, this mass can be placed entirely inside
$\{Y^m\neq \mathbf 0\}$, yielding $H_m(\Ber(p)\|\Ber(q))=0$. If $p>mq$, the nonzero region has insufficient capacity. 
%The excess mass $p-mq$ must be placed on the atom $Y^m=\mathbf 0$, where $X$ remains random.
%The optimal coupling therefore concentrates all residual uncertainty on this single atom and makes $X$ deterministic everywhere else. 
After assigning $X=1$ deterministically on the set $\{Y^m \neq \mathbf{0}\}$ of total mass $mq$, the remaining probability $p-mq$ is placed on the atom $Y^m=\mathbf{0}$, which has mass $1-mq$. Therefore, the conditional probability $\Pr(X=1|Y^m=\mathbf{0})$ is exactly $\frac{p-mq}{1-mq}$.

\begin{proof}[Proof of Proposition~\ref{Prop:Binary_m}]
We first give the proof for $m\geq 2$ and then for $m=1$. 
Notice that  
\[H_m(\Ber(p)\|\Ber(q)) =H_m(\Ber(1-p)\|\Ber(q)),\]
which is obtained by replacing $X$ with $1-X$, and
\[H_m(\Ber(p)\|\Ber(q))=H_m(\Ber(p)\|\Ber(1-q)),\]
obtained by replacing each $Y_i$ with $1-Y_i$. Hence it suffices to prove theorem for \(0\le p\le \tfrac12,\) and \(0\le q\le \tfrac12.\)

\noindent\underline{$\bf m\geq 2$:}
Under these assumptions, $r=p$ and $s=q$, so we must show
\[H_m(\Ber(p)\|\Ber(q))=\begin{cases}
0, & p\le mq,\\[1ex]
(1-mq)\,h_{\mathsf{b}}\!\left(\dfrac{p-mq}{1-mq}\right), & p>mq,
\end{cases}\]
To prove this result, we need to prove achievability and converse. We break down the achievability in three cases: (1) $0\le p\le q$, (2) $q\leq  p\le mq$, and (3) $p> mq$.

\paragraph{Case 1: Achievability when $0\le p\le q$.}
We construct $Y^m=(Y_1,\dots,Y_m)$ with the correct marginals and make $X$ a deterministic function of $Y^m$.

Let $\mathbf 1=(1,\dots,1)$ and let $\mathcal W_{m-1}$ denote the set of all binary vectors of length $m$ having exactly $m-1$ ones. Define a probability distribution on $\{0,1\}^m$ by
\[\Pr(Y^m=\mathbf 1)=p,\qquad \Pr(Y^m=\mathbf 0)=1-p-\frac{m(q-p)}{m-1},\]
and for $y\in\mathcal W_{m-1}$
\[\Pr(Y^m=y)=\frac{m(q-p)}{m-1}\cdot \frac1m,\]
and zero elsewhere.
Since $0\le p\le q\le \frac12$ and $m\ge 2$, all masses are nonnegative, indeed
\[1-p-\frac{m(q-p)}{m-1}=1-\frac{mq-p}{m-1} \ge 1-\frac{m/2}{m-1}\ge 0.\]
For each coordinate $j$, we have \(\Pr(Y_j=1)=p+\frac{m-1}{m}\cdot \frac{m(q-p)}{m-1}=q.\) Thus each $Y_j\sim \Ber(q)$.
Now define
\[X:=\mathbf 1\{Y^m=\mathbf 1\}.\]
Then $X$ is a deterministic function of $Y^m$, so \(H(X| Y^m)=0,\)
and \(\Pr(X=1)=\Pr(Y^m=\mathbf 1)=p.\)
Hence $X\sim\Ber(p)$ and
\[H_m(\Ber(p)\|\Ber(q))=0\qquad\text{for }0\le p\le q.\]

\paragraph{Case 2: Achievability when $q\le p\le mq$.}
Let
\[\beta:=\frac{mq-p}{m-1}\in[0,q].\]
Define a distribution on $\{0,1\}^m$ by
\[\Pr(Y^m=\mathbf 1)=\beta,\qquad \Pr(Y^m=\mathbf 0)=1-mq+(m-1)\beta=1-p,\]
\[\Pr(Y^m=e_i)=q-\beta,\qquad i=1,\dots,m,\]
and zero elsewhere, where $e_i$ is the vector with a single $1$ in position $i$.
All masses are nonnegative, and for each coordinate $j$, we have \(\Pr(Y_j=1)=\beta+(q-\beta)=q.\) Thus each $Y_j\sim\Ber(q)$.

Now define
\[X:=\mathbf 1\{Y^m\neq \mathbf 0\}.\]
Then $X$ is a deterministic function of $Y^m$, so \(H(X| Y^m)=0,\)
and \(\Pr(X=1)=\beta+m(q-\beta)=mq-(m-1)\beta=p.\) Hence $X\sim\Ber(p)$ and
\[H_m(\Ber(p)\|\Ber(q))=0
\qquad\text{for }q\le p\le mq.\]

\paragraph{Case 3: achievability when $p>mq$.}
Assume now $p>mq$. Since $p\le \frac12$, this implies $mq<\frac12$.

Define a distribution on $\{0, 1\}^m$ by  \[\Pr(Y^m=\mathbf 0)=1-mq,\qquad \Pr(Y^m=e_i)=q,\qquad i=1,\dots,m,\]
and no mass elsewhere. Then each coordinate marginal is Bernoulli$(q)$.
Given $Y^m$, define $X$ as follows: \(X=1\) deterministically on each \(e_i\) and on the atom $\mathbf 0$ set \(\Pr(X=1| Y^m=\mathbf 0)=\theta,\)
where $\theta\coloneqq \frac{p-mq}{1-mq}.$ Then
\[\Pr(X=1)=mq+(1-mq)\theta=p,\]
so $X\sim\Ber(p)$. Since the only nondeterministic atom is $\mathbf 0$ and 
\[H(X| Y^m)=(1-mq)\,h_{\mathsf{b}}(\theta)=(1-mq)\,h_{\mathsf{b}}\!\left(\frac{p-mq}{1-mq}\right).\]
Therefore
\[H_m(\Ber(p)\|\Ber(q))\le(1-mq)\,h_{\mathsf{b}}\!\left(\frac{p-mq}{1-mq}\right).\]

\paragraph{Converse when $p>mq$.}
Let $(X,Y^m)$ be any coupling with \(X\sim\Ber(p),\) and  \(Y_1,\dots,Y_m\sim\Ber(q).\)
Let
\[t_0:=\Pr(Y^m=\mathbf 0),\qquad x_0:=\Pr(X=1,Y^m=\mathbf 0).\]
Also let \(N:=Y_1+\cdots+Y_m.\)
Then $\mathbb E[N]=mq.$ Since $N\ge \mathbf 1_{\{N>0\}}$, we have  \(\Pr(N>0)\le \mathbb E[N]=mq,\) and hence
\[t_0=\Pr(N=0)\ge 1-mq.\]
Moreover,
\[p=x_0+\Pr(X=1,N>0)\le x_0+\Pr(N>0)\le x_0+mq,\]
implying \(x_0\ge p-mq.\) Now, we can write 
\[H(X| Y^m)=\sum_{y\in\{0,1\}^m}\Pr(Y^m=y)\,
h_{\mathsf{b}}\!\bigl(\Pr(X=1| Y^m=y)\bigr)\ge t_0\,h_{\mathsf{b}}\!\left(\frac{x_0}{t_0}\right).\]
Set \(T:=1-mq\) and \(\delta:=p-mq.\)
Then $t_0\ge T$ and $x_0\ge \delta$. Also, since $p\le \frac12$ and $mq<\frac12$,
\[T> \frac12,\qquad x_0\le p\le \frac12<T.\]
For fixed $x\in[0,\frac12]$, the map \(t\mapsto t\,h_{\mathsf{b}}(x/t)\)
is increasing on $[x,\infty)$, since
\[\frac{\partial}{\partial t}\Bigl[t\,h_{\mathsf{b}}(x/t)\Bigr]=\log\!\frac{t}{t-x}\ge 0,\]
implying 
\[t_0\,h_{\mathsf{b}}\!\left(\frac{x_0}{t_0}\right)\ge
T\,h_{\mathsf{b}}\!\left(\frac{x_0}{T}\right).\]
Now consider
\[f(x):=T\,h_{\mathsf{b}}(x/T),\]
for $x\in[0,T]$. This function is symmetric about $T/2$ and strictly increasing on $[0,T/2]$. Since \(\delta=p-mq\) and \(T-\delta=1-p\ge p,\) every $x\in[\delta,p]$ satisfies \[\min\{x,T-x\}\ge \delta.\]
Also, note that by symmetry and monotonicity, we have
\(f(x)\ge f(\delta),\) for $x\in[\delta,p]$. In particular,
\[T\,h_{\mathsf{b}}\!\left(\frac{x_0}{T}\right)\ge T\,h_{\mathsf{b}}\!\left(\frac{\delta}{T}\right)=(1-mq)\,h_{\mathsf{b}}\!\left(\frac{p-mq}{1-mq}\right).\]
Putting altogether, we can write
\[H(X| Y^m)\ge(1-mq)\,h_{\mathsf{b}}\!\left(\frac{p-mq}{1-mq}\right),\]
which is the desired converse result. 

Now, we prove the theorem for $m=1$.

\noindent\underline{$\bf m= 1$:} 
As before, it suffices to prove the result for \(0\le p\le \tfrac12,\) and \(0\le q\le \tfrac12.\) 

% \textbf{Step 1: Symmetries.}
% We first show that
% \[
% H_1(\Ber(p)\|\Ber(q))
% =
% H(\Ber(1-p)\|\Ber(q))
% =
% H(\Ber(p)\|\Ber(1-q)).
% \]
% Indeed, if $(X,Y)$ is any coupling with $X\sim \Ber(p)$ and $Y\sim \Ber(q)$, then
% \[
% (1-X,Y)
% \]
% is a coupling of $\Ber(1-p)$ and $\Ber(q)$, and since the map $x\mapsto 1-x$ is bijective,
% \[
% H(1-X| Y)=H(X| Y).
% \]
% Taking the infimum over all couplings gives
% \[
% H(\Ber(1-p)\|\Ber(q))
% \le
% H_1(\Ber(p)\|\Ber(q)).
% \]
% Applying the same argument with $p$ replaced by $1-p$ yields the reverse inequality, hence equality. The proof of
% \[
% H_1(\Ber(p)\|\Ber(q))
% =
% H(\Ber(p)\|\Ber(1-q))
% \]
% is identical, replacing $Y$ by $1-Y$.

% Therefore
% \[
% H_1(\Ber(p)\|\Ber(q))
% =
% H(\Ber(r)\|\Ber(s)),
% \]
% where
% \[
% r=\min\{p,1-p\},
% \qquad
% s=\min\{q,1-q\}.
% \]
% Since $r,s\in[0,\frac12]$, it suffices to compute
% \[
% H_1(\Ber(p)\|\Ber(q))
% \qquad\text{under the assumption}\qquad
% 0\le p\le \frac12,\ \ 0\le q\le \frac12.
% \]

Fix $p,q\in[0,\frac12]$. Let $(X,Y)$ be any coupling with
\(X\sim \Ber(p)\) and \(Y\sim \Ber(q).\) Define
\[a:=\Pr(X=1| Y=0),
\qquad b:=\Pr(X=1| Y=1).\]
Then $a,b\in[0,1]$, and since $\Pr(Y=1)=q$ and $\Pr(Y=0)=1-q$, we must have \(\Pr(Y=1)=(1-q)a+qb.\) Similarly, since $X\sim \Ber(p)$, we must have \((1-q)a+qb=p.\)
Conversely, for any $a,b\in[0,1]$ satisfying
\((1-q)a+qb=p,\) if we let $Y\sim \Ber(q)$ and define the conditional law of $X$ by
\[\Pr(X=1| Y=0)=a,\qquad \Pr(X=1| Y=1)=b,\] then $X\sim \Ber(p)$.
Thus the optimization problem is exactly
\[H_1(\Ber(p)\|\Ber(q))
= \min\Bigl\{(1-q)h_{\mathsf{b}}(a)+q\,h_{\mathsf{b}}(b):\ 0\le a,b\le 1,\ (1-q)a+qb=p\Bigr\}.\]
Set
\[F(a,b):=(1-q)h_{\mathsf{b}}(a)+q\,h_{\mathsf{b}}(b).\]
Since $h_{\mathsf{b}}$ is concave on $[0,1]$, $F$ is concave on $[0,1]^2$. The feasible set
\[\mathcal S:=\{(a,b)\in[0,1]^2:(1-q)a+qb=p\}\]
is a compact line segment. A concave function on a compact line segment attains its minimum at an endpoint of the segment. Hence it remains to determine the endpoints of $\mathcal S$ and compare the corresponding values of $F$.

Now we consider two cases. 
\paragraph{Case 1: $p\le q$.}
The line \((1-q)a+qb=p\) meets the sides $a=0$ and $b=0$ of the unit square at
\[(0,p/q) \qquad\text{and}\qquad\Big(\frac{p}{1-q},0\Big),\]
respectively. Since $p\le q$ and $p\le 1-q$, both points lie in $[0,1]^2$. These are the two endpoints of $\mathcal S$. Therefore
\[H_1(\Ber(p)\|\Ber(q))
=\min\left\{
q\,h_{\mathsf{b}}\!\left(\frac{p}{q}\right),
\,(1-q)\,h_{\mathsf{b}}\!\left(\frac{p}{1-q}\right)
\right\}.\]
We now show that the first term is the smaller one. Fix $p\in[0,\frac12]$ and define \(\phi(t):=t\,h_{\mathsf{b}}(p/t)\) for $t\ge p$.
A direct differentiation gives
\[\phi'(t)=\log_2\!\frac{t}{t-p}\ge 0,\]
so $\phi$ is increasing on $[p,\infty)$. Since $q\le 1-q$, we have 
\[q\,h_{\mathsf{b}}\!\left(\frac{p}{q}\right) =\phi(q)\le\phi(1-q)=(1-q)\,h_{\mathsf{b}}\!\left(\frac{p}{1-q}\right).\]
Hence
\[H_1(\Ber(p)\|\Ber(q))=q\,h_{\mathsf{b}}\!\left(\frac{p}{q}\right)
\qquad\text{when } p\le q.\]

\paragraph{Case 2: $p>q$.}
Again, since $p\le \frac12\le 1-q$, the line \((1-q)a+qb=p\)
meets the sides $b=0$ and $b=1$ at \[
\Big(\frac{p}{1-q},0\Big) \qquad\text{and}\qquad \Big(\frac{p-q}{1-q},1\Big),\]
and these are the two endpoints of $\mathcal S$. Therefore
\[H_1(\Ber(p)\|\Ber(q))=\min\left\{
(1-q)\,h_{\mathsf{b}}\!\left(\frac{p}{1-q}\right),\,(1-q)\,h_{\mathsf{b}}\!\left(\frac{p-q}{1-q}\right)\right\}.\]
Using the symmetry $h_{\mathsf{b}}(u)=h_{\mathsf{b}}(1-u)$, the first term can be rewritten as
\[
(1-q)\,h_{\mathsf{b}}\!\left(\frac{1-p-q}{1-q}\right).
\]
Now
\[
0\le \frac{p-q}{1-q}\le \frac{1-p-q}{1-q}\le \frac12,
\]
because $q\le p\le \frac12$. Since $h_{\mathsf{b}}$ is increasing on $[0,\frac12]$, it follows that
\[
h_{\mathsf{b}}\!\left(\frac{p-q}{1-q}\right)
\le
h_{\mathsf{b}}\!\left(\frac{1-p-q}{1-q}\right)
=
h_{\mathsf{b}}\!\left(\frac{p}{1-q}\right).
\]
Hence
\[
H_1(\Ber(p)\|\Ber(q))
=
(1-q)\,h_{\mathsf{b}}\!\left(\frac{p-q}{1-q}\right)
\qquad\text{when } p>q.
\]

Combining the two cases, for all $0\le p,q\le \frac12$,
\[
H_1(\Ber(p)\|\Ber(q))
=
\begin{cases}
q\,h_{\mathsf{b}}\!\left(\dfrac{p}{q}\right), & p\le q,\\[1.2ex]
(1-q)\,h_{\mathsf{b}}\!\left(\dfrac{p-q}{1-q}\right), & p>q.
\end{cases}
\]
\end{proof}

\section{Proofs of Results in Section~\ref{sec:main}}

We begin by proving Theorem~\ref{thm:homogeneous-exp}. However, instead of proving this theorem, we prove the more general result for the heterogeneous case which was stated in Theorem~\ref{thm:heterogeneous-exp}. Assuming $Q_i = Q$ in this theorem, we recover Theorem~\ref{thm:homogeneous-exp}.

\begin{proof}[Proof of Theorem~\ref{thm:heterogeneous-exp}]
We use the quantile coupling (inverse CDF) method: 
We generate both $X\sim P$ and the list $Y^m\coloneqq (Y_1, \dots, Y_m)\sim Q_1\otimes \dots\otimes Q_m$ from a common uniform variable $U\sim \mathsf{Unif}[0,1)$ via their respective inverse-CDF (interval partition) representations. The partition for $X$ has only $r-1$ boundary points. Each realization of $Y^m$ corresponds to an interval $I_j$ of length at most $M^m$. If $I_j$ does not contain any boundary point, then it lies entirely within a single $P$-interval, so $Y^m$ uniquely determines $X$. Hence, ambiguity arises only when $I_j$ intersects one of the $r-1$ boundaries, which occurs with probability at most $(r-1)M^m$.

Write $\mathcal X=\{x_1,\dots,x_r\}.$ We construct a coupling of $X\sim P$ and $Y_i\sim Q_i$ for $i\in[m]$.  For each $a=(a_1,\dots,a_m)\in \mathcal X^m$, define
\[
q_m(a):=\bigotimes_{i=1}^m Q_i(a_i).
\]
Choose an arbitrary ordering $\mathcal X^m=\{a^{(1)},\dots,a^{(N)}\},$ with  $N=r^m$, and partition $[0,1)$ into half-open intervals
\[
I_j:=\Bigl[\sum_{k<j} q_m(a^{(k)}),\ \sum_{k\le j} q_m(a^{(k)})\Bigr),
\]
for $j\in[N]$. Then $|I_j|=q_m(a^{(j)}),$ and if $U\sim \mathsf{Unif}[0,1)$, the rule
\[
Y^m:=(Y_1,\dots,Y_m)=a^{(j)}
\quad\text{whenever }U\in I_j
\]
produces
\[
\Pr(Y^m=a)=q_m(a)=\bigotimes_{i=1}^m Q_i(a_i),
\]
so that $Y_i\sim Q_i$.  Next, partition $[0,1)$ according to $P$:
\[
J_\ell:=\Bigl[\sum_{k<\ell} P(x_k),\ \sum_{k\le \ell} P(x_k)\Bigr),
\]
for $\ell\in[r]$.
Then $\{J_\ell\}_{\ell=1}^r$ is a partition of $[0,1)$ and $|J_\ell|=P(x_\ell).$ Consider the rule $X:=x_\ell$ whenever  $U\in J_\ell$. 
Then $X\sim P$. Thus $(X,Y_1,\dots,Y_m)$ is a valid coupling of $(P,Q_1,\dots,Q_m)$.
Now let
\[
\mathcal B:=\Bigl\{\sum_{k\le \ell} P(x_k):\ \ell=1,\dots,r-1\Bigr\}
\]
be the set of interior boundary points of the partition $\{J_\ell\}_{\ell=1}^r$.
Define the ambiguity event
\[
\mathcal A:=\Bigl\{U\in I_j\text{ for some }j\text{ such that }I_j\cap \mathcal B\neq\varnothing\Bigr\},
\]
which is a function of $Y^m$. Define $A$ as the indicator of $\mathcal A$. 
On the event $\{A=0\}$, the interval $I_j$ containing $U$ lies entirely inside a
single interval $J_\ell$ for some $\ell$. Hence, once $Y^m=a^{(j)}$ is known, the value of $X$ is uniquely determined. Therefore
\(H(X| Y^m,A=0)=0.\)

We next bound $\Pr(A=1)$. There are exactly $(r-1)$ points in $\mathcal B$, and each such point belongs to exactly one interval $I_j$. Hence
\begin{align*}
    \Pr(A=1) &\le (r-1)\max_{1\le j\le N}|I_j|\\
    & = (r-1)\max_{a\in\mathcal X^m} q_m(a)\\
    & \leq \min\{1, (r-1) M^m\} \eqqcolon \zeta_m.
\end{align*}

Since $A$ is a function of $Y^m$, we have $H(A| Y^m)=0,$ and therefore $H(X| Y^m) = H(X| Y^m,A)$. Thus, we can write
\begin{align*}
H(X| Y^m) &=\Pr(A=1)\,H(X| Y^m,A=1) +\Pr(A=0)\,H(X| Y^m,A=0)\\
& = \Pr(A=1)\,H(X| Y^m,A=1)\\
& \leq  \zeta_m \, H(X| Y^m,A=1)\\
& \leq  \zeta_m \, \log r.
\end{align*}
Since $H(P\|Q_1,\dots,Q_m)$ is the infimum of $H(X| Y^m)$ over all couplings, we conclude that
\(H(P\|Q_1,\dots,Q_m)\le \zeta_m\log r.\)

\end{proof}

\begin{proof}[Proof of Theorem~\ref{thm:equalZero}]
We first prove the result under the stronger assumption $\ell=1$, i.e., $Q(x)>0$ for all $x\in[r]$, and then extend to general $\ell$.

Fix $n\ge 1$, and write $\mu_n:=Q^{*n}$ for the $n$-fold convolution of $Q$ on $[r]$ and $q_{\min}$ for the $\min_{x\in [r]} Q(x)$. Since $Q(x)\ge q_{\min}$ for every $x$, the transition kernel
\[
K(x,y):=Q(y-x), \qquad x,y\in[r],
\]
satisfies the minorization
\[
K(x,\cdot)\ge rq_{\min}\,U_r(\cdot),
\qquad\forall x\in[r],
\]
where $U_r$ is the law of $\mathsf{Uniform}([r])$.
Since $U_r$ is an invariant measure for $K$, the Doeblin contraction bound gives
\[
\TV(\mu_n,U_r)\le (1-rq_{\min})^n.
\]

Since $U_r$ is shift-invariant, it follows that for every $j\in[r]$,
\begin{equation}\label{eq:couplingproof1}
  \TV(\mu_n,\tau_{-j}\mu_n)
\le 2\,\TV(\mu_n,U_r)
\le 2(1-rq_{\min})^n,  
\end{equation}

where $\tau_{j}$ is the shift operator.

Now fix $j\in[r]$, and let $(A_j,B_j)$ be a maximal coupling of $\mu_n$ and $\tau_{-j}\mu_n$.
Define
\[
T_j:=B_j+j \pmod r
\qquad \text{and} \qquad
D_j:=T_j-A_j \pmod r.
\]
Then both $A_j$ and $T_j$ have law $\mu_n$, and on the event $\{A_j=B_j\}$ we have $D_j=j$.
Hence
\begin{align*}
\TV(\mathsf{Law}(D_j),\delta_j)
&= \Pr(D_j \neq j)\\
&\le \Pr(A_j\neq B_j)\\
&= \TV(\mu_n,\tau_{-j}\mu_n)\\
&\le 2(1-rq_{\min})^n.
\end{align*}

Let $M_n$ be the $r\times r$ matrix whose $j$th column is $\mathsf{Law}(D_j)$.
Thus we need to show that there exists $w\in \Delta([r])$ such that $M_n w = P$.
Since each column of $M_n$ is within total variation distance at most $2(1-rq_{\min})^n$ of the corresponding simplex vertex $\delta_j$, we have
\[
\|M_n-I\|_1 \le 4(1-rq_{\min})^n,
\]
where $\|\cdot\|_1$ is the operator norm induced by $\ell_1$ norm, i.e., the maximum column sum norm. Assume now that
\[
(1-rq_{\min})^n<\frac{p_{\min}}{8},
\]
with $p_{\min}>0$. Since $p_{\min}\le 1$, this implies $\|M_n-I\|_1<1/2$, so $M_n$ is invertible and
\[
\|M_n^{-1}\|_1\le \frac{1}{1-\|M_n-I\|_1}\le 2.
\]
Let \(w:=M_n^{-1}P.\) Then, we can write 
\begin{align*}
\|w-P\|_1 & = \|M_n^{-1}(I-M_n)P\|_1\\
& \le \|M_n^{-1}\|_1\,\|I-M_n\|_1\,\|P\|_1\\
& \le 8(1-rq_{\min})^n\\
& < p_{\min}.
\end{align*}
Therefore, for every $j\in[r]$,
\[
w(j)\ge P(j)-\|w-P\|_1 >0.
\]
Also, since every column of $M_n$ sums to $1$, we have \(w\in\Delta([r])\).
Thus
\[
P=\sum_{j\in[r]} w(j)\mathsf{Law}(D_j).
\]

We now realize each $D_j$ from a valid coupling of $2n$ copies of $Q$.
For each $s\in[r]$, let $\alpha_s$ be the conditional law of $(Z_1,\dots,Z_n)$ given that
\[
Z_1+\cdots+Z_n\equiv s \pmod r,
\]
where $Z_1,\dots,Z_n\stackrel{iid}{\sim} Q$.
This is well defined because $\mu_n(s)>0$ for all $s$.
Given $(A_j,T_j)=(s,t)$, sample $(Y_1,\dots,Y_n)\sim \alpha_s$ and $(Y_{n+1},\dots,Y_{2n})\sim \alpha_t$ independently.
Let $\pi^{(j)}$ be the resulting law on $([r])^{2n}$.
Although the conditional law \(\alpha_s\) does not itself have one-dimensional marginals \(Q\), the unconditional first block under \(\pi^{(j)}\) has law
\[
\sum_{s\in[r]} \Pr(A_j=s)\,\alpha_s
=
\sum_{s\in[r]} \mu_n(s)\,\alpha_s
=
Q^{\otimes n},
\]
by the law of total probability. Similarly, the unconditional second block has law
\[
\sum_{t\in[r]} \Pr(T_j=t)\,\alpha_t
=
\sum_{t\in[r]} \mu_n(t)\,\alpha_t
=
Q^{\otimes n},
\]
since \(T_j\sim \mu_n\). Therefore each coordinate under $\pi^{(j)}$ has marginal $Q$. 

Now define
\[
f(y_1,\dots,y_{2n})
:=
\Bigl(\sum_{i=n+1}^{2n} y_i\Bigr)-\Bigl(\sum_{i=1}^n y_i\Bigr)
\pmod r.
\]
Under $\pi^{(j)}$, the first block sum is $A_j$ and the second is $T_j$, so
\[
f_\#\pi^{(j)}=\mathsf{Law}(D_j).
\]

Finally, let
\[
\pi:=\sum_{j\in[r]} w(j)\pi^{(j)}.
\]
Then every one-dimensional marginal of $\pi$ is $Q$, and
\[
f_\#\pi=P.
\]
Therefore, if $(Y_1,\dots,Y_{2n})\sim \pi$ and we set \(X:=f(Y_1,\dots,Y_{2n}),\)
then $H(X| Y_1,\dots,Y_{2n})=0$.
Hence \(H_{2n}(P\|Q)=0.\)

If $m\ge 2n$, we may append extra coordinates with marginal $Q$ arbitrarily. Then $X$ remains a deterministic function of the full list, so $H_m(P\|Q)=0$ as well.
Thus any $n_0$ satisfying
\[
(1-rq_{\min})^{n_0}<\frac{p_{\min}}{8}
\]
yields the conclusion with $m_0=2n_0$. Thus, one may choose $m_0$ to be 
\[m_0=2\left\lceil
\frac{\log(8/p_{\min})}{-\log(1-rq_{\min})}
\right\rceil. %= \Theta\big(\log1/p_{\min}\big).
\]

It remains to show the result for general $\ell > 1$. If $Q$ does not have full support, let $\ell$ be such that $Q^{*\ell}$ has full support.
Group the variables into blocks of length $\ell$:
\[
W_t := Y_{(t-1)\ell+1}+\cdots+Y_{t\ell}\pmod r.
\]
Then each $W_t$ has distribution $Q^{*\ell}$.
Applying the above construction to $(W_1,\dots,W_{2n})$ yields exact recovery using $2n$ blocks, i.e., $2n\ell$ original samples.
This proves the claim with
\[
m_0 = 2\ell n_0
=
2\ell \left\lceil
\frac{\log(8/p_{\min})}{-\log(1-r\widetilde q_{\min})}
\right\rceil.
\]
\end{proof}

% %%%%%%%%%%%%%%%%%%%%%%%%%%%%%%%%%%%%%%%%%%%%%%%%%%%%%%%%%%%%

\begin{proof}[Proof of Corollary~\ref{cor:equalZero-arbitrary-support}]
Let \(S:=\supp(P)\), \(s:=|S|\), and \(t:=r-s\). Fix any \(x_0\in S\) and define
\(\zeta:=\frac{p_{\min}}{2(t+1)}\). Construct \(\widetilde P\in\Delta([r])\) by
\[
\widetilde P(x)=
\begin{cases}
\zeta, & x\notin S,\\
P(x), & x\in S\setminus\{x_0\},\\
P(x_0)-t\zeta, & x=x_0.
\end{cases}
\]
Then \(\widetilde P\) is a valid distribution with full support: the total mass is preserved, and since \(t\zeta< p_{\min}\le P(x_0)\), we have \(\widetilde P(x_0)>0\), while all other entries are clearly positive. Moreover, \(\widetilde p_{\min}:=\min_x \widetilde P(x)\ge \zeta=\frac{p_{\min}}{2(r-s+1)}\).

Now define the deterministic map \(g:[r]\to[r]\) by
\[g(x)=\begin{cases}
x, & x\in S,\\
x_0, & x\notin S.
\end{cases}\]
We can now show that \(g_{\#}\widetilde P=P\). Indeed, if \(x\in S\setminus\{x_0\}\), then \(g^{-1}(\{x\})=\{x\}\), so
\(g_{\#}\widetilde P(x)=\widetilde P(x)=P(x).\)
For \(x=x_0\), we have \(g^{-1}(\{x_0\})=\{x_0\}\cup([r]\setminus S),\)
hence
\[g_{\#}\widetilde P(x_0)=\widetilde P(x_0)+\sum_{x\notin S}\widetilde P(x)=\bigl(P(x_0)-t\zeta\bigr)+t\zeta
=P(x_0).\]
Finally, if \(x\notin S\), then \(g_{\#}\widetilde P(x)=0=P(x)\). Thus $g_{\#}\widetilde P=P.$

Applying Theorem~\ref{thm:equalZero} to $\widetilde P$ gives \(H_m(\widetilde P\|Q)=0\) for all
\[
m \ge 2\ell\left\lceil \frac{\log(8/\widetilde p_{\min})}{-\log(1-r\widetilde q_{\min})} \right\rceil.
\]
Using \(\widetilde p_{\min}\ge p_{\min}/(2(r-s+1))\), it suffices to take
\[
m \ge 2\ell\left\lceil \frac{\log\bigl(16(r-s+1)/p_{\min}\bigr)}{-\log(1-r\widetilde q_{\min})} \right\rceil.
\]

For every such \(m\), there exists a coupling \((\widetilde X,Y_1,\dots,Y_m)\) with \(\widetilde X\sim \widetilde P\), \(Y_i\sim Q\), and \(\widetilde X=f(Y^m)\) a.s. Define \(X:=g(\widetilde X)=g(f(Y^m))\). Then \(X\sim P\), each \(Y_i\sim Q\), and \(X\) is a deterministic function of \(Y^m\). Hence \(H(X| Y^m)=0\), proving \(H_m(P\|Q)=0\).
\end{proof}

\begin{proof}[Proof of Proposition~\ref{prop:dependent-exp}]
We first prove the claim for $\ell=1$. Write
\[
\mu_n:=Q^{*n},
\qquad
q_{\min}:=\min_{x\in[r]}Q(x),
\qquad
\beta:=1-rq_{\min}.
\]
As in the proof of Theorem~\ref{thm:equalZero} (specifically equation \eqref{eq:couplingproof1}), for each $j\in[r]$ there is a
coupling $\pi^{(j)}$ of $2n$ coordinates, each with marginal $Q$, such that
the decoder
\[
f(y_1,\ldots,y_{2n})
:=
\sum_{i=n+1}^{2n}y_i-\sum_{i=1}^n y_i
\pmod r
\]
satisfies
\[\Pr_{\pi^{(j)}}\bigl(f(Y^{2n})\neq j\bigr)\le2\beta^n .\]
Now draw $J\sim P$. Conditional on $J=j$, sample
$Y^{2n}\sim \pi^{(j)}$, and set $X:=J$. Then $X\sim P$, and since each
$\pi^{(j)}$ has one-dimensional marginals equal to $Q$, every coordinate
$Y_i$ also has marginal $Q$ under the mixture. Thus
$(X,Y_1,\ldots,Y_{2n})$ is a valid coupling.

Let $\widehat X:=f(Y^{2n})$. The decoder error satisfies
\[
\Pr(\widehat X\neq X)
=
\sum_{j\in[r]}P(j)\Pr_{\pi^{(j)}}(f(Y^{2n})\neq j)
\le
2\beta^n .
\]
Therefore, with
\(\delta_n:=\min\{1,2\beta^n\},\)
Fano's inequality gives
\[
H(X| Y^{2n})
\le
h_{\mathsf b}(\delta_n)+\delta_n\log r .
\]
Taking the infimum over all valid couplings yields the desired bound for
$\ell=1$.

For general $\ell$, apply the same argument to the block variables
\[
W_t:=\sum_{i=(t-1)\ell+1}^{t\ell}Y_i \pmod r,
\]
whose marginal law is $Q^{*\ell}$. Lifting each block-level coupling back to
$\ell$ original $Q$-coordinates by conditioning on the block sum preserves the
one-dimensional marginals. Replacing $q_{\min}$ by
\[
\widetilde q_{\min}:=\min_{x\in[r]}Q^{*\ell}(x)
\]
gives $\beta=1-r\widetilde q_{\min}$ and $m=2n\ell$, completing the proof.
\end{proof}

\section{Finite-Sample Exact Recovery Beyond Modular Sums}\label{app:ExactRecovery}
Theorems~\ref{thm:equalZero} and \ref{thm:equalZero-hetero} rely on modular block sums of independent $Q$-samples to achieve the finite-sample exact recovery. However, their proofs reveal that the modular-sum structure is not essential. What
matters is a block-level richness condition: after aggregating sufficiently
many $Q$-samples, the induced block distribution must be able to approximate
the simplex vertices well enough to span the target law. In this section, we formalize this generalization and provide examples of such richness conditions.   
\begin{theorem}\label{thm:hetero-unified}
Let $P\in\Delta([r])$ have full support, and let $Q_1,\dots,Q_m$ be distributions on arbitrary finite alphabets $\calY_1,\dots,\calY_m$. Fix an integer $\ell\ge1$ and assume, for simplicity, that $\ell$ divides $m$. For each block $t\in[m/\ell]$, let
$$
B_t=\{(t-1)\ell+1,\dots,t\ell\}.
$$
Suppose there exist deterministic maps
$$
\phi_t:\prod_{i\in B_t}\calY_i\to [r],
\qquad t\in [m/\ell],$$
such that, if $(Y_i)_{i\in B_t}\sim\bigotimes_{i\in B_t} Q_i$ and
$Z_t=\phi_t(\{Y_i\}_{i\in B_t}),$
then the law $R_t$ of $Z_t$ satisfies
$R_t(x)\ge \widetilde q_{\min}>0,$
for all $x\in[r]$ and for $t\in [m/\ell]$. Let
$$n_*:= \left\lceil \frac{\log(8/p_{\min})}{-\log(1-r\widetilde q_{\min})} \right\rceil,$$
with the convention that $n_*=1$ when $r\widetilde q_{\min}=1$. If $\frac{m}{\ell}\ge 2n_*,$ then $H(P\|Q_1,\dots,Q_m)=0.$ Equivalently, we have 
$$ m_0(P\|Q_1,\dots,Q_m) \le 2\ell \left\lceil \frac{\log(8/p_{\min})}{-\log(1-r\widetilde q_{\min})} \right\rceil.$$
\end{theorem}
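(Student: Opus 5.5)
The plan is to repeat the proof of Theorem~\ref{thm:equalZero}, with two changes. Modular sums of $Q$-samples are replaced by the block variables $Z_t=\phi_t(\{Y_i\}_{i\in B_t})$. The homogeneous convolution $Q^{*n}$ is replaced by an inhomogeneous convolution of the block laws $R_t$. Set $n:=n_*$ and $\beta:=1-r\widetilde q_{\min}$. Since $m/\ell\ge 2n$, we can use the first $2n$ blocks and append any remaining coordinates independently with their prescribed marginals. Appending does not affect the conclusion, because $X$ will already be a deterministic function of the first $2n\ell$ coordinates.

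\textbf{Step 1 (near-uniformity of block sums).} Define the laws
\[
\mu_A:=R_1*\cdots*R_n
\quad\text{and}\quad
\mu_T:=R_{n+1}*\cdots*R_{2n}
\pmod r .
\]
These are the laws of the block-variable sums $A=\sum_{t\le n}Z_t$ and $T=\sum_{t=n+1}^{2n}Z_t \pmod r$ when the blocks are independent. The inhomogeneous Doeblin step goes as follows. Since $R_t\ge \widetilde q_{\min}$ pointwise, we can write $R_t=r\widetilde q_{\min}U_r+\beta R_t'$ for some probability law $R_t'$; when $\beta=0$ we simply have $R_t=U_r$. Convolution with $U_r$ gives $U_r$, so for any $\nu$,
\[
\TV(\nu*R_t,U_r)=\beta\,\TV(\nu*R_t',U_r)\le\beta\,\TV(\nu,U_r).
\]
Iterating gives $\TV(\mu_A,U_r)\le\beta^n$ and $\TV(\mu_T,U_r)\le\beta^n$. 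By shift-invariance of $U_r$, we get $\TV(\mu_A,\tau_{-j}\mu_T)\le 2\beta^n$ for every $j\in[r]$. This is the only genuinely new ingredient. I expect it to be the main point to check carefully, in particular the convention $n_*=1$ when $\beta=0$ and the fact that $A$ and $T$ now have different laws.

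\textbf{Step 2 (perturbed vertices).} For each $j$, take a maximal coupling $(A_j,B_j)$ of $\mu_A$ and $\tau_{-j}\mu_T$. Set $T_j:=B_j+j$, so that $T_j\sim\mu_T$, and $D_j:=T_j-A_j \pmod r$. Then $\Pr(D_j\ne j)\le 2\beta^n$. Next, form the matrix $M_n$ whose columns are $\mathsf{Law}(D_j)$. From the choice of $n_*$, we have $\beta^n<p_{\min}/8$, so $\|M_n-I\|_1\le 4\beta^n<1/2$. The Neumann-series argument from the proof of Theorem~\ref{thm:equalZero} then applies verbatim. It shows that $w=M_n^{-1}P$ is a probability vector with strictly positive entries, and that $P=\sum_j w(j)\mathsf{Law}(D_j)$.

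\textbf{Step 3 (lifting to the original coordinates).} Let $\alpha_s$ be the law of $(Y_i)_{i\in B_1\cup\cdots\cup B_n}\sim\bigotimes Q_i$ conditioned on $\sum_{t\le n}\phi_t(\cdot)\equiv s$. Define $\alpha'_s$ analogously for blocks $n+1,\dots,2n$. Both are well defined because $\mu_A$ and $\mu_T$ are fully supported: each is a convolution of fully supported laws. Given $(A_j,T_j)=(s,t)$, sample the first half from $\alpha_s$ and the second half from $\alpha'_t$, independently. By the law of total probability, each half then has the unconditional product law, so every $Y_i\sim Q_i$. Mixing over $j$ with weights $w(j)$ preserves all marginals. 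Finally, the decoder
\[
f(y)=\sum_{t=n+1}^{2n}\phi_t(y_{B_t})-\sum_{t=1}^{n}\phi_t(y_{B_t}) \pmod r
\]
has pushforward $P$. Setting $X:=f(Y^m)$ therefore gives a valid coupling with $H(X|Y^m)=0$, and hence $H(P\|Q_1,\dots,Q_m)=0$ together with the stated bound on $m_0$.
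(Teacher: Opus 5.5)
Your proposal is correct and follows the paper's proof essentially step for step: the Doeblin decomposition $R_t=r\widetilde q_{\min}U_r+\beta\overline R_t$, maximal couplings of the two half-sum laws after a shift by $j$, Neumann-series inversion of $M$, and lifting by conditioning on block sums. Your one-shot conditioning of each half-tuple on the event $\sum_t\phi_t\equiv s$ is equivalent to the paper's two-stage lifting via $\lambda^{(1)}_s,\lambda^{(2)}_t$ and $\alpha_{t,z}$. One small quibble: the ceiling in $n_*$ only guarantees $\beta^{n_*}\le p_{\min}/8$, not strict inequality, so with $\|M^{-1}\|_1\le 2$ you get $w\ge 0$ rather than strict positivity, and $w\ge 0$ is all the argument needs.
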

\begin{proof}
Let $L=m/\ell$ and assume $L\ge 2n_*$. We use the first $2n_*$ blocks and append the remaining coordinates independently at the end.

For each block $t$, define
$$
Z_t=\phi_t((Y_i)_{i\in B_t}),
$$
where $(Y_i)_{i\in B_t}\sim\bigotimes_{i\in B_t}Q_i$, and let $R_t$ be the law of $Z_t$. By assumption,
$R_t(x)\ge \widetilde q_{\min}$ for all $x\in[r]$.
Set $\beta:=1-r\widetilde q_{\min}\in[0,1).$
We first show that sums of the $Z_t$'s are close to uniform. Since each $R_t$ dominates $\widetilde q_{\min}$ pointwise, we may write
$$R_t=r\widetilde q_{\min} U_r+\beta \overline R_t,$$
where $U_r$ is uniform on $[r]$ and $\overline R_t\in\Delta([r])$. Since convolution with $U_r$ gives $U_r$, it follows that for any collection of $n$ blocks,
$$
\TV(R_{t_1}*\cdots*R_{t_n},U_r)\le \beta^n.
$$
Let $\mu_1:=R_1*\cdots*R_{n_*}$ and  $\mu_2:=R_{n_*+1}*\cdots*R_{2n_*}.$
Then
$$
\TV(\mu_1,U_r)\le \beta^{n_*},
\qquad
\TV(\mu_2,U_r)\le \beta^{n_*}.
$$
For each $b\in[r]$, let $(A_b,B_b)$ be a maximal coupling of $\mu_1$ and $\tau_{-b}\mu_2$, where $\tau_{-b}$ denotes shift by $-b$ modulo $r$. Define
$$
T_b:=B_b+b \pmod r,
\qquad
D_b:=T_b-A_b \pmod r.
$$
Then $A_b\sim\mu_1$, $T_b\sim\mu_2$, and on the event $\{A_b=B_b\}$ we have $D_b=b$. Hence
$$
\TV(\mathsf{Law}(D_b),\delta_b)
\le
\Pr(A_b\ne B_b)
=
\TV(\mu_1,\tau_{-b}\mu_2)
\le
2\beta^{n_*}.
$$
Now let $M$ be the $r\times r$ matrix whose $b$th column is $\mathsf{Law}(D_b)$. Since the $b$th column is within total variation distance at most $2\beta^{n_*}$ of $\delta_b$, we have
$$
\|M-I\|_1\le 4\beta^{n_*}.
$$
By the choice of $n_*$, we have $\beta^{n_*}\leq \frac{p_{\min}}{8},$
unless $\beta=0$, in which case the same conclusion is immediate. Thus $\|M-I\|_1<1/2$, so $M$ is invertible by the Neumann series and
$\|M^{-1}\|_1\le 2.$
Let $w:=M^{-1}P$. Then
$$
\|w-P\|_1
=
\|M^{-1}(I-M)P\|_1
\le
\|M^{-1}\|_1\,\|I-M\|_1\,\|P\|_1
\le
8\beta^{n_*}
\leq 
p_{\min}.
$$
Thus $w(b)\ge P(b)-\|w-P\|_1\ge0$ for all $b$. Since the columns of $M$ sum to one, $w$ also sums to one, and hence $w\in\Delta([r])$. Consequently,
$$
P=\sum_{b\in[r]}w(b)\mathsf{Law}(D_b).
$$

It remains to realize this construction as a coupling of the original variables with the prescribed marginals. For each block $t$ and each $z\in[r]$, let $\alpha_{t,z}$ denote the conditional law of $(Y_i)_{i\in B_t}$ under $\bigotimes_{i\in B_t}Q_i$ given $\phi_t((Y_i)_{i\in B_t})=z$. This conditional law is well-defined because $R_t(z)\ge \widetilde q_{\min}>0$.

For the first group of $n_*$ blocks, let $\lambda_s^{(1)}$ be the conditional law of $(Z_1,\dots,Z_{n_*})$ given
$$
Z_1+\cdots+Z_{n_*}=s \pmod r
$$
under the product law $R_1\otimes\cdots\otimes R_{n_*}$. Define $\lambda_s^{(2)}$ analogously for $(Z_{n_*+1},\dots,Z_{2n_*})$.

Given $(A_b,T_b)=(s,t)$, sample
$$
(Z_1,\dots,Z_{n_*})\sim \lambda_s^{(1)},
\qquad
(Z_{n_*+1},\dots,Z_{2n_*})\sim \lambda_t^{(2)}.
$$
Then, conditional on each realized $Z_j=z$, sample the corresponding original block $(Y_i)_{i\in B_j}$ from $\alpha_{j,z}$, independently across blocks. This produces a coupling $\pi^{(b)}$ of the first $2n_*\ell$ coordinates.

Under $\pi^{(b)}$, each block has its original product marginal. Indeed, averaging over the block output gives
$$
\sum_{z\in[r]}R_j(z)\alpha_{j,z}
=
\bigotimes_{i\in B_j}Q_i.
$$
Thus every coordinate $Y_i$ has its prescribed marginal $Q_i$.
Now, define
$$f(Y_1,\dots,Y_{2n_*\ell})=\sum_{j=n_*+1}^{2n_*} Z_j-\sum_{j=1}^{n_*} Z_j\pmod r.$$
Under $\pi^{(b)}$, the law of $f$ is $\mathsf{Law}(D_b)$.

Finally, mix the couplings $\pi^{(b)}$ according to $w$:
$$
\pi_0:=\sum_{b\in[r]}w(b)\pi^{(b)}.
$$
Then every coordinate marginal remains correct and
$f_{\#}\pi_0=P.$
Thus, if $(Y_1,\dots,Y_{2n_*\ell})\sim\pi_0$ and $X=f(Y_1,\dots,Y_{2n_*\ell})$, then $X\sim P$ and
$H(X| Y_1,\dots,Y_{2n_*\ell})=0.$

If $m>2n_*\ell$, append the remaining coordinates $Y_{2n_*\ell+1},\dots,Y_m$ independently with their prescribed marginals $Q_{2n_*\ell+1},\dots,Q_m$. This preserves all marginal constraints, and since $X$ is already a deterministic function of the first $2n_*\ell$ coordinates, it remains a deterministic function of the full tuple. Therefore
$H(P\|Q_1,\dots,Q_m)=0.$
\end{proof}

The block-richness condition in Theorem~\ref{thm:hetero-unified} may look
abstract, but it can be verified under very primitive assumptions. The next
corollary shows that no common alphabet, convolution structure, or modular
operation is needed. It is enough that each marginal has two moderately likely
atoms. By grouping $\lceil\log_2 r\rceil$ such variables, we can manufacture an
$[r]$-valued block variable with full support.

\begin{corollary}[Two-heavy-atoms condition]\label{cor:two-heavy-atoms}
Let $P\in\Delta([r])$ have full support, and let $Q_1,\dots,Q_m$ be distributions on arbitrary finite alphabets. Assume that there exists $a>0$ such that each $Q_i$ has at least two atoms of mass at least $a$. Let
$$
k:=\lceil\log_2 r\rceil,
\qquad
q_0:=a^k.
$$
Assume, for simplicity, that $k$ divides $m$. If
$$
\frac{m}{k}
\ge
2\left\lceil
\frac{\log(8/p_{\min})}{-\log(1-rq_0)}
\right\rceil,
$$
then
$H(P\|Q_1,\dots,Q_m)=0.$
Equivalently,
\begin{equation}\label{eq:2heavySC_corollary}
m_0(P\|Q_1,\dots,Q_m)
\le
2\lceil\log_2 r\rceil
\left\lceil
\frac{\log(8/p_{\min})}
{-\log\!\bigl(1-r a^{\lceil\log_2 r\rceil}\bigr)}
\right\rceil.
\end{equation}
\end{corollary}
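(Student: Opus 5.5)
The plan is to reduce the statement to Theorem~\ref{thm:hetero-unified} with block length $\ell=k=\lceil\log_2 r\rceil$ and $\widetilde q_{\min}=q_0=a^k$. We then only need to construct, for each block $B_t$, a deterministic map $\phi_t:\prod_{i\in B_t}\calY_i\to[r]$ whose pushforward under $\bigotimes_{i\in B_t}Q_i$ puts mass at least $a^k$ on every $x\in[r]$. Once this is done, the threshold condition $m/k\ge 2n_*$ is exactly the hypothesis of the corollary, and \eqref{eq:2heavySC_corollary} follows by substitution.

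The map is built in three steps.
\begin{enumerate}[label=(\roman*)]
\item For each $i$, fix two distinct atoms $u_i\neq v_i$ of $Q_i$ with $Q_i(u_i),Q_i(v_i)\ge a$.
\item Define a binary digit $\beta_i:\calY_i\to\{0,1\}$ with $\beta_i(u_i)=0$, $\beta_i(v_i)=1$, and every other symbol sent to $0$; any fixed convention works.
\item For a block $B_t=\{i_1,\dots,i_k\}$, form the integer $N=\sum_{j=1}^k 2^{j-1}\beta_{i_j}(y_{i_j})\in\{0,\dots,2^k-1\}$, and set $\phi_t(y)=\min\{N,r-1\}+1\in[r]$.
\end{enumerate}
Because $2^k\ge r$, every $x\in[r]$ is the image of at least one bit string $b\in\{0,1\}^k$; for instance, $x-1$ written in binary. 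The event that $y_{i_j}$ equals the atom encoding bit $b_j$ for every $j$ (namely $u_{i_j}$ if $b_j=0$ and $v_{i_j}$ if $b_j=1$) has product probability at least $a^k$ and forces $\phi_t=x$. Hence $R_t(x)\ge a^k=q_0$ for all $x$, as required.

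We also need $rq_0\le1$, so that $\beta=1-rq_0\in[0,1)$ and the logarithm in the bound is well defined. This is automatic, since the $R_t(x)\ge q_0$ sum to $1$. The degenerate case $rq_0=1$ is covered by the convention $n_*=1$ in Theorem~\ref{thm:hetero-unified}.

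The only genuine care point is the collapse of excess codewords. We must make sure that merging the $2^k-r$ extra codewords, and the non-heavy symbols, into existing labels can only increase masses and never leaves a label uncovered. The explicit binary-expansion argument above handles this. Everything else is a direct invocation of Theorem~\ref{thm:hetero-unified}. Divisibility $k\mid m$ is assumed; otherwise, as remarked after Theorem~\ref{thm:equalZero-hetero}, the leftover coordinates are appended independently.
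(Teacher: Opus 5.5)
Your proposal is correct and follows essentially the same route as the paper: you build a block map on $k=\lceil\log_2 r\rceil$ coordinates that sends at least one heavy product atom (mass at least $a^k$) to each symbol of $[r]$, and then invoke Theorem~\ref{thm:hetero-unified} with $\ell=k$ and $\widetilde q_{\min}=a^k$. Your binary-expansion-with-clamping map is an explicit instance of the paper's ``assign $r$ heavy atoms injectively, the rest arbitrarily'' construction, and your justification of $rq_0\le 1$ via $\sum_x R_t(x)=1$ is a slightly cleaner alternative to the paper's argument from $a\le 1/2$ and $r\le 2^k$.
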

\begin{proof}
Let $k=\lceil\log_2 r\rceil$. Consider any block of $k$ coordinates. Since each $Q_i$ has at least two atoms of mass at least $a$, the product distribution over the block has at least $2^k\ge r$ atoms, each of probability at least $a^k$.

Choose $r$ such atoms and assign them injectively to the symbols in $[r]$. Assign all remaining atoms arbitrarily to symbols in $[r]$. This defines a deterministic map
$$
\phi_t:\prod_{i=(t-1)k+1}^{tk}\calY_i\to[r].
$$
Let $R_t$ be the law of the resulting block output. Since each symbol in $[r]$ receives at least one atom of mass at least $a^k$, we have $R_t(x)\ge a^k=q_0$ for all $x\in[r]$. The conclusion follows from Theorem~\ref{thm:hetero-unified} with $\ell=k$ and $\widetilde q_{\min}=q_0$.

Also note that since each $Q_i$ has two atoms of mass at least $a$, necessarily $a\le 1/2$.
As $r\le 2^k$, we have $rq_0=ra^k\le r2^{-k}\le 1$, so the denominator in equation~\eqref{eq:2heavySC_corollary} is well-defined, with the case $rq_0=1$ corresponding to the uniform block-output case.
\end{proof}

The intuition is simple. Each coordinate contributes a binary choice with
probability at least $a$ on each branch. A block of
$k=\lceil\log_2 r\rceil$ coordinates therefore contains at least $2^k\ge r$
product atoms, each with mass at least $a^k$. Assigning $r$ of these atoms to
the symbols in $[r]$ produces a block map whose pushforward has minimum mass at
least $q_0=a^k$. Theorem~\ref{thm:hetero-unified} then applies. Thus exact
recovery does not require algebraic structure; enough local randomness can be
grouped and binned into the block-level richness needed for the dependent
coupling.

Next, we show that the convolution structure presented in Theorems~\ref{thm:equalZero} and \ref{thm:equalZero-hetero} are special case of Theorem~\ref{thm:hetero-unified}. 

\begin{corollary}[Common block-convolution law]\label{cor:block-convolution-richness}
Let $P\in\Delta([r])$ have full support, and let $Q_1,\dots,Q_m\in\Delta([r])$. Fix $\ell\ge1$ and assume $\ell$ divides $m$. Suppose there exists a fully supported distribution $\widetilde Q\in\Delta([r])$ such that, for every $t=1,\dots,m/\ell$,
$$
Q_{(t-1)\ell+1}*\cdots*Q_{t\ell}=\widetilde Q.
$$
Let $\widetilde q_{\min}=\min_{x\in[r]}\widetilde Q(x)$. If
$$
\frac{m}{\ell}
\ge
2\left\lceil
\frac{\log(8/p_{\min})}{-\log(1-r\widetilde q_{\min})}
\right\rceil,
$$
then
$H(P\|Q_1,\dots,Q_m)=0.$
Equivalently,
$$
m_0(P\|Q_1,\dots,Q_m)
\le
2\ell
\left\lceil
\frac{\log(8/p_{\min})}{-\log(1-r\widetilde q_{\min})}
\right\rceil.
$$
\end{corollary}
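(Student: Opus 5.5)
This corollary is a direct instance of Theorem~\ref{thm:hetero-unified}: we only have to exhibit block maps $\phi_t$ that satisfy its block-richness hypothesis. Take the common alphabet $\calY_i=[r]$ for every $i$, use the same block partition $B_t=\{(t-1)\ell+1,\dots,t\ell\}$, and define
\[
\phi_t\bigl((y_i)_{i\in B_t}\bigr):=\sum_{i\in B_t} y_i \pmod r,
\]
the modular block sum.

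The first step is to identify the law $R_t$ of $Z_t=\phi_t((Y_i)_{i\in B_t})$ when $(Y_i)_{i\in B_t}\sim\bigotimes_{i\in B_t}Q_i$. The coordinates in a block are independent under this product law. The law of a sum of independent $[r]$-valued variables modulo $r$ is the cyclic convolution of their laws. Hence
\[
R_t=Q_{(t-1)\ell+1}*\cdots*Q_{t\ell},
\]
and by the hypothesis of the corollary this equals $\widetilde Q$ for every $t$. Consequently $R_t(x)=\widetilde Q(x)\ge \widetilde q_{\min}$ for all $x\in[r]$ and all $t\in[m/\ell]$. Moreover $\widetilde q_{\min}>0$, because $\widetilde Q$ is fully supported.

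The second step is to check the remaining hypotheses of Theorem~\ref{thm:hetero-unified}. Three are immediate: $P$ has full support, $\ell$ divides $m$, and $m/\ell\ge 2n_*$ with $n_*$ defined by the same expression as in the corollary. The degenerate case $r\widetilde q_{\min}=1$ occurs exactly when $\widetilde Q$ is uniform. There the theorem's convention $n_*=1$ applies, and the stated condition with the ceiling read under that convention is consistent. Invoking Theorem~\ref{thm:hetero-unified} then gives $H(P\|Q_1,\dots,Q_m)=0$, together with the stated bound on $m_0$.

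There is no genuine obstacle in this argument. The only point needing care is that the convolution in the hypothesis must be the cyclic convolution on $[r]$, i.e.\ the law of the sum modulo $r$, matching the convention used for $Q^{*\ell}$ in Theorem~\ref{thm:equalZero}. With that reading, $R_t=\widetilde Q$ holds by definition. Specializing to $Q_i=Q$ and $\widetilde Q=Q^{*\ell}$ recovers Theorem~\ref{thm:equalZero}, and the corollary as stated is exactly Theorem~\ref{thm:equalZero-hetero}.
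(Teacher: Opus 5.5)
Your proposal is correct and takes the same approach as the paper: it instantiates Theorem~\ref{thm:hetero-unified} with the modular block-sum map $\phi_t$. Under the product law, the pushforward of that map is the block convolution $\widetilde Q$, so $R_t(x)\ge\widetilde q_{\min}>0$. Your handling of the degenerate case $r\widetilde q_{\min}=1$, read under the theorem's convention $n_*=1$, is a point the paper leaves implicit and that the literal formula would otherwise mis-state.
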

\begin{proof}
Apply Theorem~\ref{thm:hetero-unified} with
$$
\phi_t(y_{(t-1)\ell+1},\dots,y_{t\ell})
=
\sum_{i=(t-1)\ell+1}^{t\ell} y_i
\pmod r.
$$
The law of $\phi_t$ under the product measure is precisely
$$
Q_{(t-1)\ell+1}*\cdots*Q_{t\ell}
=
\widetilde Q.
$$
Thus each block output law satisfies $R_t(x)\ge \widetilde q_{\min}$ for all $x\in[r]$, and the conclusion follows from Theorem~\ref{thm:hetero-unified}.
\end{proof}

\begin{remark}[Comparison of the two sufficient conditions]
Corollaries~\ref{cor:block-convolution-richness} and~\ref{cor:two-heavy-atoms} are two different ways of verifying the block-richness condition in Theorem~\ref{thm:hetero-unified}.

Corollary~\ref{cor:block-convolution-richness} is structurally sharper. It assumes that, after grouping the variables into blocks of length $\ell$, the modulo-$r$ block sums have a common fully supported law $\widetilde Q$. Under this alignment, the proof follows the same mechanism as Theorem~\ref{thm:equalZero}: block sums play the role of effective homogeneous observations. The resulting sample complexity is
$$
m_0
\lesssim
2\ell\,
\frac{\log(1/p_{\min})}{-\log(1-r\widetilde q_{\min})}.
$$
This bound can be small when the block convolution is well spread over $[r]$, but the assumption requires substantial structure across the heterogeneous marginals.

Corollary~\ref{cor:two-heavy-atoms} is more broadly applicable. It does not require the $Q_i$'s to live on the same alphabet, nor does it require any equality of block convolutions. It only requires that each marginal contain two atoms with probability at least $a$. The proof first converts each block of $k=\lceil\log_2 r\rceil$ heterogeneous samples into an $[r]$-valued variable by binning $r$ heavy product atoms. This yields the uniform lower bound $q_0=a^k$, and then Theorem~\ref{thm:hetero-unified} applies. The price for this generality is that the sample complexity becomes
$$
m_0
\lesssim
2\lceil\log_2 r\rceil\,
\frac{\log(1/p_{\min})}{-\log(1-r a^{\lceil\log_2 r\rceil})}.
$$

Thus, the two corollaries are not directly comparable. The block-convolution condition gives a sharper bound when the heterogeneous marginals have aligned aggregate structure. The two-heavy-atoms condition is less restrictive and works over arbitrary alphabets, but it may have worse constants because the guaranteed lower bound $a^{\lceil\log_2 r\rceil}$ can be small. Conceptually, the first corollary exploits existing convolutional richness, while the second manufactures such richness by grouping and binning.

In summary, the convolution-based construction exploits algebraic structure, whereas the two-heavy-atoms construction shows that such structure is not necessary—only minimal local randomness is required.
\end{remark}

Theorem~\ref{thm:hetero-unified} shows that the sum map is not essential; any block map whose pushforward has full support is enough. Thus, specialized to the homogeneous case, Theorem~\ref{thm:hetero-unified} implies the following. 
\begin{corollary}[Homogeneous block-map exact recovery]
Let $P\in\Delta([r])$ have full support and let $Q$ be a distribution on a finite alphabet $\calY$. Suppose there exist an integer $\ell\ge1$ and a deterministic map $\phi:\calY^\ell\to[r]$ such that, for
$$
Z=\phi(Y_1,\dots,Y_\ell),\qquad Y_1,\dots,Y_\ell\stackrel{iid}{\sim}Q,
$$
the law $R=\phi_\# Q^{\otimes \ell}$ satisfies
$$
\min_{x\in[r]}R(x)=q_\phi>0.
$$
Then $H_m(P\|Q)=0$ for all
$$
m\ge
2\ell
\left\lceil
\frac{\log(8/p_{\min})}{-\log(1-rq_\phi)}
\right\rceil.
$$
\end{corollary}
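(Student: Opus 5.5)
This corollary is a direct specialization of Theorem~\ref{thm:hetero-unified}, and I would prove it by checking that theorem's hypotheses with all marginals equal. Fix $m$ satisfying the stated lower bound. Let $L:=\lfloor m/\ell\rfloor$ be the number of complete blocks. By the hypothesis on $m$, $L\ge 2n_*$, where $n_*=\lceil\log(8/p_{\min})/(-\log(1-rq_\phi))\rceil$. I would use the convention $n_*=1$ when $rq_\phi=1$, as in Theorem~\ref{thm:hetero-unified}.

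First, I would work with $m':=L\ell$ coordinates, all with alphabet $\calY$ and marginal $Q$. For every block $t\in[L]$, take the block map to be $\phi_t:=\phi$. Under the product law $Q^{\otimes\ell}$ on block $B_t$, the output $Z_t=\phi(Y_{B_t})$ has law $R=\phi_\#Q^{\otimes \ell}$. Hence $R_t(x)=R(x)\ge q_\phi>0$ for every $x\in[r]$ and every $t$, which is exactly the block-richness hypothesis with $\widetilde q_{\min}=q_\phi$. Since $\ell$ divides $m'$ and $m'/\ell=L\ge 2n_*$, Theorem~\ref{thm:hetero-unified} gives $H(P\|Q,\dots,Q)=0$ with $m'$ copies of $Q$. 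In other words, there is a coupling of $(X,Y_1,\dots,Y_{m'})$ with $X\sim P$, each $Y_i\sim Q$, and $X$ a deterministic function of $Y^{m'}$.

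Second, I would pass from $m'$ to $m$ to remove the divisibility assumption. Append $Y_{m'+1},\dots,Y_m$ drawn independently from $Q$. This preserves every marginal constraint, and $X$ remains a deterministic function of the longer tuple. Therefore $H_m(P\|Q)=0$. The same conclusion also follows from monotonicity of $m\mapsto H_m(P\|Q)$: adding coordinates never increases the conditional entropy.

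The only point needing care is this divisibility and rounding step, since Theorem~\ref{thm:hetero-unified} is stated for $\ell\mid m$. Passing to the largest multiple $L\ell\le m$ and padding independently handles it; no genuine obstacle is expected.
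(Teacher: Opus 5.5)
Your proposal is correct and follows the paper's argument: the paper obtains this corollary by specializing Theorem~\ref{thm:hetero-unified} to $Q_1=\cdots=Q_m=Q$ with $\phi_t=\phi$ for every block, so that $R_t=R$ and $\widetilde q_{\min}=q_\phi$. Your extra steps---using $\lfloor m/\ell\rfloor\ge 2n_*$ complete blocks and padding with independent $Q$-coordinates, and adopting the convention $n_*=1$ when $rq_\phi=1$ (without it the stated threshold degenerates to $m\ge 0$)---only make explicit details the paper leaves implicit.
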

This corollary reduces to Theorem~\ref{thm:equalZero}  by taking $ \phi(y_1,\dots,y_\ell)=\sum_{i=1}^{\ell} y_i \pmod r,$ thus, $R=Q^{*\ell}$ and  
$q_\phi=\min_x Q^{*\ell}(x)=\widetilde q_{\min}.$ 
%While Theorem~\ref{thm:equalZero} uses the modulo-sum map to prove the finite-sample exact recovery,  Theorem~\ref{thm:hetero-unified} shows that the sum map is not essential; any block map whose pushforward has full support is enough.
Thus Theorem~\ref{thm:equalZero} is the modular-sum instance of a more general
principle: exact recovery only requires a block map whose pushforward has full
support.

\section{Finite-Sample Exact Recovery Under Different Coupling Structures}\label{app:coupling-structures}
\begin{figure}[t]
\centering
\includegraphics[scale=0.2]{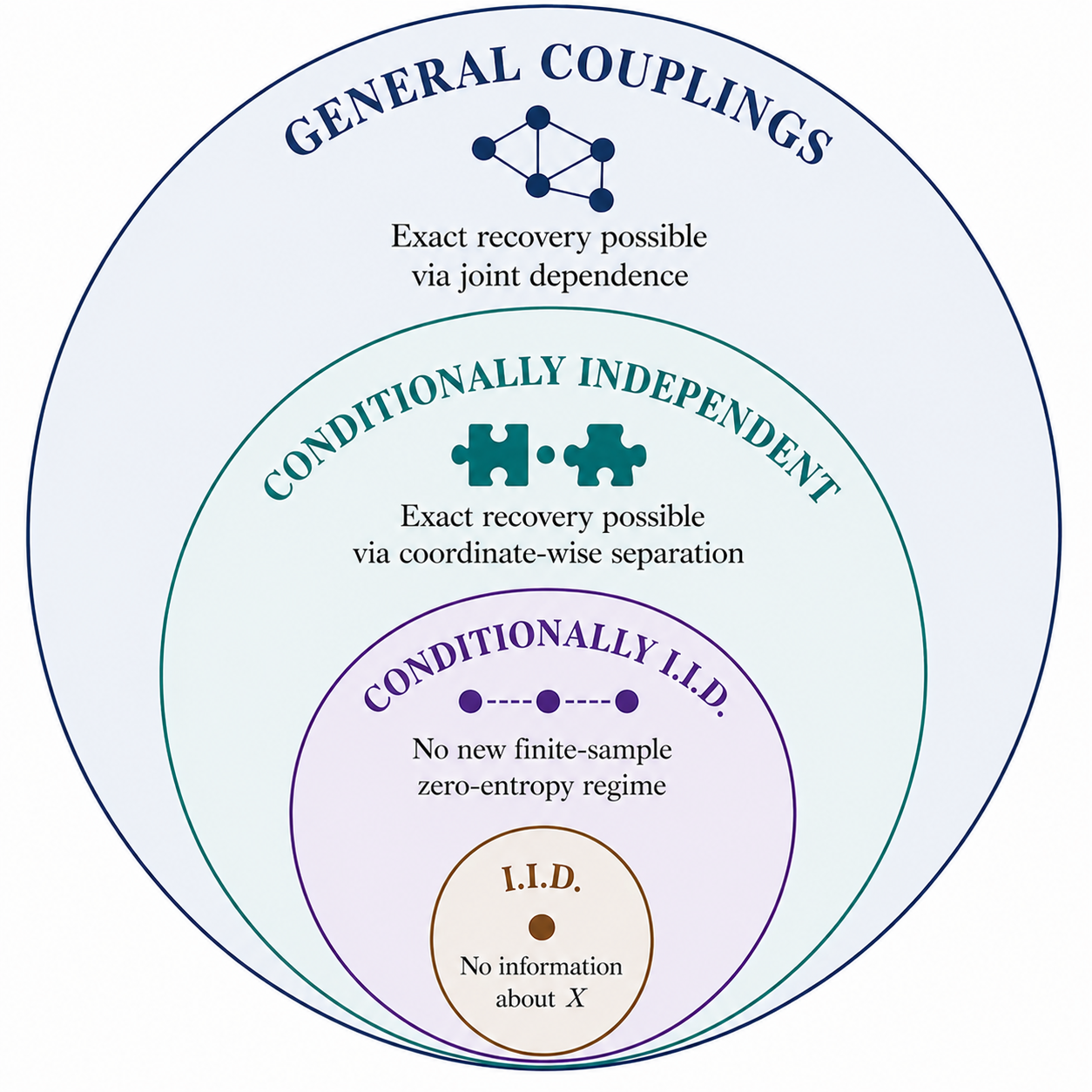}
\caption{
Hierarchy of coupling structures for finite-sample exact recovery. Independent and conditionally i.i.d.\ couplings are highly restricted. Conditionally independent couplings enable recovery by separating symbols across coordinates. General couplings are strictly more expressive: they shape the joint support of $(Y_1,\dots,Y_m)$ to enable exact recovery beyond coordinate-wise mechanisms.
}
\label{fig:coupling-hierarchy}
\end{figure}
In this section, we place the main result of Section~\ref{sec:main} into a broader perspective by isolating the precise role of \emph{dependence} in enabling finite-sample exact recovery. While Theorem~\ref{thm:equalZero} shows that general couplings can achieve $H_m(P\|Q)=0$ under mild conditions, it is far from obvious which structural features of the coupling make this possible.

To answer this, we compare four nested classes of couplings:
\begin{enumerate}
    \item Independent coupling: $Y_1,\dots,Y_m \sim Q$ i.i.d.\ and independent of $X$.
    \item Conditionally i.i.d.\ coupling: $Y_i | X=x \sim K_x$ i.i.d.\ across $i$ with kernels $\{K_x\}_{x\in \calX}$.
    \item Conditionally independent coupling: $Y_i | X=x \sim K_{i,x}$ independently across $i$, but not necessarily identically distributed with kernels $\{K_{i, x}\}_{i\in [m], x\in \calX}$.
    \item General coupling: arbitrary joint distribution with marginals $Y_i\sim Q$.
\end{enumerate}

These classes form a strict hierarchy, and, as we show, each step unlocks a fundamentally different mechanism for recovering $X$.

\begin{description}
    \item[Independent coupling:] If $Y^m\sim Q^{\otimes m}$ is independent of $X\sim P$, then $Y^m$ carries no information about $X$, and hence $H(X| Y^m)=H(P)$ for all $m$. In this regime, increasing $m$ only increases redundancy, not information.
    \item[Conditionally i.i.d.\ coupling:] Let $P,Q\in\Delta(\calX)$ and let $X\sim P$. Suppose that, conditional on $X=x$, the variables $Y_1,\dots,Y_m$ are i.i.d.\ with law $K_x$, and satisfy $\sum_{x\in\calX} P(x)K_x = Q.$
Then, it can be shown that  exact recovery is possible if and only if $H_1(P\|Q)=0$, implying that conditionally i.i.d.\ couplings do not enlarge the zero-entropy regime.

\begin{proposition}
Let $P,Q\in\Delta(\calX)$ and let $X\sim P$. Suppose that, conditional on $X=x$, the variables $Y_1,\dots,Y_m$ are i.i.d.\ with law $K_x\in\Delta(\calX)$, and that the marginal constraint $\sum_{x\in\calX}P(x)K_x=Q$ holds. Then $H(X|Y^m)=0$ for some finite $m$ if and only if $H_1(P\|Q)=0$.
\end{proposition}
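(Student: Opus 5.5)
The plan is to characterize when $X$ is a deterministic function of $Y^m$ under a conditionally i.i.d.\ coupling in terms of the supports of the kernels $K_x$. Let $S_x:=\supp(K_x)\subseteq\calX$ for $x\in\supp(P)$. Given $X=x$, the vector $Y^m$ has law $K_x^{\otimes m}$, whose support is exactly $S_x^m$. Then $H(X|Y^m)=0$ holds if and only if no $y^m$ has positive probability under two distinct values of $X$. Equivalently, the product sets $S_x^m$ and $S_{x'}^m$ must be disjoint for all $x\neq x'$ in $\supp(P)$.

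The key observation is that $S_x^m\cap S_{x'}^m=(S_x\cap S_{x'})^m$. This set is nonempty if and only if $S_x\cap S_{x'}\neq\varnothing$, and the condition does not depend on $m\ge1$. Hence $H(X|Y^m)=0$ for some $m$ if and only if it holds for $m=1$, i.e., if and only if the supports $\{S_x\}_{x\in\supp(P)}$ are pairwise disjoint. The same disjointness condition governs the single-observation coupling $(X,Y_1)$, which is itself a coupling of $P$ and $Q$ because $\sum_x P(x)K_x=Q$.

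It then remains to match this with $H_1(P\|Q)=0$. For the direction ($\Rightarrow$), if $H(X|Y^m)=0$, then by the above the pair $(X,Y_1)$ has $H(X|Y_1)=0$ and marginals $P,Q$. Hence $H_1(P\|Q)\le 0$, so it equals $0$; here the infimum is a minimum by compactness and continuity, though we only need the upper bound. For the direction ($\Leftarrow$), suppose $H_1(P\|Q)=0$. By compactness of the coupling polytope and continuity of conditional entropy, there is a coupling $(X,Y)$ with $X=g(Y)$ a.s. Set $K_x:=P_{Y|X=x}$, which satisfies $\sum_xP(x)K_x=Q$ and has pairwise disjoint supports, since $g$ separates them. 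Sampling $Y_1,\dots,Y_m$ i.i.d.\ from $K_X$ then gives $H(X|Y^m)=0$ for every $m$, in particular $m=1$.

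The main subtlety is reading the statement correctly. It should be read as saying that, within the class of conditionally i.i.d.\ couplings with marginal $Q$, exact recovery at some finite $m$ is achievable if and only if $H_1(P\|Q)=0$. For a fixed family $\{K_x\}$, the statement reads as ``disjoint supports'', which the support-product identity handles directly. The only point needing care is the existence of a minimizer attaining $H_1(P\|Q)=0$. We handle it by compactness of $\mathsf C(P,Q,1)$ and continuity of $H(X|Y)$ on the finite simplex. Alternatively, we can note directly that zero conditional entropy means each $y$ with $Q(y)>0$ is associated with a single $x$.
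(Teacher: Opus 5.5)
Your proposal is correct and follows essentially the same route as the paper: exact recovery reduces to pairwise disjointness of the kernel supports, and your identity $S_x^m\cap S_{x'}^m=(S_x\cap S_{x'})^m$ is the paper's diagonal-point argument with $(y,\dots,y)$; for the converse, building $K_x$ as the law of $Y$ given $X=x$ under a coupling with $X=g(Y)$ is the paper's conditional law of $Y$ given $g(Y)=x$. You read the ``if'' direction as the existence of some conditionally i.i.d.\ coupling, and the paper's proof relies on the same reading.
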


\begin{proof}
Let $S=\supp(P)$. Conditional on $X=x$, the law of $Y^m=(Y_1,\dots,Y_m)$ is $K_x^{\otimes m}$.

Suppose $H(X| Y^m)=0$. Then $X$ is a deterministic function of $Y^m$, and therefore the conditional laws $\{K_x^{\otimes m}:x\in S\}$ are mutually singular. Since $\calX$ is finite, $K_x^{\otimes m}$ and $K_{x'}^{\otimes m}$ are mutually singular if and only if $K_x$ and $K_{x'}$ are mutually singular. Indeed, if there exists $y\in\calX$ such that $K_x(y)>0$ and $K_{x'}(y)>0$, then $(y,\dots,y)$ has positive probability under both product measures, contradicting mutual singularity.

Thus the sets $A_x=\supp(K_x)$, $x\in S$, are pairwise disjoint. Since $Q=\sum_{x\in S}P(x)K_x$, we have $Q(A_x)=P(x)$ for every $x\in S$. Define $g:\calX\to\calX$ by $g(y)=x$ for $y\in A_x$, with arbitrary values outside $\cup_x A_x$. Then $g_\# Q=P$, so there exists a coupling of $X\sim P$ and $Y\sim Q$ such that $X=g(Y)$. Hence $H_1(P\|Q)=0$.

Conversely, if $H_1(P\|Q)=0$, then there exists a deterministic map $g$ such that $g_\#Q=P$. Let $K_x$ be the conditional law of $Y\sim Q$ given $g(Y)=x$. Then $\sum_xP(x)K_x=Q$, the supports of the $K_x$ are pairwise disjoint, and $X=g(Y_1)$ is already determined by the first coordinate. Therefore $H(X| Y^m)=0$.
\end{proof}
Conditionally i.i.d.\ couplings are repeated uses of the same observation
channel. Repetition may reduce uncertainty asymptotically, but it cannot
change the support geometry: if one observation cannot separate two values of
$X$, no finite number of identical observations can separate them exactly. Informally speaking $K_x^{\otimes m} \perp K_{x'}^{\otimes m}$ is equivalent to  $K_x \perp K_{x'}.$
Thus conditionally i.i.d.\ sampling creates no new finite-sample zero-entropy
regime.

\item[Conditionally independent coupling:] Allowing the conditional law to vary across coordinates fundamentally changes the picture. This is formalized in the following proposition. 
\begin{proposition}
Let $P\in\Delta(\calX)$ and suppose that, conditional on $X=x$, the variables $Y_1,\dots,Y_m$ are independent with laws $K_{i,x}$. Then $H(X| Y^m)=0$ if and only if for every $x\neq x'\in\supp(P)$ there exists $i\in[m]$ such that
$$
\supp(K_{i,x}) \cap \supp(K_{i,x'}) = \emptyset.
$$
\end{proposition}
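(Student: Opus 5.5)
The plan is to prove both directions directly from the fact that, under conditional independence, the conditional law of $Y^m$ given $X=x$ is the product measure $K_x^{(m)}:=\bigotimes_{i=1}^m K_{i,x}$. Its support is therefore the Cartesian product $\prod_{i=1}^m \supp(K_{i,x})$. The basic reduction, exactly as in the proof of the preceding proposition for the conditionally i.i.d.\ case, is that $H(X|Y^m)=0$ holds if and only if $X$ is a.s.\ a deterministic function of $Y^m$. Since all alphabets are finite, this is in turn equivalent to the supports $\supp(K_x^{(m)})$, $x\in\supp(P)$, being pairwise disjoint.

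To justify the reduction, write $H(X|Y^m)=\sum_{y^m}\Pr(Y^m=y^m)\,H(X|Y^m=y^m)$. This vanishes iff every $y^m$ of positive probability has a degenerate posterior. In turn, that holds iff no such $y^m$ satisfies $P(x)K_x^{(m)}(y^m)>0$ and $P(x')K_x^{(m)}(y^m)>0$ (with $K_{x'}^{(m)}$ in the second factor) for two distinct $x,x'\in\supp(P)$. Because $P(x),P(x')>0$, this is precisely disjointness of the product supports.

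With the reduction in hand, the combinatorial core is the elementary fact that two Cartesian products $\prod_i A_i$ and $\prod_i B_i$ are disjoint if and only if $A_i\cap B_i=\emptyset$ for some coordinate $i$. For sufficiency, if $A_i\cap B_i=\emptyset$ then any common point would have its $i$th coordinate in $A_i\cap B_i$, which is impossible. For necessity, if every $A_i\cap B_i$ is nonempty, choose $y_i\in A_i\cap B_i$ for each $i$; the tuple $(y_1,\dots,y_m)$ then lies in both products. Applying this with $A_i=\supp(K_{i,x})$ and $B_i=\supp(K_{i,x'})$ for each pair $x\neq x'\in\supp(P)$ yields the stated criterion.

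There is no real obstacle here; the only care needed is in the necessity direction. There the constructed tuple $y^m$ must have positive probability under the joint law, which follows from $\Pr(Y^m=y^m)\ge P(x)\prod_i K_{i,x}(y_i)>0$. We must also restrict attention to $x\in\supp(P)$, since values of $x$ with $P(x)=0$ impose no constraint. I would also remark that the marginal constraints $Y_i\sim Q$ play no role in this equivalence. They only matter afterwards, in deciding whether kernels satisfying the separation condition and $\sum_x P(x)K_{i,x}=Q$ exist.
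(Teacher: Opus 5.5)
Your proposal is correct and follows the paper's proof essentially verbatim. You reduce $H(X|Y^m)=0$ to pairwise disjointness (mutual singularity) of the product laws $\bigotimes_i K_{i,x}$ over $x\in\supp(P)$, then use that two Cartesian products of supports are disjoint iff some coordinate pair is disjoint, exhibiting a common tuple $(y_1,\dots,y_m)$ for the converse; your explicit degenerate-posterior justification of the reduction just spells out a step the paper states in one line.
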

\begin{proof}
Conditional on $X=x$, the law of $Y^m$ is $K_{1,x}\otimes\dots\otimes K_{m,x}$. Exact recovery holds if and only if these product measures are mutually singular over $x\in\supp(P)$.

For finite alphabets, two product measures $K_{1,x}\otimes\dots\otimes K_{m,x}$ and $K_{1,x'}\otimes\dots\otimes K_{m,x'}$ are mutually singular if and only if at least one coordinate has disjoint supports. If no such coordinate exists, then for every $i$ there is some $y_i$ with $K_{i,x}(y_i)>0$ and $K_{i,x'}(y_i)>0$, so $(y_1,\dots,y_m)$ has positive probability under both product measures. Conversely, if some coordinate has disjoint supports, then the product measures are supported on disjoint subsets of $\calX^m$. This proves the claim.
\end{proof}
According to this result, exact recovery is achieved by \emph{coordinate-wise separation}: each coordinate acts as a classifier that distinguishes some pairs of source symbols. Collectively, the coordinates form a separating system over $\supp(P)$. This is a combinatorial mechanism, fundamentally different from the averaging behavior of conditionally i.i.d.\ couplings.

A simple example to show the exact recovery in this case is the following.  
\begin{example} [Information splitting]
Let $X=(X_1,X_2)\sim \Unif(\{0,1\}^2)$ and define
$$
Y_1 = X_1, \qquad Y_2 = X_2.
$$
Then $Y_1,Y_2\sim \Ber(1/2)$, and $X$ is recovered exactly from $(Y_1,Y_2)$. Here, no single coordinate determines $X$, but the information is \emph{distributed} across coordinates. This illustrates the power of conditionally independent couplings over conditionally i.i.d.\ couplings.
\end{example}
\noindent\textbf{Failure of coordinate-wise separation.}
The coordinate-separation mechanism is inherently limited. It is still constrained by the product structure and the marginal constraints. The next example shows that even when a general coupling can encode $X$ using the joint support of $(Y_1,Y_2)$, no
conditionally independent coupling with the same one-dimensional marginals can
do so.

\begin{proposition} Let $P=\Unif([3])$ and $Q=\Ber(1/2)$. Then:
\begin{enumerate}
    \item Under general couplings, $H_2(P\|Q)=0$.
    \item Under any conditionally independent coupling with marginals $Y_1,Y_2\sim Q$, we have $H(X|Y_1,Y_2)>0$.
\end{enumerate}
\end{proposition}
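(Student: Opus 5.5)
Part 1 is already done in Appendix~\ref{app:randomness}. Take $(Y_1,Y_2)$ with joint law $\tfrac13,\tfrac16,\tfrac16,\tfrac13$ on $(0,0),(0,1),(1,0),(1,1)$. Both marginals are $\Ber(1/2)$. Define $X$ as $1$ on $(0,0)$, $2$ on $\{(0,1),(1,0)\}$, and $3$ on $(1,1)$. Then $X\sim\Unif([3])$ and $X$ is a deterministic function of $(Y_1,Y_2)$, so $H_2(P\|Q)=0$. I would simply cite this.

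For part 2, I would argue by contradiction using the preceding proposition on conditionally independent couplings. Suppose $H(X|Y_1,Y_2)=0$. Then every pair $x\neq x'$ in $[3]$ must be separated by some coordinate $i\in\{1,2\}$, meaning $\supp(K_{i,x})\cap\supp(K_{i,x'})=\emptyset$. The alphabet is binary, so each support is a nonempty subset of $\{0,1\}$. Two supports are disjoint only if they are $\{0\}$ and $\{1\}$. Hence coordinate $i$ separates $x,x'$ only if $K_{i,x}$ and $K_{i,x'}$ are distinct point masses.

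The main step is to count. Each coordinate has only two possible point masses. Among three symbols, coordinate $i$ can therefore separate at most the pairs between the class of $x$'s with $K_{i,x}=\delta_0$ and the class with $K_{i,x}=\delta_1$. If all three kernels in coordinate $i$ are point masses, this bipartition separates at most $2$ of the $3$ pairs, since $1\cdot2=2$. Otherwise the coordinate separates even fewer, because any non-degenerate kernel separates nothing.

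Two coordinates together separate at most $4$ pairs, so counting alone gives no contradiction. I would add the marginal constraint $Q(1)=\tfrac13\sum_x K_{i,x}(1)=\tfrac12$, which says $\sum_x K_{i,x}(1)=3/2$. If coordinate $i$ separates two pairs, all three of its kernels are point masses. Then $\sum_x K_{i,x}(1)$ is an integer, which contradicts the value $3/2$. So each coordinate separates at most one pair. Indeed, a coordinate separating $x$ from $x'$ has $K_{i,x}=\delta_0$ and $K_{i,x'}=\delta_1$, and the third kernel must then have mean $1/2$, so it is non-degenerate and separates nothing. The two coordinates therefore separate at most $2<3$ pairs. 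This is the contradiction, and hence $H(X|Y_1,Y_2)>0$.

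The step I expect to be the crux is this use of the marginal constraint to rule out a coordinate separating two pairs; the rest is direct.
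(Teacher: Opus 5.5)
Your proof is correct and complete for what the statement asks. Part 1 uses the same explicit coupling as the paper, up to relabeling which symbol sits on the off-diagonal. For part 2 you take a different route.

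The paper does not use the separation characterization at all; its argument is quantitative, through mutual information. It writes $a_x=\Pr(Y_1=1\mid X=x)$ with $\sum_x a_x=3/2$. Concavity of $h_{\mathsf b}$ then places the minimum of $\sum_x h_{\mathsf b}(a_x)$ at a permutation of $(0,\tfrac12,1)$, which gives $I(X;Y_1)\le\tfrac23\log 2$, and likewise for $Y_2$. Conditional independence then yields $I(X;Y_1,Y_2)\le I(X;Y_1)+I(X;Y_2)\le\tfrac43\log 2<\log 3$.

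You instead use the ``only if'' direction of the preceding proposition together with an integrality obstruction. The mean constraint $\sum_x K_{i,x}(1)=3/2$ forbids a coordinate from being deterministic on all three symbols, so each coordinate separates at most one of the three pairs. This is the same fact that makes $(0,\tfrac12,1)$ the extreme points in the paper's concavity step.

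Your version is purely combinatorial and makes the obstruction transparent. The paper's version buys more: an explicit bound $H(X\mid Y_1,Y_2)\ge\log 3-\tfrac43\log 2$ that holds uniformly over the whole conditionally independent class, whereas yours gives only pointwise positivity. You could recover a uniform, though non-explicit, gap by adding one observation. The feasible kernels form a compact set, namely $(a,b)\in[0,1]^6$ with $\sum_x a_x=\sum_x b_x=3/2$, and $H(X\mid Y_1,Y_2)$ is continuous on it. So the minimum is attained, and by your argument it is positive.
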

\begin{proof}
For the general-coupling claim, let $(Y_1,Y_2)$ have joint law
$$
\Pr(Y_1=0,Y_2=0)=\frac13,~\Pr(Y_1=0,Y_2=1)=\Pr(Y_1=1,Y_2=0)=\frac16,~\Pr(Y_1=1,Y_2=1)=\frac13.$$
Then $Y_1,Y_2\sim \Ber(1/2)$. Define $X$ deterministically by
$$X=
\begin{cases}
1, & (Y_1,Y_2)=(0,0),\\
2, & (Y_1,Y_2)=(1,1),\\
3, & (Y_1,Y_2)\in\{(0,1),(1,0)\}.
\end{cases}$$
Then $X\sim\Unif([3])$ and $H(X| Y_1,Y_2)=0$, so $H_2(P\|Q)=0$ under general couplings.

It remains to show that conditionally independent couplings cannot achieve zero entropy. Let $X\sim\Unif([3])$, and suppose $Y_1,Y_2$ are conditionally independent given $X$, with $Y_1,Y_2\sim\Ber(1/2)$. Write
$$a_x:=\Pr(Y_1=1| X=x),\qquad b_x:=\Pr(Y_2=1| X=x),$$
for $x\in[3]$. The marginal constraints imply
$\frac13\sum_{x=1}^3 a_x=\tfrac12,$ and $\frac13\sum_{x=1}^3 b_x=\tfrac12.$
Since $h_{\mathsf b}$ is concave, the minimum of $\sum_x h_{\mathsf b}(a_x)$ over $a_x\in[0,1]$ with $\sum_x a_x=3/2$ is attained at an extreme point, namely a permutation of $(0,1/2,1)$. Hence
$$
\frac13\sum_{x=1}^3 h_{\mathsf b}(a_x)\ge \frac13\log 2,
\qquad
\frac13\sum_{x=1}^3 h_{\mathsf b}(b_x)\ge \frac13\log 2.
$$
Therefore
$$
I(X;Y_1)
=
H(Y_1)-H(Y_1| X)
\le
\log 2-\frac13\log 2
=
\frac23\log 2,
$$
and similarly $I(X;Y_2)\le \frac23\log 2$.

Using conditional independence,
$$
I(X;Y_1,Y_2)
=
H(Y_1,Y_2)-H(Y_1,Y_2| X)
\le
H(Y_1)+H(Y_2)-H(Y_1| X)-H(Y_2| X),
$$
so
$$
I(X;Y_1,Y_2)\le I(X;Y_1)+I(X;Y_2)\le \frac43\log 2.
$$
Since $H(X)=\log 3$, we obtain
$$
H(X| Y_1,Y_2)
=
\log 3-I(X;Y_1,Y_2)
\ge
\log 3-\frac43\log 2
>0.
$$
Thus no conditionally independent coupling with binary $\Ber(1/2)$ marginals can recover $X$ exactly.
\end{proof}
Although two binary coordinates generate four possible joint patterns (bigger than the support size of $X$), the conditionally independent structure and the fixed marginal constraints restrict how these patterns can be allocated across the three symbols of $X$. The obstruction is therefore not support size alone, but product structure. In
contrast, general couplings can shape the \emph{joint support} of $(Y_1,Y_2)$ directly, enabling a non-product encoding of $X$.

Here is another simple example to showcase the limitation of the conditionally independent coupling structure. 
\begin{proposition}
Let $P=\Ber(p)$ and $Q\in\Delta(\mathcal Y)$. Suppose
$Y_1,\ldots,Y_m$ are conditionally independent given $X$, and each coordinate
has marginal $Q$. If $H(X| Y^m)=0$ for some finite $m$, then already
$H_1(P\|Q)=0$.
\end{proposition}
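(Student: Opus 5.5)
We use the preceding proposition on conditionally independent couplings, which characterizes exact recovery by coordinate-wise separation. If $p\in\{0,1\}$ then $P$ is a point mass, $H_1(P\|Q)=0$ trivially, and there is nothing to prove. So assume $0<p<1$, so that $\supp(P)=\{0,1\}$. Let the conditional laws be $K_{i,0},K_{i,1}$, the laws of $Y_i$ given $X=0$ and $X=1$.

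Suppose $H(X|Y^m)=0$. The preceding proposition, applied to the single pair $x=0$, $x'=1$ of distinct elements of $\supp(P)$, gives an index $i\in[m]$ with
\[
\supp(K_{i,0})\cap\supp(K_{i,1})=\emptyset .
\]
The key point is that a binary source has only one pair to separate. Hence a single coordinate already carries the full separation; this is exactly where the argument would fail for larger alphabets, as the $\Unif([3])$ example shows.

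Now use only the coordinate $Y_i$. Write $A_0=\supp(K_{i,0})$ and $A_1=\supp(K_{i,1})$. By the marginal constraint,
\[
Q=(1-p)K_{i,0}+pK_{i,1}.
\]
Since the supports are disjoint, $Q(A_1)=p$ and $Q(A_0)=1-p$. Define $g:\mathcal Y\to\{0,1\}$ by $g(y)=1$ for $y\in A_1$ and $g(y)=0$ otherwise. Then $g_\#Q=\Ber(p)$.

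Under the given coupling we also have $X=g(Y_i)$ almost surely. Indeed, given $X=1$, $Y_i\in A_1$ almost surely, and given $X=0$, $Y_i\in A_0$, which is disjoint from $A_1$. Therefore the pair $(X,Y_i)$ is itself a valid coupling of $P$ and $Q$ with $H(X|Y_i)=0$, and so $H_1(P\|Q)=0$.

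There is no substantial obstacle. The only care needed is the degenerate case $p\in\{0,1\}$, handled above, and the observation that the earlier proposition holds for arbitrary finite alphabets $\mathcal Y$. Its proof only uses that the product measures are mutually singular and that the alphabets are finite, so it applies verbatim here.
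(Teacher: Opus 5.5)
Your proof is correct and follows essentially the same route as the paper: apply the preceding coordinate-separation proposition to the single pair $\{0,1\}$, extract a coordinate $Y_i$ whose conditional supports are disjoint, and read off a deterministic map $g$ with $g_{\#}Q=\Ber(p)$ and $X=g(Y_i)$. Your explicit handling of the degenerate case $p\in\{0,1\}$, and your checks that $Q(A_1)=p$ and $X=g(Y_i)$ almost surely, fill in details that the paper's two-line argument leaves implicit.
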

\begin{proof}
By the preceding proposition, since $\supp(P)$ has two points, exact recovery implies that there exists a coordinate $i$ such that the conditional laws $K_{i,0}$ and $K_{i,1}$ have disjoint supports. Since $Y_i\sim Q$ and $X\sim P$, this gives a partition of $\supp(Q)$ into two sets of $Q$-mass $1-p$ and $p$. Hence $X$ is already a deterministic function of the single coordinate $Y_i$, so $H_1(P\|Q)=0$.
\end{proof}
Thus, for binary sources, even coordinate-wise separation gives no genuinely
multi-sample zero-entropy regime when all coordinates have the same marginal
constraint.

\item[General coupling:] Theorem~\ref{thm:equalZero} reveals a qualitatively different mechanism. Rather than separating symbols coordinate-wise, general couplings construct a family of joint distributions whose supports behave like an approximate basis of the probability simplex. By coupling block sums and inverting this basis, one can represent any $P$ exactly. 

Conceptually, the key difference between conditionally independent couplings and general couplings is that:  conditionally independent couplings operate on \emph{coordinates}, whereas general couplings operate on the \emph{joint space}. This additional freedom allows general couplings to overcome the limitations of product structure.
\end{description}

In summary, at the level of feasible joint laws, these classes form the strict hierarchy
$$
\text{i.i.d.}
\;\subsetneq\;
\text{conditionally i.i.d.}
\;\subsetneq\;
\text{conditionally independent}
\;\subsetneq\;
\text{general coupling}.
$$

\noindent
Each inclusion corresponds to a qualitative increase in expressive power:
\begin{itemize}
    \item Independent couplings carry no information about $X$.
    \item Conditionally i.i.d.\ couplings create no new finite-sample
    zero-entropy regimes beyond one-sample deterministic recovery.
    \item Conditionally independent couplings enable exact recovery through
    coordinate-wise separation.
    \item General couplings are strictly more expressive: they exploit the
    geometry of the full joint support and can achieve exact recovery beyond
    any coordinate-wise separation mechanism.
\end{itemize}

Finite-sample exact recovery is governed not only by the marginals but by the geometry of the coupling. Product-based structures restrict recovery to coordinate-wise separation, while general couplings exploit joint dependence to construct invertible representations of the target distribution.

\section{Details about Algorithm~\ref{alg:Hstar-refined}}\label{app:list-coupling2}

\begin{proof}[Proof of Theorem~\ref{thm:Hstar-main}]
Feasibility follows directly from the update rule, since each iterate
\(\pi^{(t+1)}\) is chosen from \(\mathsf C(P,Q,m)\).

We next prove monotonicity. For any coupling
\(\pi\in\mathsf C(P,Q,m)\) supported on \(\Gamma_t\), define the surrogate
\[
L_t(\pi):=
-\sum_{x,s}\pi_{x,s}\log u_s^{(t)}(x).
\]
Since \(\pi^{(t)}\) is feasible for the \(t\)-th update, optimality of
\(\pi^{(t+1)}\) gives
\(L_t(\pi^{(t+1)}) \le L_t(\pi^{(t)}).\)
Moreover,
\[
L_t(\pi^{(t)})
=
-\sum_{x,s}\pi^{(t)}_{x,s}\log u_s^{(t)}(x)
=
H_{\pi^{(t)}}(X| S)
=
U_t.
\]
For any \(\pi\) supported on \(\Gamma_t\), the surrogate identity gives
\[
L_t(\pi)
=
H_\pi(X| S)
+
\sum_{s:a_s>0}a_sD(u_s^\pi\|u_s^{(t)}).
\]
Applying this to \(\pi=\pi^{(t+1)}\) and using nonnegativity of KL divergence,
\[
U_{t+1}
=
H_{\pi^{(t+1)}}(X| S)
\le
L_t(\pi^{(t+1)})
\le
L_t(\pi^{(t)})
=
U_t.
\]
This proves monotonicity.

The same optimality argument gives the one-step bound. For any
\(\pi\in\mathsf C(P,Q,m)\) with \(\supp(\pi)\subseteq\Gamma_t\),
\[
L_t(\pi^{(t+1)})\le L_t(\pi).
\]
Therefore,
\[
U_{t+1}
\le
L_t(\pi^{(t+1)})
\le
L_t(\pi)
=
H_\pi(X| S)
+
\sum_s a_sD(u_s^\pi\|u_s^{(t)}),
\]
which proves the claim.

Finally, let \(U_t^*\) be the minimum of \(H_\pi(X| S)\) over couplings
supported on \(\Gamma_t\), and let \(\pi_t^*\) be a minimizer with masses
\(a_s^{*,t}\) and posteriors \(u_s^{*,t}\). Applying the one-step bound with
\(\pi=\pi_t^*\) yields
\[
U_{t+1}
\le
U_t^*
+
\sum_s a_s^{*,t}D(u_s^{*,t}\|u_s^{(t)}).
\]
If the alignment condition holds at iteration \(t\), then
\[
U_{t+1}
\le
U_t^*+\rho(U_t-U_t^*),
\]
or equivalently,
\[
U_{t+1}-U_t^*
\le
\rho(U_t-U_t^*).
\]

Now suppose the alignment condition holds for every \(t\ge t_0\), and that
there exists a globally optimal coupling \(\pi^*\) with
\(\supp(\pi^*)\subseteq\Gamma_t\) for all \(t\ge t_0\). Then, for every
\(t\ge t_0\), the support-restricted optimum equals the global optimum:
\(U_t^*=H_m(P\|Q).\)
Hence
\[
U_{t+1}-H_m(P\|Q)
\le
\rho\bigl(U_t-H_m(P\|Q)\bigr),
\qquad t\ge t_0.
\]
Iterating this inequality gives
\[
U_t-H_m(P\|Q)
\le
\rho^{\,t-t_0}
\bigl(U_{t_0}-H_m(P\|Q)\bigr),
\]
as claimed.
\end{proof}

\subsection{Initialization of Algorithm~\ref{alg:Hstar-refined}}\label{app:list-coupling}
As stated earlier in Section~\ref{sec:Hstar-algorithm}, we propose to initialize Algorithm~\ref{alg:Hstar-refined} with a \textit{dependent} coupling obtained from the list-coupling problem. The list-coupling problem seeks a joint distribution of $(X,Y_1,\dots,Y_m)$ with $X\sim P$ and $Y_i\sim Q$ that maximizes the success probability $\Pr\big[X\in\{Y_1,\dots,Y_m\}\big].$ 

To characterize the optimal list coupling, let $(X,Y_1,\dots,Y_m)$ be any coupling with $X\sim P$ and $Y_i\sim Q$ for all $i$, and let
$$\mathcal E=\{X\in\{Y_1,\dots,Y_m\}\}.$$
Then
$$\Pr(\mathcal E)=\sum_x \Pr\big(X=x,\ \exists i:\,Y_i=x\big),$$
where the events are disjoint across $x$. For each $x$,
$$\Pr\big(X=x,\ \exists i:\,Y_i=x\big)\le P(x)$$
and, by the union bound,
$$\Pr\big(X=x,\ \exists i:\,Y_i=x\big)\le \sum_{i=1}^m \Pr(Y_i=x)=mQ(x).$$
Hence
$$\Pr\big(X=x,\ \exists i:\,Y_i=x\big)\le \min\{P(x),mQ(x)\},$$
so summing over $x$ gives
$$\Pr(\mathcal E)\le \sum_x \min\{P(x),mQ(x)\}=1-\sum_x (P(x)-mQ(x))_+.$$

Now, we show that this bound is tight and the optimal construction proceeds by splitting $P$ into a ``covered'' part and a ``miss'' part. To do that, define
$$\alpha(x)=\min\{P(x),\,mQ(x)\}, \quad\text{and}\quad  \varepsilon=1-\sum_x \alpha(x).$$
On the covered mass, one assigns $X=x$ and ensures that at least one of the coordinates $Y_i$ equals $x$, while preserving the marginals of each $Y_i$. On the remaining mass $\varepsilon$, one draws $X$ independently from the normalized residual distribution proportional to $(P(x)-mQ(x))_+$ and samples $(Y_1,\dots,Y_m)$ independently from $Q$.

This construction aligns $X$ with the list $\{Y_1,\dots,Y_m\}$ as much as allowed by the marginals, yielding the optimal success probability and inducing strong dependence among the $Y_i$'s.

We empirically observed that it is often better to initialize the algorithm using a perturbed list coupling. The small mixture of the structured list  coupling with the independent coupling is to retain broad support and is expressed as 
\begin{equation}\label{eq:listcouplingMixture}
    \pi^{(0)}=(1-\eta)\pi_{\mathrm{list}}+\eta\,P\otimes Q^{\otimes m}, 
\end{equation}
for a constant $0<\eta\ll1$. This preserves the dependent structure needed to escape the product regime while avoiding an overly restrictive initial support. 

Another candidate for initialization is a random choice. 

\paragraph{Random feasible initialization.}
As an unstructured feasible baseline, we initialize the algorithm at a random
extreme point of the coupling polytope. Let
\[
\mathsf C(P,Q,m)
=
\left\{
\pi\in\Delta(\mathcal X\times\mathcal X^m):
\sum_s\pi_{x,s}=P(x),\ 
\sum_{x,s:s_i=y}\pi_{x,s}=Q(y)\ \forall i,y
\right\}.
\]
We draw independent random costs \((c_{x,s})_{x,s}\), for example
\(c_{x,s}\sim N(0,1)\), and solve the linear program
\[
\pi_{\mathrm{rand}}
\in
\argmin_{\pi\in\mathsf C(P,Q,m)}
\sum_{x,s} c_{x,s}\pi_{x,s}.
\]
Since the objective is linear, the optimizer is typically an extreme point of
\(\mathsf C(P,Q,m)\). Thus \(\pi_{\mathrm{rand}}\) is feasible and often sparse,
but unlike the list-coupling initialization, it is not designed to make
\(Y^m\) informative about \(X\). It therefore serves as a control for whether
feasibility alone is enough to produce a useful active support. In our
experiments, this baseline often stagnates at substantially higher entropy,
showing that the greedy procedure needs structured dependent support rather
than merely a feasible starting point.

\subsection{Extra Numerical Results for Algorithm~\ref{alg:Hstar-refined}}\label{app:extranumerical}
We begin with providing more details about Fig.~\ref{fig:initialization01}. The pair of distributions used in the plots in this figure are as follows: For the left panel,
\[P=(0.143,\,0.600,\,0.103,\,0.086,\,0.068),
\qquad Q=(0.152,\,0.007,\,0.146,\,0.639,\,0.056),\]
and for the right panel,
\[P=(0.014,\,0.043,\,0.405,\,0.103,\,0.435),
\qquad Q=(0.151,\,0.616,\,0.049,\,0.034,\,0.150).\]

\paragraph{Binary validation.}
The binary case provides a useful stress test for
Algorithm~\ref{alg:Hstar-refined}, because Proposition~\ref{Prop:Binary_m}
gives an exact closed form for \(H_m(\Ber(p)\|\Ber(q))\). We use this formula
as ground truth and evaluate the monotone greedy estimate over a grid of
\((p,q)\) values. For each pair \((p,q)\), the heatmap in Fig. \ref{fig:binaryheatmap} reports the mean
absolute error
\[
\frac{1}{m_{\max}}\sum_{m=1}^{m_{\max}}
\left|
\widehat H_m^{\mathrm{mono}}(\Ber(p)\|\Ber(q))
-
H_m(\Ber(p)\|\Ber(q))
\right|,
\]
where \(\widehat H_m^{\mathrm{mono}}\) denotes the monotone envelope of the
greedy estimates. 

This post-processing is natural because the true value \(m\mapsto H_m(P\|Q)\)
is nonincreasing: additional \(Q\)-marginal coordinates can always be appended
without increasing the minimum conditional entropy. The greedy algorithm,
however, is support-restricted and may produce nonmonotone upper-bound
estimates across different values of \(m\). Taking the cumulative minimum
therefore enforces the known structural monotonicity without using the closed
form itself.

The heatmap evaluates the algorithm in both easy and hard binary regimes. When
\(q\) is close to \(1/2\), the \(Q\)-samples are relatively rich and the
zero-entropy threshold is reached quickly. When \(q\) is highly skewed, the
transition is delayed and the optimization problem becomes more sensitive to
the active support. Across these regimes, small mean absolute error indicates
that the list-based initialization and greedy refinement recover the correct
finite-sample entropy profile with high accuracy. In particular, the heatmap
tests not only smooth exponential decay but also the sharp transition predicted
by Proposition~\ref{Prop:Binary_m}: for \(m\ge2\), the exact value becomes zero
as soon as \(\min\{p,1-p\}\le m\min\{q,1-q\}.\)
Thus the binary heatmap provides a broad validation of the algorithm against a
case where the optimum is known exactly.
\begin{figure}
    \centering
    \includegraphics[width=0.55\linewidth]{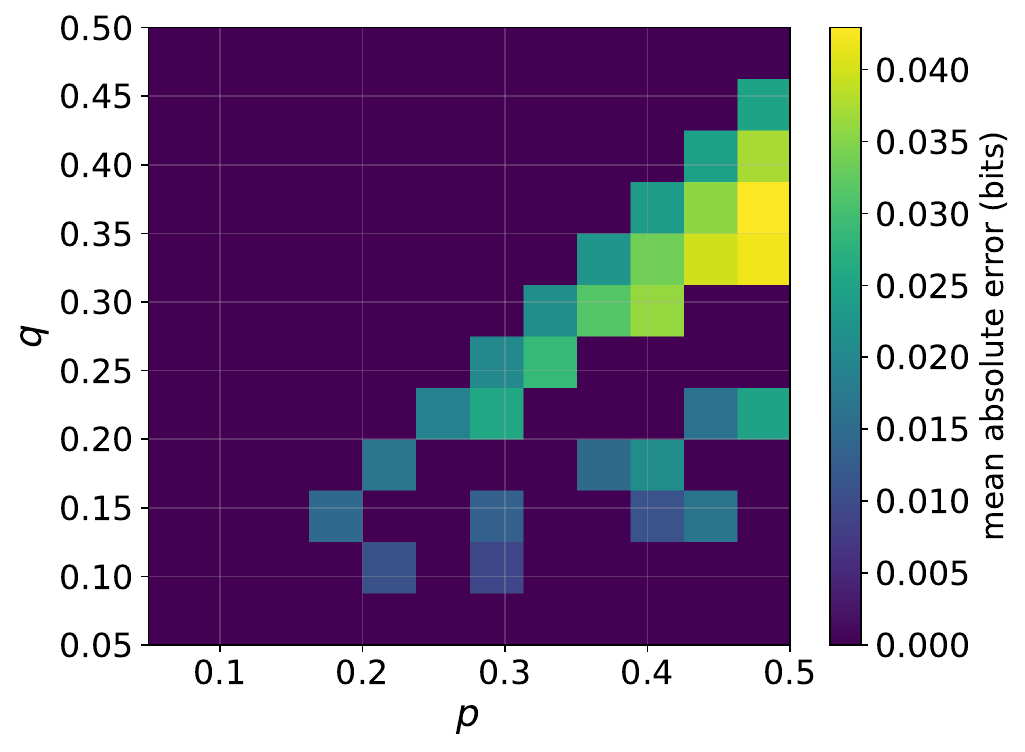}
    \caption{Binary validation against the closed-form solution. Each cell shows the mean
absolute error of the monotone greedy estimate of
$H_m(\Ber(p)\|\Ber(q))$, averaged over $m=1,\ldots,8$, using the exact
formula in Proposition~\ref{Prop:Binary_m} as ground truth. The small errors
across the grid show that the algorithm reliably tracks the finite-sample
entropy curve, including the abrupt transition to zero entropy.}
    \label{fig:binaryheatmap}
\end{figure}

% \begin{figure}
%     \centering
%     \includegraphics[width=0.5\linewidth]{initialization_ablation_trajectory.pdf}
%     \caption{Initialization ablation for Algorithm~\ref{alg:Hstar-refined} with
% \(m=4\), using the normalized versions of
% \(P=(0.30,0.25,0.20,0.10,0.05,0.10)\) and
% \(Q=(0.50,0.10,0.14,0.10,0.08,0.80)\).
% We compare entropy trajectories for product initialization, random feasible
% initialization, maximum list-coupling initialization, and a list-coupling
% initialization with a small product perturbation. The product initialization
% remains flat because its surrogate objective is degenerate, while the random
% feasible initialization stagnates on a poor active support. In contrast,
% structured list-based initializations start from substantially lower entropy
% and yield informative descent directions, showing that the success of the
% support-restricted greedy procedure is driven primarily by the quality of the
% initial dependent support.}
%     \label{fig:placeholder}
% \end{figure}
\paragraph{Nonbinary bound sandwich.}
Outside the binary case, an exact closed form for \(H_m(P\|Q)\) is generally
unavailable. To evaluate Algorithm~\ref{alg:Hstar-refined} in this regime, we
compare its monotone estimate against two theoretical benchmarks: the
independent-coupling upper bound from Theorem~\ref{thm:homogeneous-exp} and
the converse lower bound from Theorem~\ref{thm:converse-lb}. This gives a
``sandwich'' view of the unknown optimum. The estimate is not certified to be
optimal, but any substantial gap between the estimate and the product upper
bound indicates that the algorithm is finding dependent couplings that improve over independent sampling.

We use a five-symbol example to test an asymmetric setting in Fig.~\ref{fig:nonbinary}. In one direction,
\(P\) is uniform and \(Q\) is skewed; in the other, the roles are reversed. This swap is useful because \(H_m(P\|Q)\) is not symmetric in its arguments:
the problem of representing a uniform target using skewed marginals is
different from representing a skewed target using uniform-like marginals. The
comparison therefore tests whether the algorithm behaves sensibly under
different source--observation geometries, rather than only on a single
favorable instance.

The results show that the monotone greedy estimate remains within the
theoretical lower--upper envelope and often lies well below the
independent-coupling upper bound. This is the desired behavior: the product
bound captures what can be achieved without coordinating the \(Y_i\)'s, while
the greedy algorithm searches over dependent couplings. Thus, the nonbinary
sandwich plots provide evidence that the algorithm exploits the additional
degrees of freedom created by dependence, even when the exact optimum is
unknown.

\begin{figure}
    \centering
    \includegraphics[width=0.49\linewidth]{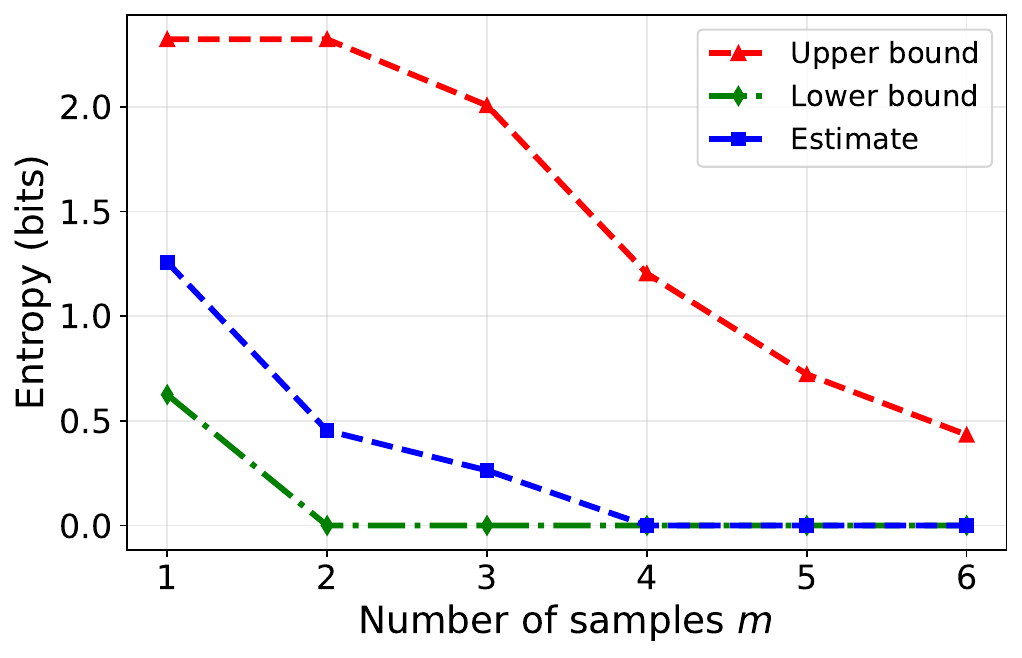}
    ~
    \includegraphics[width=0.49\linewidth]{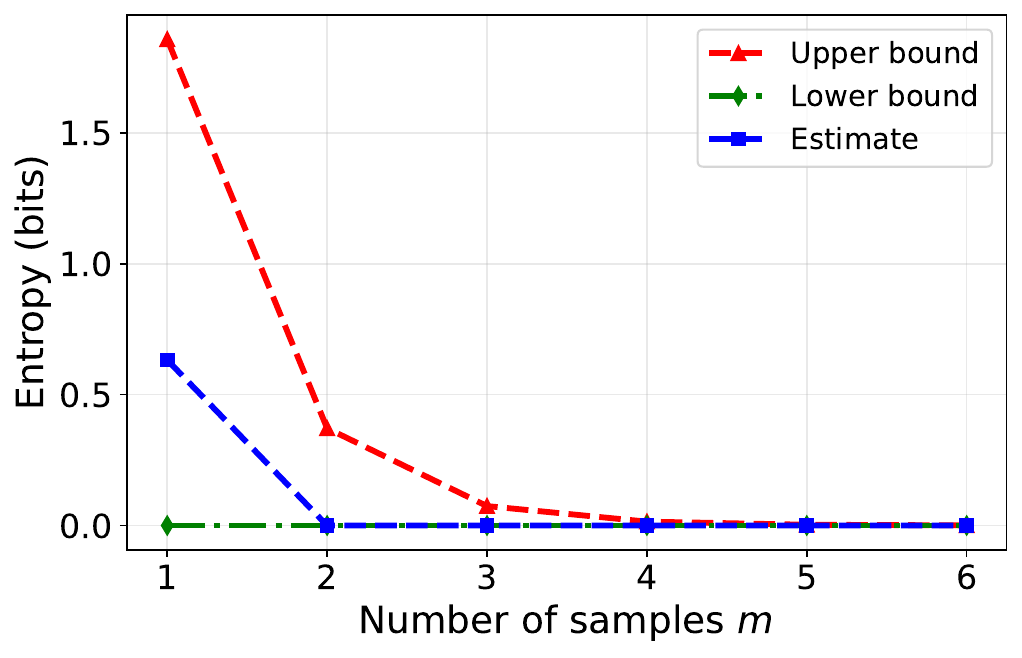}
    \caption{Nonbinary bound sandwich under asymmetric marginals. We plot the monotone
greedy estimate of \(H_m(P\|Q)\) together with the independent-sampling upper
bound and the converse lower bound. The two panels use the same pair of
distributions in opposite directions: (left) \(P=\Unif([5])\) versus a skewed \(Q\) ($Q = [0.60, 0.18, 0.10, 0.08, 0.04]$), and then the swapped instance in the right panel. The estimate consistently stays within the
theoretical sandwich and improves sharply over the product-coupling upper
bound, illustrating that the algorithm finds low-entropy dependent couplings
even beyond the binary setting. Also, note that according to Proposition~\ref{prop:uniform}, the true value for $H_m(P\|Q) = 0$ in the right panel and the algorithm captures that in one iteration.  }
    \label{fig:nonbinary}
\end{figure}

\paragraph{Initialization matters.} Fig.~\ref{fig:initialization1}  compares the final monotone estimates obtained from different initializations as \(m\) varies. The product initialization
performs poorly because it makes the posterior \(P_{X| Y^m=s}\) identical
for all \(s\), yielding a degenerate surrogate and no informative direction
toward a dependent coupling. In contrast, the list coupling aligns \(X\) with
the observation tuple by maximizing \(\Pr[X\in\{Y_1,\ldots,Y_m\}]\), producing a much more useful active support for entropy minimization. The perturbed list initialization preserves this dependence while adding a small amount of product support, which can prevent the active set from becoming overly sparse.

Across both nonbinary examples, the list-based curves remain far below the
product baseline. This shows that the greedy procedure is not merely exploiting
the number of samples \(m\); it relies on an initialization whose support
already reflects useful dependence among \(X,Y_1,\ldots,Y_m\). The figure
therefore reinforces the main algorithmic message: structured dependent
support is essential for finding low-entropy couplings.
\begin{figure}
    \centering
    \includegraphics[width=0.5\linewidth]{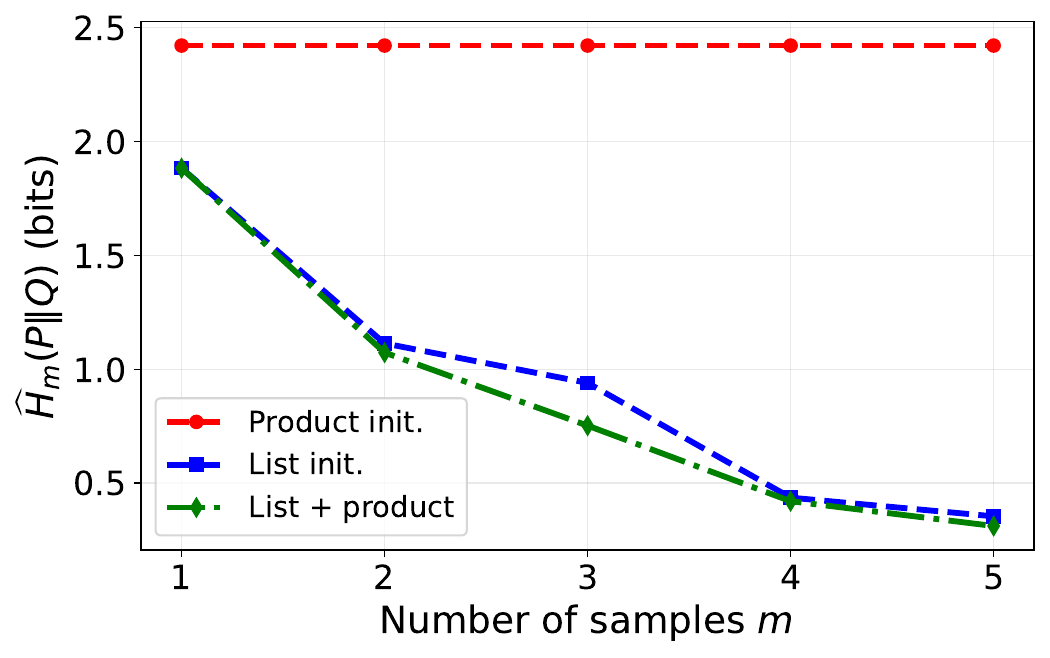}~
    \includegraphics[width=0.5\linewidth]{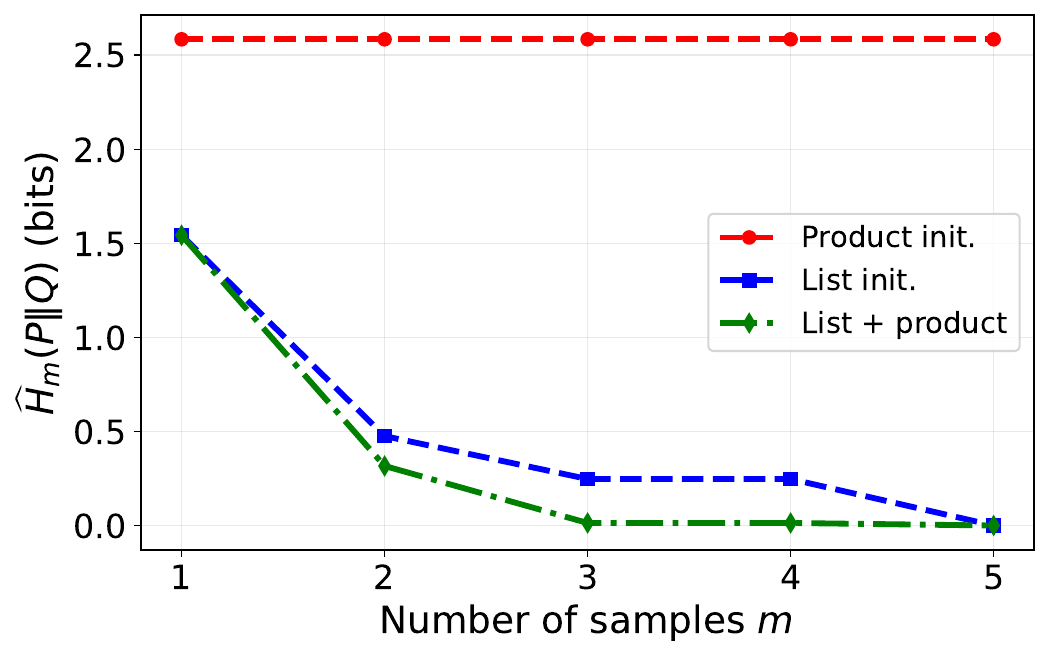}    
    \caption{Initialization matters. For two nonbinary instances, we compare
Algorithm~\ref{alg:Hstar-refined} initialized with the product coupling
\(P\otimes Q^{\otimes m}\), the optimal list coupling
\eqref{eq:listcoupling}, and the perturbed list coupling
\eqref{eq:listcouplingMixture} with $\eta= 10^{-7}$. Left:
\(P=(0.30,0.25,0.22,0.10,0.05,0.04,0.04)\) and
\(Q=(0.80,0.12,0.02,0.05,0.005,0.005,0)\). Right:
\(P=\Unif[6]\) and
\(Q=(0.60,0.25,0.12,0.010,0.005,0.005,0.01)\).
Across both nonbinary instances, list-based initializations drive the estimate
of \(H_m(P\|Q)\) far below the product baseline, showing that structured
dependent support is crucial for the support-restricted greedy procedure.    }
    \label{fig:initialization1}
\end{figure}

\section{Proofs of Section~\ref{sec:hetero}}\label{app:secHetero}
To prove Theorem~\ref{thm:equalZero-hetero}, we first need the following two lemmas. 
\begin{lemma}[One-block lifting]
\label{lem:one-block-lifting}
Let \(Q_1,\dots,Q_\ell\in\Delta([r])\), and define
\(R:=Q_1 * Q_2 * \cdots * Q_\ell.\)
Let \((X,Z)\) be any pair of random variables such that \(X\sim P\) and \(Z\sim R\).
Then there exist random variables \(Y_1,\dots,Y_\ell\) such that
\[Y_i\sim Q_i~~~ \forall i\in[\ell],\qquad~~ Z=Y_1+\cdots+Y_\ell \pmod r,\]
and consequently
\(H(X|Y_1,\dots,Y_\ell)\le H(X|Z).\)
\end{lemma}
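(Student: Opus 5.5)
The plan is to build the lift by conditioning on the block sum, exactly as in the lifting steps of the proofs of Theorem~\ref{thm:equalZero} and Proposition~\ref{prop:dependent-exp}. Let $V_1,\dots,V_\ell$ be independent with $V_i\sim Q_i$, so that $V_1+\cdots+V_\ell \pmod r$ has law $R$. For each $z\in\supp(R)$, let $\alpha_z$ denote the conditional law of $(V_1,\dots,V_\ell)$ given $V_1+\cdots+V_\ell\equiv z\pmod r$. This conditional law is well defined because $R(z)>0$. Since $Z\sim R$, almost surely $Z\in\supp(R)$.

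Given the pair $(X,Z)$, we enlarge the probability space and, conditionally on $(X,Z)=(x,z)$, draw $(Y_1,\dots,Y_\ell)\sim\alpha_z$. The draw depends only on $z$, so $X - Z - Y^\ell$ forms a Markov chain. Two properties then need checking.

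\begin{enumerate}
\item \emph{Sum constraint.} Every tuple in $\supp(\alpha_z)$ sums to $z$ modulo $r$. Hence $Z=Y_1+\cdots+Y_\ell\pmod r$ holds almost surely.
\item \emph{Marginals.} Averaging over $Z$ gives
\[\Pr(Y^\ell=\cdot)=\sum_z R(z)\alpha_z=Q_1\otimes\cdots\otimes Q_\ell\]
by the law of total probability. In particular $Y_i\sim Q_i$ for each $i$. Note that we even obtain the full product law on the block, which is stronger than what is needed.
\end{enumerate}

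For the entropy inequality, $Z$ is a deterministic function of $Y^\ell$. Since conditioning on more information cannot increase entropy,
\[H(X|Y_1,\dots,Y_\ell)=H(X|Y^\ell,Z)\le H(X|Z).\]
In fact equality holds, because the Markov chain gives $H(X|Y^\ell,Z)=H(X|Z)$, but only the inequality is needed.

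There is no real obstacle here; the one point requiring care is measure-theoretic bookkeeping. One must ensure that the conditional laws $\alpha_z$ are defined on the support of $R$, and that the joint law of $(X,Y^\ell)$ is constructed via the kernel $z\mapsto\alpha_z$ rather than by asserting existence abstractly. Both are immediate on finite alphabets.
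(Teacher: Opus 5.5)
Your proposal is correct and matches the paper's proof: conditionally on $Z=z$, draw the block from the product law $Q_1\otimes\cdots\otimes Q_\ell$ conditioned on the sum being $z$. Averaging over $R$ recovers the product law, and the inequality follows because $Z$ is a deterministic function of $Y^\ell$. Your extra remark that the Markov chain $X - Z - Y^\ell$ actually gives equality is also correct, though the lemma does not need it.
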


\begin{proof}[Proof of Lemma~\ref{lem:one-block-lifting}]
Let
\[
\mu(y_1,\dots,y_\ell)\coloneqq \bigotimes_{i=1}^\ell Q_i(y_i),
\qquad (y_1,\dots,y_\ell)\in[r]^\ell,
\]
and define the sum map
\[
\sigma(y_1,\dots,y_\ell):=y_1+\cdots+y_\ell \pmod r.
\]
By definition of convolution, the pushforward of \(\mu\) under \(\sigma\) is exactly \(R\).

Since the spaces are finite, for each \(z\in[r]\) with \(R(z)>0\), we may define the conditional law
\[
K_z(y_1,\dots,y_\ell)
:=
\frac{\mu(y_1,\dots,y_\ell)\mathbf 1\{\sigma(y_1,\dots,y_\ell)=z\}}{R(z)}.
\]
For \(z\) with \(R(z)=0\), define \(K_z\) arbitrarily. Then \(K_z\) is supported on
\[
\{(y_1,\dots,y_\ell)\in[r]^\ell:\ \sigma(y_1,\dots,y_\ell)=z\}.
\]

Now, conditionally on \(Z=z\), sample \((Y_1,\dots,Y_\ell)\sim K_z\). By construction,
\(Z=Y_1+\cdots+Y_\ell \pmod r.\)
Moreover, the unconditional law of \((Y_1,\dots,Y_\ell)\) is
\[
\sum_{z\in[r]} \Pr[Z=z]\,K_z
=
\sum_{z\in[r]} R(z)\,K_z
=
\mu,
\]
so each marginal satisfies \(Y_i\sim Q_i\). 
Finally, since \(Z\) is a deterministic function of \((Y_1,\dots,Y_\ell)\), we have the data-processing inequality
\[
H(X| Y_1,\dots,Y_\ell)\le H(X| Z).
\]
\end{proof}

\begin{lemma}[Block reduction]
\label{lem:block-reduction}
Let \(P\in\Delta([r])\), let \(Q_1,\dots,Q_{n\ell}\in\Delta([r])\), and for each \(t\in[n]\), define
\[
R_t:=Q_{(t-1)\ell+1} * Q_{(t-1)\ell+2} * \cdots * Q_{t\ell}.
\]
Then
\[
H(P\|Q_1,\dots,Q_{n\ell})
\le
H(P\|R_1,\dots,R_n).
\]
\end{lemma}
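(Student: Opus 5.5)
The plan is to start from an arbitrary coupling at the block level and lift it, block by block, to a coupling of the original coordinates using the construction of Lemma~\ref{lem:one-block-lifting}. Since $H(P\|Q_1,\dots,Q_{n\ell})$ is an infimum over admissible couplings, it suffices to show the following. For every coupling $(X,Z_1,\dots,Z_n)$ with $X\sim P$ and $Z_t\sim R_t$, there is a coupling $(X,Y_1,\dots,Y_{n\ell})$ with $X\sim P$, $Y_i\sim Q_i$, and $H(X|Y^{n\ell})\le H(X|Z^n)$. Taking the infimum over block-level couplings then gives the claim.

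The construction is as follows. For each $t\in[n]$, let $\mu_t:=\bigotimes_{i\in B_t}Q_i$ on $[r]^\ell$, where $B_t=\{(t-1)\ell+1,\dots,t\ell\}$, and let $\sigma$ be the mod-$r$ sum map. By definition of convolution, $\sigma_\#\mu_t=R_t$. For each $z$ with $R_t(z)>0$, let $K_{t,z}$ be the conditional law of $\mu_t$ given $\sigma=z$, exactly as in the proof of Lemma~\ref{lem:one-block-lifting}. Given a realization $(X,Z_1,\dots,Z_n)=(x,z_1,\dots,z_n)$, we sample the blocks $(Y_i)_{i\in B_t}\sim K_{t,z_t}$ independently across $t$ and independently of $X$. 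In other words, $X - Z^n - Y^{n\ell}$ is a Markov chain, and the blocks are conditionally independent given $Z^n$.

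Two facts then need checking.

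\emph{Marginals.} Fix $t$. The marginal law of block $t$ is $\sum_z \Pr(Z_t=z)K_{t,z}=\sum_z R_t(z)K_{t,z}=\mu_t$. This uses only the marginal of $Z_t$, not the dependence among the $Z_t$'s, so each $Y_i\sim Q_i$. A minor care point: the kernels $K_{t,z}$ for $z$ with $R_t(z)=0$ are never used, because such $z$ occur with probability zero.

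\emph{Entropy.} By construction, $Z_t=\sigma((Y_i)_{i\in B_t})$ almost surely, so $Z^n$ is a deterministic function of $Y^{n\ell}$. Conditioning on more reduces entropy, which gives $H(X|Y^{n\ell})\le H(X|Z^n)$. The Markov structure even gives equality, but the inequality is all we need.

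I expect no real obstacle here. The only point requiring attention is that the lifting in Lemma~\ref{lem:one-block-lifting} is stated for a single block paired with $X$. Applying it sequentially, treating $(X,Z_{-t})$ as the ``$X$'' at step $t$, or directly as above, must preserve the joint law of $(X,Z^n)$ while fixing all block marginals. The conditional-independence construction does exactly this.
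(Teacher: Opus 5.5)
Your proposal is correct and follows essentially the same route as the paper's proof. You lift an arbitrary block-level coupling by sampling each block from the conditional product law given its mod-$r$ sum, independently across blocks given $Z^n$; then you use that $Z^n$ is a deterministic function of $Y^{n\ell}$ and take the infimum. Your explicit marginal check and the remark about null values of $z$ simply spell out what the paper leaves implicit in its appeal to Lemma~\ref{lem:one-block-lifting}.
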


\begin{proof}[Proof of Lemma~\ref{lem:block-reduction}]
Fix any coupling \((X,Z_1,\dots,Z_n)\) such that $X\sim P$ and  $Z_t\sim R_t$ for all $t\in[n]$.
For each block \(t\), apply Lemma~\ref{lem:one-block-lifting} conditionally on \(Z_t\): there exist random variables \(Y_{(t-1)\ell+1},\dots,Y_{t\ell}\)
such that \(Y_i\sim Q_i\) for $i\in\{(t-1)\ell+1,\dots,t\ell\}$
and
\[Z_t=Y_{(t-1)\ell+1}+\cdots+Y_{t\ell}\pmod r.\]
Perform this construction independently across \(t\), conditional on \((Z_1,\dots,Z_n)\). Then the resulting joint law satisfies
\(Y_i\sim Q_i,\) for $i\in[n\ell]$.
and each \(Z_t\) is a deterministic function of the corresponding block
\((Y_{(t-1)\ell+1},\dots,Y_{t\ell}).\)
Hence \((Z_1,\dots,Z_n)\) is a deterministic function of \((Y_1,\dots,Y_{n\ell})\), and therefore
\[
H(X| Y_1,\dots,Y_{n\ell})\le H(X| Z_1,\dots,Z_n).
\]
Since the coupling of \((X,Z_1,\dots,Z_n)\) was arbitrary, taking the infimum over all such couplings yields
\[
H(P\|Q_1,\dots,Q_{n\ell})
\le
H(P\|R_1,\dots,R_n).
\]
\end{proof}

\begin{proof}[Proof of Theorem~\ref{thm:equalZero-hetero}]
Let
\[
N_0
:=
2\left\lceil
\frac{\log(8/p_{\min})}{-\log(1-r\widetilde q_{\min})}
\right\rceil.
\]
By Theorem~\ref{thm:equalZero} applied to the homogeneous pair \((P,\widetilde Q)\) with \(\ell=1\), we have
\(H_n(P\|\widetilde Q)=0\) for all $n\ge N_0$. Now fix any \(m\ge \ell N_0\), and let \[n:=\left\lfloor \frac{m}{\ell}\right\rfloor.\]
Then \(n\ge N_0\). By assumption,
\[
Q_{(t-1)\ell+1} * \cdots * Q_{t\ell}
=
\widetilde Q
\qquad \forall t\in[n].
\]
Hence, applying Lemma~\ref{lem:block-reduction} to the first \(n\ell\) coordinates,
\[
H(P\|Q_1,\dots,Q_{n\ell})
\le
H(P\|\underbrace{\widetilde Q,\dots,\widetilde Q}_{n\text{ times}})
=
H_n(P\|\widetilde Q)
=
0.
\]
Therefore, \(H(P\|Q_1,\dots,Q_{n\ell})=0.\)
So there exists a feasible coupling of
\((X,Y_1,\dots,Y_{n\ell})\)
with $X\sim P$ and $Y_i\sim Q_i$ for all $i\in[n\ell]$, 
such that
\(H(X| Y_1,\dots,Y_{n\ell})=0,\)
that is, \(X\) is a deterministic function of \((Y_1,\dots,Y_{n\ell})\).
If \(m=n\ell\), we are done. If \(m>n\ell\), append additional random variables
\(Y_{n\ell+1},\dots,Y_m\)
with marginals \(Q_{n\ell+1},\dots,Q_m\), independently of everything else. Then the resulting coupling is feasible for
\(H(P\|Q_1,\dots,Q_m),\)
and \(X\) is still a deterministic function of the full tuple \((Y_1,\dots,Y_m)\), since it already depends only on the first \(n\ell\) coordinates. Hence
\(H(P\|Q_1,\dots,Q_m)=0.\)
Thus, for every \(m\ge \ell N_0\),
\(H(P\|Q_1,\dots,Q_m)=0.\)
Therefore we may take
\[
m_0(P\|Q_1, \dots, Q_m)\le \ell N_0
=
2\ell\left\lceil
\frac{\log(8/p_{\min})}{-\log(1-r\widetilde q_{\min})}
\right\rceil.
\]
\end{proof}

\begin{corollary}\label{cor:equalZero-hetero-arbitrary-support}
Suppose that there exists \(\ell\ge 1\) and a distribution \(\widetilde Q\in\Delta([r])\) such that
\[
Q_{(t-1)\ell+1} * Q_{(t-1)\ell+2} * \cdots * Q_{t\ell}
=
\widetilde Q
\qquad \forall t\ge 1,
\]
and \(\widetilde Q(x)>0\) for all \(x\in[r]\). Let \(P\in\Delta([r])\) have support \(S:=\supp(P)\), and define
\(p_{\min}:=\min_{x\in S} P(x)>0\) and \(\widetilde q_{\min}:=\min_{x\in[r]} \widetilde Q(x).\)
Then the same conclusion holds for arbitrary support: there exists \(m_0=m_0(P\|Q_1, \dots, Q_m)\) such that
\[
H(P\|Q_1,\dots,Q_m)=0
\qquad \forall m\ge m_0,
\]
and one may take
\[
m_0
\le
2\ell\left\lceil
\frac{\log\!\bigl(16(r-|S|+1)/p_{\min}\bigr)}
{-\log(1-r\widetilde q_{\min})}
\right\rceil.
\]
\end{corollary}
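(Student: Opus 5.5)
The plan is to mirror the proof of Corollary~\ref{cor:equalZero-arbitrary-support}: reduce the arbitrary-support case to the full-support case of Theorem~\ref{thm:equalZero-hetero} by lifting $P$ to a strictly positive law and pushing forward through a deterministic map. Let $s:=|S|$ and $t:=r-s$. Fix $x_0\in S$ and set $\zeta:=p_{\min}/(2(t+1))$. Define $\widetilde P$ by
\[
\widetilde P(x)=\zeta \text{ for } x\notin S,\qquad \widetilde P(x)=P(x) \text{ for } x\in S\setminus\{x_0\},\qquad \widetilde P(x_0)=P(x_0)-t\zeta.
\]
Since $t\zeta<p_{\min}\le P(x_0)$, $\widetilde P$ is a fully supported probability distribution. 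Its smallest atom satisfies $\widetilde p_{\min}\ge \zeta = p_{\min}/(2(r-|S|+1))$; the atom at $x_0$ is at least $p_{\min}-t\zeta\ge p_{\min}/2\ge\zeta$. Let $g$ be the identity on $S$ and send $[r]\setminus S$ to $x_0$. Exactly as before, $g_{\#}\widetilde P=P$.

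Next apply Theorem~\ref{thm:equalZero-hetero} to $(\widetilde P, Q_1,\dots,Q_m)$. The block-convolution hypothesis holds for every block by assumption, with the same $\widetilde Q$ and the same $\widetilde q_{\min}$. For any $m$ with
\[
m\ge 2\ell\left\lceil \frac{\log(8/\widetilde p_{\min})}{-\log(1-r\widetilde q_{\min})}\right\rceil,
\]
the theorem gives $H(\widetilde P\|Q_1,\dots,Q_m)=0$. Divisibility of $m$ by $\ell$ is not a real restriction: as noted after that theorem and in its proof, one uses $\lfloor m/\ell\rfloor$ complete blocks and appends the remaining coordinates independently. Since $\log(8/\widetilde p_{\min})\le \log(16(r-|S|+1)/p_{\min})$, the threshold stated in the corollary is sufficient.

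Finally, take a zero-entropy coupling $(\widetilde X, Y_1,\dots,Y_m)$ with $\widetilde X=f(Y^m)$ a.s., and set $X:=g(f(Y^m))$. Then $X\sim P$, each $Y_i\sim Q_i$, and $X$ is a deterministic function of $Y^m$, so $H(P\|Q_1,\dots,Q_m)=0$. There is no genuine obstacle here. The only point needing care is checking that $\widetilde p_{\min}$ is bounded below by $\zeta$ at every atom, including $x_0$, so that the logarithmic factor matches the stated bound.
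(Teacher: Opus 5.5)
Your proposal is correct and follows the paper's route. The paper omits this proof, saying only that it proceeds exactly as for Corollary~\ref{cor:equalZero-arbitrary-support}: lift $P$ to the fully supported $\widetilde P$ with $\widetilde p_{\min}\ge p_{\min}/(2(r-|S|+1))$, apply Theorem~\ref{thm:equalZero-hetero} to $\widetilde P$, and push forward through $g$. Your argument carries out exactly these steps, including the check, which the paper leaves implicit, that the atom at $x_0$ is at least $p_{\min}/2\ge\zeta$.
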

The proof proceed exactly similar to the proof of Corollary~\ref{cor:equalZero-arbitrary-support} and thus is omitted. 
We end this section by a remark about the assumption in  Theorem~\ref{thm:equalZero-hetero} on $m/\ell$ being integer.
\begin{remark}
The assumption $\ell|m$ is made only for notational simplicity. In the general case, let $k=\lfloor m/\ell\rfloor$ and apply the construction to the first $k\ell$ coordinates, provided the block-convolution condition holds for $t\in [k]$ and
$$k\ge2\left\lceil\frac{\log(8/p_{\min})}{-\log(1-r\widetilde q_{\min})}\right\rceil.$$
Let $\pi_0$ denote the resulting coupling of $(X,Y_1,\dots,Y_{k\ell})$, under which $X$ is a deterministic function of $(Y_1,\dots,Y_{k\ell})$. The remaining coordinates $Y_{k\ell+1},\dots,Y_m$ are then appended independently according to their prescribed marginals $Q_{k\ell+1},\dots,Q_m$. 
%$$\pi(dx,dy_1,\dots,dy_m)=\pi_0(dx,dy_1,\dots,dy_{k\ell})\prod_{i=k\ell+1}^m Q_i(dy_i).$$
This preserves all marginal constraints. Moreover, since $X$ depends only on the first $k\ell$ coordinates, it remains a deterministic function of the full tuple $Y^m$, and hence $H(P\|Q_1,\dots,Q_m)=0$.
\end{remark}

\section{A Lower Bound on $H(P\|Q_1,\dots,Q_m)$}

\begin{theorem}\label{thm:converse-lb}
We have 
\[
H(P\|Q_1,\dots,Q_m)
\ge
\max\Biggl\{
\Bigl[H(P)-\sum_{i=1}^m H(Q_i)\Bigr]_+,\,
-\log \Bigl(\sum_{j=1}^{K} p_{(j)}\Bigr)
\Biggr\},
\]
where \(K:=\prod_{i=1}^m |\supp(Q_i)|,\) and $p_{(1)}\ge p_{(2)}\ge \cdots$. 
%In particular, in the homogeneous case,
% \[
% H_m(P\|Q)
% :=
% H(P\|Q,\dots,Q)
% \ge
% \max\Biggl\{
% \bigl[H(P)-mH(Q)\bigr]_+,\,
% -\log \Bigl(\sum_{j=1}^{|\supp(Q)|^m} p_{(j)}\Bigr)
% \Biggr\}.
% \]
\end{theorem}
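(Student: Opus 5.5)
Both bounds hold for every admissible coupling $(X,Y_1,\dots,Y_m)$ with $X\sim P$ and $Y_i\sim Q_i$. I will prove each one separately for an arbitrary such coupling and then take the infimum. Since the right-hand side is a maximum of two quantities, it suffices to show that $H(X|Y^m)$ dominates each of them.

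\emph{Entropy term.} Write $H(X|Y^m)=H(X)-I(X;Y^m)\ge H(P)-H(Y^m)$. By subadditivity of joint entropy, $H(Y^m)\le\sum_{i=1}^m H(Y_i)=\sum_i H(Q_i)$, and this uses only the marginals, so arbitrary dependence among the $Y_i$ is allowed. Nonnegativity of conditional entropy then gives the positive part $[\,\cdot\,]_+$.

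\emph{Support term.} The vector $Y^m$ almost surely takes values in $\prod_i\supp(Q_i)$, a set of size at most $K$. Let $g(y)$ be a MAP decoder and define $A=\{g(y):y\in\prod_i\supp(Q_i)\}$, a set of at most $K$ symbols. I will use the bound
\[
H(X|Y^m)\ge H_\infty(X|Y^m)=-\log\sum_{y}\max_x \pi(x,y).
\]
Here $H_\infty(X|Y^m)$ is the average-case min-entropy, and the inequality holds because Shannon conditional entropy dominates it (by Jensen, since $-\log$ is convex). Next I bound
\[
\sum_y\max_x\pi(x,y)=\sum_y \pi(g(y),y)\le\sum_y \pi(X\in A,\,Y^m=y)=P(A)\le\sum_{j=1}^K p_{(j)}.
\]
This gives $H(X|Y^m)\ge-\log\sum_{j\le K}p_{(j)}$.

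The only step that needs care is the min-entropy comparison. For it I would write $H(X|Y^m)=\sum_y a_y H(u_y)$, apply $H(u_y)\ge-\log\max_x u_y(x)$ pointwise, and then use Jensen on $-\log$ over the weights $a_y$. The remaining steps are routine.
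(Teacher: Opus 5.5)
Your proposal is correct and follows essentially the same route as the paper: subadditivity of $H(Y^m)$ gives the entropy-budget term, and for the support term you use the same chain of pointwise min-entropy comparison, then Jensen, then bounding the MAP success probability by the $P$-mass of the at most $K$ decoded symbols. Your step $\sum_y \pi(g(y),y)\le \sum_y \Pr(X\in A, Y^m=y)=P(A)$ is in fact cleaner than the paper's intermediate bound $\sum_{y^m}\Pr(X=\hat x(y^m))$, which overcounts when the MAP decoder $\hat x$ is not injective, so keep your version.
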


\begin{proof}
Fix any admissible coupling $(X,Y_1,\dots,Y_m)$, and write $Y^m:=(Y_1,\dots,Y_m)$.
We first prove the entropy-budget lower bound. Since $H(X| Y^m)=H(P)-I(X;Y^m)$ and $I(X;Y^m)\le H(Y^m)$, we obtain
$H(X| Y^m)\ge H(P)-H(Y^m)$. By subadditivity,
$H(Y^m)\le \sum_{i=1}^m H(Y_i)=\sum_{i=1}^m H(Q_i)$, hence
\[
H(X| Y^m)\ge H(P)-\sum_{i=1}^m H(Q_i).
\]
Since conditional entropy is always nonnegative, this yields
\[
H(X| Y^m)\ge \Bigl[H(P)-\sum_{i=1}^m H(Q_i)\Bigr]_+.
\]
Taking the infimum over all admissible couplings gives
\[
H(P\|Q_1,\dots,Q_m)\ge \Bigl[H(P)-\sum_{i=1}^m H(Q_i)\Bigr]_+.
\]

We now prove the second lower bound. Since each $Y_i$ takes values in $\supp(Q_i)$, the joint variable $Y^m$ takes at most $K=\prod_{i=1}^m |\supp(Q_i)|$ distinct values. Let
\[
\kappa_{\mathsf{MAP}}(X| Y^m)\coloneqq 
\sum_{y^m} \max_x ~\Pr(X=x,Y^m=y^m)
\]
denote the optimal MAP success probability for estimating $X$ from $Y^m$. For each $y^m$, choose one maximizer $\hat x(y^m)\in\arg\max_x \Pr(X=x,Y^m=y^m)$. Then
$\kappa_{\mathsf{MAP}}(X| Y^m)=\sum_{y^m}\Pr(X=\hat x(y^m),Y^m=y^m)$, and therefore
\[
\kappa_{\mathsf{MAP}}(X| Y^m)\le \sum_{y^m}\Pr(X=\hat x(y^m)).
\]
The set $\{\hat x(y^m):y^m\in\supp(Y^m)\}$ contains at most $K$ elements, so the last sum is at most the sum of the $K$ largest atoms of $P$, namely
\begin{equation}\label{MAPUB1}
    \kappa_{\mathsf{MAP}}(X| Y^m)\le \sum_{j=1}^{K} p_{(j)}.
\end{equation}

Since, for each realization $y^m$, the Shannon entropy dominates the min-entropy, we can write 
\begin{align*}
    H(X| Y^m)&\ge \bE\left[-\log \max_x \Pr(X=x| Y^m)\right]\\
    & \ge -\log\Bigl(\sum_{y^m}\max_x \Pr(X=x,Y^m=y^m)\Bigr)\\
    & = -\log \kappa_{\mathsf{MAP}}(X| Y^m)\\
\end{align*}
where the second step is due to the Jensen's inequality. Combining this with \eqref{MAPUB1}  we obtain
\[
H(X| Y^m)\ge -\log\Bigl(\sum_{j=1}^{K} p_{(j)}\Bigr).
\]
Taking the infimum over all admissible couplings, we conclude that
\[
H(P\|Q_1,\dots,Q_m)\ge -\log\Bigl(\sum_{j=1}^{K} p_{(j)}\Bigr).
\]
\end{proof}

\begin{corollary}[Necessary conditions for zero conditional entropy]\label{cor:necessary-zero}
If $H(P\|Q_1,\dots,Q_m)=0$, then necessarily
\[
H(P)\le \sum_{i=1}^m H(Q_i)
\qquad\text{and}\qquad
|\supp(P)|\le \prod_{i=1}^m |\supp(Q_i)|.
\]
% In the homogeneous case, if $H_m(P\|Q)=0$, then necessarily
% \[
% H(P)\le mH(Q)
% \qquad\text{and}\qquad
% |\supp(P)|\le |\supp(Q)|^m.
% \]
\end{corollary}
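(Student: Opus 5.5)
This corollary follows directly from Theorem~\ref{thm:converse-lb}, by reading each of its two lower bounds at the value $H(P\|Q_1,\dots,Q_m)=0$. Assume $H(P\|Q_1,\dots,Q_m)=0$. The right-hand side of Theorem~\ref{thm:converse-lb} is a maximum of two quantities, and it is bounded above by $0$. Hence each quantity is at most $0$.

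For the first condition, we have $\bigl[H(P)-\sum_{i=1}^m H(Q_i)\bigr]_+\le 0$. A positive part that is at most $0$ must equal $0$, so $H(P)-\sum_i H(Q_i)\le 0$. This is the entropy-budget condition.

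For the second condition, we have $-\log\bigl(\sum_{j=1}^K p_{(j)}\bigr)\le 0$, which means $\sum_{j=1}^K p_{(j)}\ge 1$. The sum of the $K$ largest atoms of $P$ is always at most $1$, so it equals exactly $1$. Since the atoms $p_{(j)}$ are sorted in decreasing order, this says every atom of positive mass appears among the first $K$ indices. Equivalently, $p_{(K+1)}=0$, interpreting $p_{(j)}=0$ when $j$ exceeds the alphabet size. Therefore $|\supp(P)|\le K=\prod_i|\supp(Q_i)|$.

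One subtlety deserves care. Both inequalities in Theorem~\ref{thm:converse-lb} are stated for the infimum $H(P\|Q_1,\dots,Q_m)$, not for an attaining coupling. So the argument uses only the inequality "lower bound $\le$ infimum $=0$", and it never needs the infimum to be attained. (Attainment does in fact hold, by compactness of the finite coupling polytope and continuity of conditional entropy, but the argument does not depend on it.) There is no real obstacle here; the only point requiring attention is the translation of $\sum_{j\le K}p_{(j)}=1$ into the support-size statement. That step uses only the facts that the atoms are nonnegative and sum to one.
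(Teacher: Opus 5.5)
Your proposal is correct and matches the paper's proof: both apply Theorem~\ref{thm:converse-lb} at the value $0$. The positive-part term gives $H(P)\le\sum_i H(Q_i)$, and the second term forces $\sum_{j\le K}p_{(j)}=1$, hence $|\supp(P)|\le K=\prod_i|\supp(Q_i)|$. Your remarks on non-attainment of the infimum and on translating the sorted-atom sum into a support bound are accurate, but they add nothing essential beyond the paper's argument.
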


\begin{proof}
If $H(P\|Q_1,\dots,Q_m)=0$, then the lower bound in Theorem~\ref{thm:converse-lb} must vanish. Hence both terms inside the maximum are equal to zero. The first gives
$H(P)\le \sum_{i=1}^m H(Q_i)$. The second gives
$-\log(\sum_{j=1}^{K} p_{(j)})=0$, that is, $\sum_{j=1}^{K} p_{(j)}=1$ with $K = \prod_{i=1}^m |\supp(Q_i)|$, implying that all the mass of $P$ is concentrated on at most $K$ atoms, i.e. $|\supp(P)|\le K$. 
\end{proof}

\end{document}